\definecolor{myblue}{rgb}{0.2,0,0.9}
\definecolor{blue-violet}{rgb}{0.54, 0.17, 0.89}
\setlist[enumerate]{label={\upshape(\roman*)}}
\pgfplotsset{compat=1.12}
\def\e{\epsilon}
\definecolor{myblue}{rgb}{0.2,0,0.9}
\definecolor{blue_violet}{rgb}{0.54, 0.17, 0.89}
\definecolor{darkgreen}{rgb}{0,0.35,0}
\newcommand{\compconj}[1]{%
\overline{#1}%
}
\DeclareRobustCommand*\cal{\@fontswitch\relax\mathcal}
\newcommand{\labeltext}[3][]{%
\@bsphack%
\csname phantomsection\endcsname% in case hyperref is used
\def\tst{#1}%
\def\labelmarkup{\emph}% How to markup the label itself
\def\refmarkup{}%
\ifx\tst\empty\def\@currentlabel{\refmarkup{#2}}{\label{#3}}%
\else\def\@currentlabel{\refmarkup{#1}}{\label{#3}}\fi%
\@esphack%
\labelmarkup{#2}% visible printed text.
}
\newtheorem{theorem}{Theorem}[section]
\newtheorem{proposition}[theorem]{Proposition}
\newtheorem{lemma}[theorem]{Lemma}
\numberwithin{equation}{section}
\theoremstyle{definition}
\newtheorem{remark}[theorem]{Remark}
\RenewDocumentCommand{\title}{om}{%
\IfNoValueTF{#1}
{\gdef\shorttitle{}}
{\gdef\shorttitle{#1}}%
\gdef\@title{#2}%
}
\def\@tocline#1#2#3#4#5#6#7{\relax
\ifnum #1>\c@tocdepth % then omit
\else
\par \addpenalty\@secpenalty\addvspace{#2}%
\begingroup \hyphenpenalty\@M
\@ifempty{#4}{%
	\@tempdima\csname r@tocindent\number#1\endcsname\relax
}{%
	\@tempdima#4\relax
}%
\parindent\z@ \leftskip#3\relax \advance\leftskip\@tempdima\relax
\rightskip\@pnumwidth plus4em \parfillskip-\@pnumwidth
#5\leavevmode\hskip-\@tempdima
\ifcase #1
\or\or \hskip 2em \or \hskip 2em \else \hskip 3em \fi%
#6\nobreak\relax
\hfill\hbox to\@pnumwidth{\@tocpagenum{#7}}\par% <---- \dotfill -> \hfill
\nobreak
\endgroup
\fi}
\title{Approximation rates of quantum neural networks\\for periodic functions via Jackson's inequality}
\author{Ariel Neufeld}
\address{Nanyang Technological University, Singapore, ariel.neufeld@ntu.edu.sg}
\author{Philipp Schmocker}
\address{ETH Zurich, Switzerland, philipp.schmocker@math.ethz.ch}
\author{Viet Khoa Tran}
\address{Nanyang Technological University, Singapore, vietkhoa001@e.ntu.edu.sg}
\begin{document}

\begin{abstract}
	Quantum neural networks (QNNs) are an analog of classical neural networks in the world of quantum computing, which are represented by a unitary matrix with trainable parameters. Inspired by the universal approximation property of classical neural networks, ensuring that every continuous function can be arbitrarily well approximated uniformly on a compact set of a Euclidean space, some recent works have established analogous results for QNNs, ranging from single-qubit to multi-qubit QNNs, and even hybrid classical-quantum models. In this paper, we study the approximation capabilities of QNNs for \textit{periodic} functions with respect to the supremum norm. We use the Jackson inequality to approximate a given function by implementing its approximating trigonometric polynomial via a suitable QNN. In particular, we see that by restricting to the class of periodic functions, one can achieve a quadratic reduction of the number of parameters, producing better approximation results than in the literature. Moreover, the smoother the function, the fewer parameters are needed to construct a QNN to approximate the function. 
\end{abstract}

\maketitle

\section{Introduction}

Thanks to the development of quantum computing in recent years, quantum machine learning has emerged as a novel and interesting field (see, e.g., \cite{biamonte17,jacquier2022}). The goal of quantum machine learning is to enhance traditional machine learning tasks through the use of qubits and quantum circuit design. Hence, quantum neural networks (QNNs) arise naturally as an analog to traditional neural networks, with the potential to achieve quantum advantage in deep learning tasks.

In recent years, QNNs have gained significant interest among applied mathematicians and quantum computer scientists. One particular problem is whether QNNs can yield the universal approximation properties (UAP) which classical neural networks have (see, e.g., \cite{Cybenko1989,hornik89,pinkus99,neufeld24}). 
In \cite{PhysRevA.104.012405}, 
the authors construct a single-qubit quantum neural network to achieve universal approximation properties for continuous complex functions. A different quantum circuit is introduced in \cite{Schuld2021}, which shows that truncated Fourier series can be implemented as the expectation value of some observable with respect to a state prepared by a quantum circuit. The same idea is used in \cite{yu22} to prove the universal approximation theorem of single-qubit quantum neural networks for continuous univariate and square-integrable functions. The authors of \cite{goto2021} have proved the UAP of hybrid classical-quantum neural networks that encode the classical input into a quantum Hilbert space.
In \cite{Aftab2024Approximating}, the authors prove that QNNs can arbitrarily well approximate $d$-dimensional Korobov functions, and obtain worst-case complexity bounds on the  quantum circuit depth and number of qubits.
 In \cite{gonon23}, the authors consider continuous functions with integrable Fourier transforms and are able to obtain the UAP with respect to both the $L^2$ and $L^{\infty}$ norms as a consequence of obtaining the error bounds when approximating such functions by a quantum neural network. Building on the feedforward QNN introduced in \cite{gonon23}, the work in \cite{gonon2025} constructs a quantum analog of recurrent neural network, and proves the UAP of recurrent QNNs. Finally, in \cite{perezsalinas2025}, the authors obtain the UAP of quantum neural networks without pre-processing the input, while asymptotically improving the number of qubits and parameters compared to \cite{yu25}. This demonstrates the universal approximation prowess of QNNs for certain classes of functions and thus, paves the way for the application of QNNs in real-world machine learning tasks.

Another important problem related to the universal approximation property of QNNs is their approximation rate, which expresses the approximation error in terms of the QNN design, such as its depth, width, or the number of qubits used. In \cite{gonon23}, the authors require $\mathcal{O}(\epsilon^{-2})$ parameters and $\mathcal{O}(\log_2(\epsilon^{-1}))$ qubits to achieve an $L^2$-approximation bound of $\epsilon > 0$ when approximating functions with integrable Fourier transform, and for the $L^{\infty}$-error when approximating functions with additional integrable properties of the Fourier transform. 

In this paper, we turn our focus to the class of \textit{periodic} functions. Periodic functions model many repeating phenomena in real life, such as the motion of planets, the changing of seasons, sound and light waves, and even the flow of alternating current in electricity. Their repeating nature makes it possible to predict future behavior once the period is known, which is essential in science, engineering, and everyday life (see, e.g., \cite{oppenheim97,zygmund03,fourier09,oldham2010}). By using periodic functions, complex systems can be analyzed more simply and effectively, especially through tools like Fourier analysis that break down signals into basic waves (see, e.g., \cite{stein03,folland92,körner2022}).

At the same time, the ability of single-qubit QNNs to represent any trigonometric polynomials has been studied in \cite{ylwang23}. This initiates a new pathway to approximate continuous periodic functions, as their truncated Fourier series can be approximated by some single-qubit QNN of fixed length. Moreover, the approximation of a periodic function by a trigonometric polynomial can be controlled with the Jackson inequality (see, e.g., \cite{jackson11,lorentz1966,cheney82}). 

In our work, we will focus on approximating periodic functions. In the case of such univariate functions, for a prescribed error tolerance level $\e > 0$, we require $\mathcal{O}(\epsilon^{-1})$ parameters and only one qubit. In the case of $d$-variate functions which are $(K+1)$-times continuously differentiable with $K \in \mathbb{N}_0$ satisfying $K \geq d$, we can construct a multi-qubit QNN using $\mathcal{O}(\epsilon^{-1})$ parameters and $\mathcal{O}(\log_2(\epsilon^{-1}))$ qubits to approximate such functions. Compared to previous work on approximation rates of QNN in \cite{gonon23}, we achieve a quadratic reduction in the number of parameters when approximating periodic and smooth functions without having to impose any integrability condition on their Fourier transforms. 

\subsection{Notation}

As usual, $\mathbb{N} := \lbrace 1, 2, 3, \dots \rbrace$ and $\mathbb{N}_0 := \mathbb{N} \cup \lbrace 0 \rbrace$ denote the sets of natural numbers, $\mathbb{Z}$ represents the set of integers, whereas $\mathbb{R}$ and $\mathbb{C}$ are the sets of real and complex numbers, respectively, with $\mathbf{i}$ denoting the imaginary unit satisfying $\mathbf{i}^2 := -1$. The complex conjugate of $c := a + b\mathbf{i} \in \mathbb{C}$ is defined as $\compconj{c}= a - b\mathbf{i} \in \mathbb{C}$, where $a,b \in \mathbb{R}$. Moreover, we use the abbreviation $\sum_{n=a-1/2}^{b-1/2} c_n := \sum_{n=a}^b c_{n-1/2}$ for $a,b \in \mathbb{Z}$ with $a \leq b$ and $(c_t)_{t \in \mathbb{R}} \subseteq \mathbb{R}$. In addition, for every fixed $d \in \mathbb{N}$, we denote a multi-index $\mathbf{n} := (n_1,\dots,n_d) \in \mathbb{Z}^d$ in bold letters, where we define $\Vert \mathbf{n} \Vert_1 := \vert n_1 \vert + \dots + \vert n_d \vert$ for $\mathbf{n} := (n_1,\dots,n_d) \in \mathbb{N}_0^d$. For $\mathbf{n}, \mathbf{m}\in \mathbb{Z}^d$, we say that $\mathbf{n} \leq \mathbf{m}$  if $n_i \leq m_i$ for all $i = 1,\dots,d$. Furthermore, we denote by $\mathbb{R}^d$ (resp.~$\mathbb{C}^d$) the $d$-dimensional vector space over $\mathbb{R}$ (resp.~$\mathbb{C}$) equipped with the norm $\Vert \mathbf{x} \Vert = \big(\sum_{i=1}^d \vert x_i \vert^2 \big)^{1/2}$ for $\mathbf{x} := (x_1,\dots,x_d)^\top \in \mathbb{R}^d$ (resp.~$x := (x_1,\dots,x_d)^\top \in \mathbb{C}^d$). Moreover, $\mathbb{R}^{d \times d}$ (resp.~$\mathbb{C}^{d \times d}$) represents the space of matrices $A := (A_{i,j})_{i,j=1,\dots,d} \in \mathbb{R}^{d \times d}$ (resp. $A := (A_{i,j})_{i,j=1,\dots,d} \in \mathbb{C}^{d \times d}$). Hereby, we denote by $I_d \in \mathbb{C}^{d \times d}$ the identity matrix, use the notation $U^\dagger \in \mathbb{C}^{d \times d}$ for the Hermitian adjoint of a matrix $U \in \mathbb{C}^{d \times d}$, and recall that a matrix $U \in \mathbb{C}^{d \times d}$ is unitary if $U^\dagger U = U U^\dagger = I_d$. We further denote the group of unitary matrices $U \in \mathbb{C}^{d \times d}$ by $\mathcal{U}(d)$. For any $n \in \mathbb{N}$ and matrix $M \in \mathbb{C}^{d\times d}$, we write $M^{\otimes n} \in \mathbb{C}^{d^n \times d^n}$ to denote the $n$-fold Kronecker product of $M$.

In addition, for every $d \in \mathbb{N}$, we denote by $C(\mathbb{R}^d)$ the vector space of continuous functions $f: \mathbb{R}^d \rightarrow \mathbb{R}$ equipped with the supremum norm $\Vert f \Vert_\infty := \sup_{x \in \mathbb{R}^d} \vert f(x) \vert$. Then, for every $f \in C(\mathbb{R}^d)$, we denote its \emph{(global) modulus of continuity} by
\begin{equation*}
	[0,\infty) \ni \delta \quad \mapsto \quad \omega_f(\delta) := \sup_{x,t \in \mathbb{R} \atop \vert t \vert < \delta} \vert f(x+t) - f(x) \vert \in [0,\infty].
\end{equation*}
Furthermore, we denote by $C_{2\pi}(\mathbb{R}^d) \subset C(\mathbb{R}^d)$ the vector subspace of continuous and \emph{$2\pi$-periodic functions} $f: \mathbb{R}^d \rightarrow \mathbb{R}$. Hereby, a function $f: \mathbb{R}^d \rightarrow \mathbb{R}$ is called \textit{$2\pi$-periodic} if for every $j = 1,\dots,d$ and $(x_1,\dots,x_d) \in \mathbb{R}^d$ we have $f(x_1,\dots,x_{j-1},x_j+2\pi,x_{j+1},\dots,x_d) = f(x_1,\dots,x_{j-1},x_j,x_{j+1},\dots,x_d)$. In this case, we denote the Fourier coefficients of every $f \in C_{2\pi}(\mathbb{R}^d)$ by $(\widehat{f}(\mathbf{n}))_{\mathbf{n} \in \mathbb{Z}^d}$ that are for every $\mathbf{n} \in \mathbb{Z}^d$ defined by
\begin{equation*}
	\widehat{f}(\mathbf{n}) := \frac{1}{(2\pi)^d} \int_{[-\pi,\pi]^d} e^{-\mathbf{i} \mathbf{n}^\top \mathbf{x}} f(\mathbf{x}) d\mathbf{x}.
\end{equation*}
Then, by applying \cite[Theorem~2.1]{folland92} to every variable $x_j$, $j = 1,\dots,d$, it follows for every $\mathbf{x} \in \mathbb{R}^d$ that $f(\mathbf{x}) = \sum_{\mathbf{n} \in \mathbb{Z}^d} \widehat{f}(\mathbf{n}) e^{\mathbf{i} \mathbf{n}^\top \mathbf{x}}$.

Moreover, we use the Landau $\mathcal{O}$-notation, i.e., $a_N = \mathcal{O}(b_N)$ if and only if $\limsup_{N \rightarrow \infty} \big\vert \frac{a_N}{b_N} \big\vert < \infty$.

\subsection{Outline}

The remainder of the paper is structured as follows. Section 2 provides a brief introduction to quantum computing, including the concepts of quantum bits, quantum gates, quantum circuits, and measurements. The specific construction of the QNNs used in this paper will also be discussed in this section. Section 3 presents the main results in two separate theorems, one for the case of univariate functions and one for the case of multivariate functions. In Section 4, we will discuss the numerical results obtained when approximating continuous periodic functions using QNNs. The complete proofs of our main results will be shown in Sections 5 and~6.

\section{Construction of quantum neural networks}

In this section, we provide a brief introduction to quantum computing following the textbooks~\cite{NielsenChuang10, jacquier2022}, and the construction of quantum neural networks (QNNs) which will be used in this paper.

\subsection{Bra-Ket notation}
We start with a finite-dimensional complex Hilbert space $\mathcal{H}$ and its dual space $\mathcal{H}^\ast$. A vector $\psi \in \mathcal{H}$, often called a \textit{state}, is denoted by the \textit{ket} notation $\ket \psi$, and the linear functional $(\psi \mapsto\langle\varphi,\psi\rangle )\in \mathcal{H}^\ast$ where $\varphi \in \mathcal{H}$ is denoted by the \textit{bra} notation $\bra \varphi$. Moreover, the action of $\bra \varphi$ on $\ket \psi$ is denoted by the \textit{bra-ket} notation $\langle \psi| \varphi \rangle := \langle \psi,\varphi\rangle \in \mathbb{C}$. The action of a linear operator $\mathcal{A}: \mathcal{H} \to \mathcal{H}$ on a vector $\psi \in \mathcal{H}$ is represented by $\mathcal{A}\ket\psi$. Similarly, the action of $\mathcal{A}$ on $\bra\varphi \in \mathcal{H}^\ast$, represented by $\bra\varphi\mathcal{A}$, is a linear functional that satisfies $(\bra\varphi\mathcal{A})\ket\psi = \bra\varphi(\mathcal{A}\ket\psi) := \langle\varphi,\mathcal{A}\psi\rangle$. We often write $\bra\varphi\mathcal{A}\ket\psi$ for brevity. The expectation value of an operator $\mathcal{A}$ with respect to a normalized state $\psi \in \mathcal{H}$, i.e., $\langle \psi | \psi\rangle = 1$, is $\langle \psi | \mathcal{A}| \psi\rangle$. Meanwhile, $\ket \psi \bra \varphi$ denotes the outer product of $\psi, \varphi \in \mathcal{H}$, which is a linear operator that maps $\mathcal{H}$ to itself such that $(\ket \psi \bra \varphi) \ket \xi = (\langle\varphi|\xi\rangle)\ket\psi$ for all $\xi \in \mathcal{H}$.

In quantum computing, we first consider the Hilbert space $\mathcal{H} \simeq \mathbb{C}^2$ and then the $n$-fold tensor product Hilbert space $\mathcal{H}^{\otimes n} := \mathcal{H}\otimes\dots\otimes\mathcal{H} \simeq \mathbb{C}^{2^n}$. This is sometimes called the \textit{state space}, and we denote by $\ket \psi _n$ the state $\psi \in \mathbb{C}^{2^n}$. The computational basis of the state space can then be constructed as $\mathcal{B}_n := \lbrace \Ket{\mathbf{j}}_n: \mathbf{j} := (j_1,\dots,j_n) \in \lbrace 0,1 \rbrace^n \rbrace$, with $\Ket{\mathbf{j}}_n := \Ket{j_1} \otimes \cdots \otimes \Ket{j_n} := \Ket{j_1} \cdots \Ket{j_n}$ for $\mathbf{j} := (j_1,\dots,j_n) \in \lbrace 0,1 \rbrace^n$. When the number of qubits $n$ is fixed, and by a slight abuse of notation, for any integer $k$ between $0$ and $2^n-1$, we write $\ket{k}$ to denote the computational basis state $\Ket{k_1} \cdots \Ket{k_n}$, where $(k_1 \dots k_n) \in \{0,1\}^n$ is the binary representation of $k$.

\subsection{Quantum bit}
In contrast to classical computing using classical bits, quantum computing relies on quantum bits, also called qubits, to store and process information. A single qubit is a normalized vector in the state space $\mathbb{C}^2$. Therefore, the set of all possible states is the set of $\ket{\psi}$ of the form
$$\ket{\psi} = \psi_0\ket{0} + \psi_1\ket{1} = \begin{bmatrix} \psi_0 \\ \psi_1 \end{bmatrix},$$
where $\psi_0, \psi_1 \in \mathbb{C}$ are such that $|\psi_0|^2 + |\psi_1|^2 = 1$. Here, $|\psi_0|^2$ and $|\psi_1|^2$ represent the probabilities of the qubit taking the value of $\ket{0}:=\begin{bmatrix} 1 \\ 0 \end{bmatrix}$ and $\ket{1}:=\begin{bmatrix} 0 \\ 1 \end{bmatrix}$, respectively.

Every $n$-qubit state $\Ket{\psi}_n$ can then be written as a linear combination of the computational basis, i.e., there exist some coefficients $(\psi_\mathbf{j})_{\mathbf{j} \in \lbrace 0,1 \rbrace^n} \subseteq \mathbb{C}$ with $\sum_{\mathbf{j} \in \lbrace 0,1 \rbrace^n} \vert \psi_\mathbf{j} \vert^2 = 1$ such that
\begin{equation*}
	\Ket{\psi}_n := \sum_{\mathbf{j} \in \lbrace 0,1 \rbrace^n} \psi_\mathbf{j} \Ket{\mathbf{j}}_n.
\end{equation*}
Here, the coefficients $(\psi_\mathbf{j})_{\mathbf{j} \in \lbrace 0,1 \rbrace^n}$ are called probability amplitudes (or simply amplitudes) since for every $\mathbf{j} \in \lbrace 0,1 \rbrace^n$, $\vert \psi_\mathbf{j} \vert^2 = |\langle \mathbf{j}, \psi \rangle_n|^2$ is the probability of $\ket{\psi}_n$ collapsing to the state $\ket {\mathbf{j}}_n$ after measurement.

\subsection{Quantum gate}
\label{sectionQgate}
In quantum computing, the evolution of qubit states is unitary. A quantum gate acting on $n$ qubits can, therefore, be represented as a unitary matrix of size $2^n$. We denote the state evolution of an $n$-qubit system $\ket{\psi} \to \ket{\varphi} \in \mathbb{C}^{2^n}$ via a quantum gate $U \in \mathcal{U}(2^n)$ by $U\ket{\psi} = \ket{\varphi}$. 

By elementary gates, we refer to the set in $\mathcal{U}(2) \cup \mathcal{U}(4)$ defined by
\begin{equation*}
	\mathbb{G} := \big\lbrace X,Y,Z,H,R_X(\theta), R_Y(\theta), R_Z(\theta), P(\vartheta): \theta \in (0,4\pi), \, \vartheta \in (0,2\pi) \big\rbrace \cup \big\lbrace \text{CNOT}, \text{SWAP} \big\rbrace,
\end{equation*}
where
\[
H := \frac{1}{\sqrt{2}}\begin{bmatrix}
	1 & 1 \\
	1 & -1
\end{bmatrix},
\quad
X :=
\begin{bmatrix}
	0 & 1 \\
	1 & 0
\end{bmatrix},
\quad
Y :=
\begin{bmatrix}
	0 & -\mathbf{i} \\
	\mathbf{i} & 0
\end{bmatrix},
\quad
Z := 
\begin{bmatrix}
	1 & 0 \\
	0 & -1
\end{bmatrix}
,\]
\[
R_X(\theta) := 
\begin{bmatrix}
	\cos (\frac{\theta}{2}) & -\mathbf{i} \sin (\frac{\theta}{2}) \\
	-\mathbf{i} \sin (\frac{\theta}{2}) & \cos (\frac{\theta}{2})
\end{bmatrix},
\quad
R_Y(\theta) :=
\begin{bmatrix}
	\cos (\frac{\theta}{2}) & -\sin (\frac{\theta}{2}) \\
	\sin (\frac{\theta}{2}) & \cos (\frac{\theta}{2})
\end{bmatrix},
\quad
R_Z(\theta) :=
\begin{bmatrix}
	e^{-\mathbf{i} \frac{\theta}{2}} & 0 \\
	0 & e^{\mathbf{i} \frac{\theta}{2}}
\end{bmatrix}
,\]
with $\mathbf{i} := \sqrt{-1}$ being the imaginary unit and $\theta \in \mathbb{R}$ being a parameter. These gates form a universal quantum gate set in the sense of the Solovay-Kitaev theorem \cite{dawsonnielsen06}. Here, $P(\vartheta)$ with $\vartheta \in (0,2\pi)$, CNOT, SWAP are other standard gates in quantum computing and are not introduced in this paper. Detailed discussions of these gates can be found in \cite{NielsenChuang10, jacquier2022}.

A quantum circuit $U$ of length $\mathscr{L} \in \mathbb{N}$ and acting on $n \in \mathbb{N}$ qubits is a unitary operator such that 
\begin{equation*}
	U = \prod\limits_{j=1}^{\mathscr{L} } (G_{j,n_1} \otimes G_{j,n_2} \otimes \dots \otimes G_{j, n_j}) \quad \in \mathcal{U}(2^n),
\end{equation*}
where $G_{1,n_1}, \dots G_{\mathscr{L} ,n_\mathscr{L} } \in \mathbb{G} \cup I_2$ are such that $(G_{j,n_1}\otimes \dots \otimes G_{j, n_j}) \in \mathcal{U}(2^n)$ for all $j = 1,\dots,\mathscr{L} $ (see also \cite{chenlineufeld23}). We regard any quantum circuit whose elementary gates are parameter-dependent as a quantum neural network (QNN).

\begin{remark}[\textbf{Ancilla qubits}]
	Ancilla qubits are qubits that are necessary for the storage of extra information in a quantum algorithm, even though they may not be needed for the final output. The number of ancilla qubits required can be used to measure the complexity of a quantum algorithm, and are counted together with other qubits in a multi-qubit quantum circuit.
\end{remark}

\subsection{Single-qubit QNN}
\label{sectionSingleQNN}

In our work, we consider the single-qubit quantum neural network (QNN) of the form

\begin{equation}
	\label{EqDefSingleCircuit}
	U^L_{\theta, \phi}(x):= R_Z(\varphi)R_Y(\theta_0)R_Z( \phi_0) \prod_{l=1}^{L} R_Z(x) R_Y(\theta_l)R_Z( \phi_l) \quad \in \mathcal{U}(2),
\end{equation}
which is introduced in \cite{yu22}. Here, $L$ is the depth of the QNN, $x \in \mathbb{R}$ is the input, and $\theta = (\theta_0, \dots, \theta_L) \in \mathbb{R}^{L+1}$ and $\phi = (\varphi, \phi_0, \dots, \phi_L) \in \mathbb{R}^{L+2}$ are the parameters of the quantum circuit. $R_Y(\cdot)$ and $R_Z(\cdot)$ are the rotation gates introduced in Section~\ref{sectionQgate}.

The output of the QNN defined above is the probability amplitude of the quantum state, after evolving under the QNN, collapsing to the state $\ket{0}$ when measured, that is
\begin{equation}
	\label{EqOutSingleCircuit}
	f^L_{\theta, \phi}(x):= \bra{0} U^L_{\theta, \phi}(x)\ket{0} .
\end{equation}

\subsection{Multi-qubit QNN}
\label{sectionMultiQNN}

For $d,n \in \mathbb{N}$ with $d \leq n$, an $n$-qubit quantum neural network (QNN) is represented as a unitary matrix $U_{\boldsymbol\theta, \boldsymbol\phi}(\mathbf{x}) \in \mathcal{U}(2^n)$, which depends on the input $\mathbf{x} \in \mathbb{R}^d$, in a parametric manner. We also let $q = n-d$ denote the number of ancilla qubits used in the circuit. We will now present the QNN used in our paper, which is based on the idea of \textit{linear combination of unitaries} (LCU) in \cite{yu25} and \cite{childs12}.

For every vector $\mathbf{L} = (L_1,\dots,L_d) \in \mathbb{N}^d$, we consider the set $ \{\mathbf{n} \in \mathbb{Z}^d, -\mathbf{L} \leq \mathbf{n} \leq \mathbf{L}\}$. We also define $\mathfrak{n} := |\{\mathbf{n} \in \mathbb{Z}^d, -\mathbf{L} \leq \mathbf{n} \leq \mathbf{L}\}| = \sum\limits_{j=1}^d(2L_j + 1)$. Assume on $\{\mathbf{n} \in \mathbb{Z}^d, -\mathbf{L} \leq \mathbf{n} \leq \mathbf{L}\}$ an arbitrary ordering of its elements $\{\mathbf{n} \in \mathbb{Z}^d, -\mathbf{L} \leq \mathbf{n} \leq \mathbf{L}\} = \{\mathbf{n}_0,\dots, \mathbf{n}_{\mathfrak{n}-1} \}$. Our QNN uses $q = \lceil \log_2(\mathfrak{n}) \rceil$ ancilla qubits. 

We first define, for each vector $\mathbf{n}_i := (n_{i,1}, \dots,n_{i,d}) \in \{\mathbf{n}_0,\dots, \mathbf{n}_{\mathfrak{n}-1} \} \subset \mathbb{Z}^d$, a unitary matrix
\begin{equation}
	U^{\mathbf{n}_i}_{\theta_{\mathbf{n}_i},\phi_{\mathbf{n}_i}}(\mathbf{x}) := \bigotimes_{j=1}^d U^{2|n_{i,j}|}_{\theta_{\mathbf{n}_i,j},\phi_{\mathbf{n}_i,j}}(x_j) \quad \in \mathcal{U}(2^d),
\end{equation}
where $\mathbf{x} \in \mathbb{R}^d$ is the input, $\theta_{\mathbf{n}_i} := (\theta_{\mathbf{n}_i,1},\dots,\theta_{\mathbf{n}_i,d}) \in \bigtimes_{j=1}^d \mathbb{R}^{2|n_{i,j}|+1}$ and $\phi_{\mathbf{n}_i} := (\phi_{\mathbf{n}_i,1},\dots,\phi_{\mathbf{n}_i,d}) \in \bigtimes_{j=1}^d \mathbb{R}^{2|n_{i,j}|+2}$ are the parameters, and $U^{2|n_{i,j}|}_{\theta_{\mathbf{n}_i,j},\phi_{\mathbf{n}_i,j}}(x_j)$ is the single-qubit QNN defined in Section~\ref{sectionSingleQNN}. 

We now define for every $\mathbf{x} \in \mathbb{R}^d$ the following unitary operator
\begin{equation}
	\label{EqMultiQubitOper}
	C^{\mathbf{L}}_{\boldsymbol\theta,\boldsymbol\phi}(\mathbf{x}) := \sum_{i = 0}^{\mathfrak{n}-1} \ket{i} \bra{i} \otimes U^{\mathbf{n}_i}_{\theta_{\mathbf{n}_i},\phi_{\mathbf{n}_i}}(\mathbf{x}) 
	+ \sum_{i=\mathfrak{n}}^{2^q-1} \ket{i} \bra{i} \otimes X^{\otimes d} \quad \in \mathcal{U}(2^{q+d}),
\end{equation}
where $X$ is the Pauli $X$ gate defined in Section~\ref{sectionQgate}. Finally, we introduce for every $\mathbf{x} \in \mathbb{R}^d$ the $(q+d)$-qubit operator
\begin{equation}
	\label{EqMultiCircuit}
	U^{\mathbf{L}}_{\boldsymbol\theta,\boldsymbol\phi}(\mathbf{x}) := \left( H^{\otimes q} \otimes I_2^{\otimes d} \right)^\dagger C^{\mathbf{L}}_{\boldsymbol\theta,\boldsymbol\phi}(\mathbf{x}) \left( H^{\otimes q} \otimes I_2^{\otimes d} \right) \quad \in \mathcal{U}(2^{q+d}),
\end{equation}
where $H$ is the Hadamard gate defined in Section~\ref{sectionQgate}. This is the multi-qubit QNN used in this paper. For every input $\mathbf{x} \in \mathbb{R}^d$, the output of the QNN is then defined as the amplitude of the quantum state, after evolving under the QNN, collapsing to the state $\ket{0}_{q+d}$ when measured, that is
\begin{equation}
	\label{EqMultiCircuitOut}
	f^{\mathbf{L}}_{\boldsymbol\theta,\boldsymbol\phi}(\mathbf{x}) := \bra{0}_{q+d} U^{\mathbf{L}}_{\boldsymbol\theta,\boldsymbol\phi}(\mathbf{x}) \ket{0}_{q+d}.
\end{equation}

\pagebreak

\section{Main results}

Our main results will be discussed in this section. Our first result, presented in Theorem~\ref{thrmJacksonK}, gives an upper bound for approximating a real, periodic, univariate function via a single-qubit quantum neural network (QNN). Our second result, presented in Theorem~\ref{thrmJacksonMultiK}, extends Theorem~\ref{thrmJacksonK} to approximating real, periodic, multivariate functions.

\subsection{Approximation rates for univariate functions}

We have the following result for the case of real, periodic univariate functions.

\begin{theorem}
	\label{thrmJacksonK}
	For $K \in \mathbb{N}_0$, let $f \in C_{2\pi}(\mathbb{R})$ be $K$-times continuously differentiable and define $c := (2^{K+1}-1)  \Vert f \Vert_\infty  > 0$. Then, there exists a constant $C_K > 0$ (independent of $f$) such that for every $N \in \mathbb{N}$ there exists a single-qubit QNN $U^{2L}_{\theta, \phi}$ as defined in \eqref{EqDefSingleCircuit} with depth $2L$, $L := \big\lceil \frac{K+3}{2} \big\rceil \big\lfloor \frac{N}{2} \big\rfloor$, and parameters $\theta \in \mathbb{R}^{2L+1}$ and $\phi \in \mathbb{R}^{2L+2}$ satisfying
	\begin{equation}
		\label{EqCorJacksonK}
		\sup_{x \in \mathbb{R}} \left\vert f(x) - c \cdot f^{2L}_{\theta,\phi}(x) \right\vert \leq \frac{C_K \omega_{f^{(K)}}\left( \tfrac{1}{N} \right)}{N^K}.
	\end{equation}
	Moreover, if $f \in C_{2\pi}(\mathbb{R})$ is $(K+1)$-times continuously differentiable, then the right-hand side of \eqref{EqCorJacksonK} can be upper bounded by $\frac{C_K \Vert f^{(K+1)} \Vert_\infty}{N^{K+1}}$, where $f^{(K+1)}$ denotes the $(K+1)^{\text{th}}$ derivative of~$f$.
\end{theorem}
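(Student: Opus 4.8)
The plan is to first approximate $f$ by a suitable real trigonometric polynomial, with simultaneous control on its supremum norm and on the approximation error, and then to realize that polynomial \emph{exactly} as the output of a single-qubit QNN of the form~\eqref{EqDefSingleCircuit}. Set $m := \lfloor N/2 \rfloor + 1$ and $r := \lceil (K+3)/2 \rceil$, so that $r(m-1) = L$, and let $\mathcal{K}$ be the Jackson kernel of order $r$, i.e.\ the non-negative, even trigonometric polynomial proportional to $\big(\sin(m t /2)/\sin(t/2)\big)^{2r}$, normalized so that $\tfrac{1}{2\pi}\int_{-\pi}^{\pi}\mathcal{K}(t)\,dt = 1$; it has degree $r(m-1) = L$. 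The convolution operator $Tg := g * \mathcal{K}$ then (i) maps $C_{2\pi}(\mathbb{R})$ into real trigonometric polynomials of degree $\le L$ (its Fourier coefficients are supported in $\{-L,\dots,L\}$), (ii) satisfies $\|Tg\|_\infty \le \|g\|_\infty$ since $\mathcal{K}\ge 0$ has mass $1$, (iii) commutes with differentiation, and (iv) obeys the first-order Jackson estimate $\|g - Tg\|_\infty \le \tfrac{C}{m}\,\|g'\|_\infty$ for $g \in C^1_{2\pi}(\mathbb{R})$ as well as $\|g-Tg\|_\infty \le C\,\omega_g(1/m)$ in general, with $C$ depending only on $r$ (it is here that one uses $r \ge 2$, which holds since $r = \lceil (K+3)/2\rceil \ge 2$). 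I would then set
\[
P_N \;:=\; f - (I-T)^{K+1} f \;=\; \sum_{j=1}^{K+1} \binom{K+1}{j}(-1)^{j+1}\, T^j f ,
\]
which is a real trigonometric polynomial of degree $\le L$; and since $\|T\|\le 1$, its supremum norm is bounded by $\big(\sum_{j=1}^{K+1}\binom{K+1}{j}\big)\|f\|_\infty = (2^{K+1}-1)\|f\|_\infty = c$. This is exactly why the constant takes the stated form $c = (2^{K+1}-1)\|f\|_\infty$.

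For the approximation rate, note that $f - P_N = (I-T)^{K+1} f$. Using that $I - T$ commutes with $\tfrac{d}{dx}$, I would peel off one factor $I-T$ at a time and move the derivative inside: $\|(I-T)^{K+1} f\|_\infty \le \tfrac{C}{m}\big\|\big((I-T)^{K} f\big)'\big\|_\infty = \tfrac{C}{m}\|(I-T)^{K} f'\|_\infty$, and iterating this $K$ times reaches $\|(I-T)^{K+1} f\|_\infty \le \tfrac{C^{K}}{m^{K}}\,\|(I-T) f^{(K)}\|_\infty \le \tfrac{C^{K+1}}{m^{K}}\,\omega_{f^{(K)}}(1/m)$, where the last inequality is the second bound in (iv) applied to $f^{(K)}$. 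Since $m > N/2$, monotonicity of $\omega_{f^{(K)}}$ together with the elementary inequality $\omega_{f^{(K)}}(2\delta)\le 2\,\omega_{f^{(K)}}(\delta)$ turns this into $\|f-P_N\|_\infty \le \tfrac{C_K}{N^{K}}\,\omega_{f^{(K)}}(1/N)$ for a constant $C_K$ depending only on $K$. If in addition $f$ is $(K+1)$-times continuously differentiable, then $\omega_{f^{(K)}}(\delta)\le \delta\,\|f^{(K+1)}\|_\infty$ by the fundamental theorem of calculus, which upgrades the bound to $\tfrac{C_K\|f^{(K+1)}\|_\infty}{N^{K+1}}$.

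It remains to realize $g_N := P_N/c$, a real trigonometric polynomial of degree $\le L$ with $\|g_N\|_\infty \le 1$, as the output of the circuit~\eqref{EqDefSingleCircuit} of depth $2L$. Expanding the product in~\eqref{EqDefSingleCircuit} shows that the function $x \mapsto f^{2L}_{\theta,\phi}(x) = \bra{0}U^{2L}_{\theta,\phi}(x)\ket{0}$ is, for every choice of parameters, a trigonometric polynomial of degree $\le L$ — each of the $2L$ gates $R_Z(x)$ contributes a factor $e^{\pm \mathbf{i}x/2}$, so the accessible frequencies are exactly the integers in $\{-L,\dots,L\}$. Conversely, as established in~\cite{ylwang23} (see also~\cite{yu22}), every real trigonometric polynomial of degree $\le L$ with supremum norm at most $1$ arises this way: since $1-g_N^2$ is a non-negative trigonometric polynomial of degree $\le 2L$, the Fej\'er--Riesz theorem yields a trigonometric polynomial $h$ of degree $\le L$ with $|g_N|^2 + |h|^2 \equiv 1$, and a layer-by-layer extraction of the rotation angles then produces parameters $\theta\in\mathbb{R}^{2L+1}$, $\phi\in\mathbb{R}^{2L+2}$ with $f^{2L}_{\theta,\phi}\equiv g_N$. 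Hence $c\,f^{2L}_{\theta,\phi}\equiv P_N$, so
\[
\sup_{x\in\mathbb{R}}\big|f(x) - c\,f^{2L}_{\theta,\phi}(x)\big| \;=\; \|f-P_N\|_\infty \;\le\; \frac{C_K\,\omega_{f^{(K)}}(1/N)}{N^{K}},
\]
which is~\eqref{EqCorJacksonK}; the sharper estimate for $(K+1)$-times differentiable $f$ follows from the last bound of the previous paragraph.

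I expect two main obstacles. First, in the approximation step one must track the degree of the iterated Jackson polynomial carefully so that it lands exactly at $L=\lceil\frac{K+3}{2}\rceil\lfloor N/2\rfloor$ while the error genuinely attains order $N^{-K}\omega_{f^{(K)}}(1/N)$; this rests on the properties (i)--(iv) of $T$ above, especially on $I-T$ both ``gaining a derivative'' and commuting with differentiation, and on choosing the kernel order $r$ large enough for the first-order Jackson estimate to hold with a constant independent of $m$. Second, one must pin down the precise form of the QNN-representability statement: that a single-qubit circuit~\eqref{EqDefSingleCircuit} of depth $2L$, with real parameters of the stated dimensions $2L+1$ and $2L+2$, reproduces \emph{any} real trigonometric polynomial of degree at most $L$ whose supremum norm is at most $1$. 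The role of $c=(2^{K+1}-1)\|f\|_\infty$ --- the operator-norm bound $\sum_{j=1}^{K+1}\binom{K+1}{j}$ of $I-(I-T)^{K+1}$ --- is exactly to guarantee $\|g_N\|_\infty\le1$, so that the rescaled approximant lies in the range of the quantum circuit.
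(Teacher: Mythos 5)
Your proof is essentially correct, and the second half (realizing a bounded real trigonometric polynomial of degree $\le L$ via \cite[Corollary~2]{ylwang23}/Fej\'er--Riesz and rescaling by $c$) coincides with what the paper does. Where you genuinely diverge from the paper is in the \emph{construction of the approximating trigonometric polynomial}: you take the Boolean-sum iterate $P_N := f - (I-T)^{K+1}f = \sum_{j=1}^{K+1}(-1)^{j+1}\binom{K+1}{j}T^j f$ of a single convolution operator $Tg = g * \mathcal{K}$, and prove the rate by peeling off the $(I-T)$ factors one at a time via the first-order estimate $\|(I-T)h\|_\infty \le \tfrac{C}{m}\|h'\|_\infty$ and the commutation of $T$ with $\tfrac{d}{dx}$. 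The paper instead uses Lorentz's operator $(\mathcal{T}_{N,K}f)(x) = \int_{-\pi}^\pi J_{N,K}(t)\sum_{k=1}^{K+1}(-1)^{k+1}\binom{K+1}{k}f(x+kt)\,dt$, which is a \emph{single} convolution against a finite-difference combination with dilated arguments $f(x+kt)$, and cites \cite[Theorem~4.3]{lorentz1966} for the rate. The two operators are not the same (your $T^j$ uses the $j$-fold auto-convolution $\mathcal{K}^{*j}$, whereas Lorentz uses the dilated kernel $t\mapsto J_{N,K}(t)$ against $f(\cdot+kt)$), though both yield a degree-$L$ polynomial and the identical sup-norm bound $(2^{K+1}-1)\|f\|_\infty$, which is why the same constant $c$ works. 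Your route is more self-contained (it reconstructs the rate from the first-order moment estimate on the kernel, needing only $r\ge2$, rather than appealing to the black-box theorem which needs the $K$-th moment, hence $r=r_K$), and it gives a cleaner conceptual reason for the value of $c$ as the operator-norm bound of $I-(I-T)^{K+1}$. What the paper's version buys is the explicit Fourier-coefficient formula of Proposition~\ref{LemmaJacksonK} for $\mathcal{T}_{N,K}f$, which is precisely what is fed into \cite[Algorithm~3]{ylwang23} in Algorithm~\ref{AlgUni}; with your $P_N$ you would derive the analogous (simpler) formula $\widehat{P_N}(n) = \bigl(1-(1-\widehat{\mathcal{K}}(n))^{K+1}\bigr)\widehat{f}(n)$ to make the construction algorithmic. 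One small bookkeeping remark: you only need $r\ge2$ for your iteration, so your choice $r=r_K=\lceil(K+3)/2\rceil$ is made to hit the prescribed depth $2L$ rather than being forced by the estimate; if you were free to optimize, your argument would in fact permit a slightly shorter circuit for large $K$.
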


\begin{remark}[\textbf{Complexity analysis}]
	We will discuss the complexity of our QNN in terms of the number of parameters and qubits used with respect to the prescribed error tolerance level $\epsilon > 0$. The approximation error in this case is the $L^{\infty}$-norm error and is defined as $\sup_{x \in \mathbb{R}} \left\vert f(x) - c \cdot f^{2L}_{\theta,\phi}(x)\right\vert$. From the second part of Theorem~\ref{thrmJacksonK}, when $K=0$, $f \in C_{2\pi}(\mathbb{R})$ is continuously differentiable and its first derivative is bounded, we have $\frac{C_0 \Vert f^{(1)} \Vert_\infty}{N} < \epsilon$ if and only if $N > \frac{C_0\Vert f^{(1)} \Vert_\infty}{\e}$. Therefore, we can choose $N \in \mathbb{N}$ such that $N = \mathcal{O}(\epsilon^{-1})$ and thus the number of parameters $\Theta$ is of order $\Theta = \mathcal{O}(L) = \mathcal{O}(N) = \mathcal{O}(\epsilon^{-1})$. When $K \in \mathbb{N}$ and $f$ is a $K+1$-times continuously differentiable function, we require fewer parameters to approximate $f$ with the approximation error level $\epsilon > 0$. In detail, by setting $k := K+1$ and assuming that the function has a bounded $k$-th derivative, we have $N^k = \mathcal{O}(\epsilon^{-1})$, which implies that $N = \mathcal{O}(\epsilon^{-1/k})$. Therefore, for fixed $k \in \mathbb{N}$, the number of parameters $\Theta$ required by the QNN is of order $\Theta = \mathcal{O}(L) = \mathcal{O}(Nk) = \mathcal{O}(\epsilon^{-1/k})$. Moreover, for all $k \in \mathbb{N}$, the QNN uses only one qubit.
\end{remark}

\begin{remark}[\textbf{General periodic functions}]
	Theorem~\ref{thrmJacksonK} can also be applied to periodic functions with other frequency than $2\pi$. That is, given a continuous $M$-periodic function $\widetilde{f}: \mathbb{R} \rightarrow \mathbb{R}$, with $M > 0$, we learn the $2\pi$-periodic function
	\begin{equation*}
		\mathbb{R} \ni x \quad \mapsto \quad f(x) := \widetilde{f}\left( \frac{M x}{2\pi} \right) \in \mathbb{R}.
	\end{equation*}
	The QNN's output $f^{2L}_{\theta,\phi}$ in Theorem~\ref{thrmJacksonK} can then be rescaled to obtain a function
	\begin{equation*}
		\mathbb{R} \ni x \quad \mapsto \quad \widetilde{f}^{2L}_{\theta,\phi}(x) := f^{2L}_{\theta,\phi}\left( \frac{2\pi x}{M} \right) \in \mathbb{R},
	\end{equation*} 
	which approximates the $M$-periodic function $\widetilde{f}$, with rate $\frac{C_K \omega_{f^{(K)}}(\frac{1}{N})}{N^K}  = (\frac{2\pi}{M})^K \frac{C_K \omega_{\widetilde{f}^{(K)}}(\frac{1}{N})}{N^K}$.
\end{remark}

\subsection{Approximation rates for multivariate functions}

We have the following result for the case of real, periodic multivariate functions. Given any two $d$-dimensional vectors $\mathbf{N} \in \mathbb{N}^d$ and $\mathbf{K} \in \mathbb{N}_0^d$, we introduce the vector 
\begin{equation}
	\label{eqL}
	\mathbf{L}_{\mathbf{N}, \mathbf{K}} := \left(\left\lceil \frac{K_1+3}{2} \right\rceil\left\lfloor \frac{N_1}{2} \right\rfloor, \dots, \left\lceil \frac{K_d+3}{2} \right\rceil\left\lfloor \frac{N_d}{2} \right\rfloor\right) \in \mathbb{N}_0^d. 
\end{equation}
\begin{theorem}
	\label{thrmJacksonMultiK}
	For $\mathbf{K} := (K_1,\dots,K_d) \in \mathbb{N}_0^d$, let $f \in C_{2\pi}(\mathbb{R}^d)$ have a $K_j$-th continuous partial derivative $\partial_j^{K_j} f := \frac{\partial^{K_j} f}{\partial x_j^{K_j}}: \mathbb{R}^d \rightarrow \mathbb{R}$, $j = 1,\dots,d$. Moreover, let $c := \Vert f \Vert_\infty \prod_{j=1}^d(2^{K_j + 1}-1) \geq 0$. Then, there exists a constant $C_{\mathbf{K}} > 0$ (independent of $f$) such that for every $\mathbf{N} := (N_1,\dots,N_d) \in \mathbb{N}^d$ there exists a $(q+d)$-qubit QNN $U^{\mathbf{L}_{\mathbf{N}, \mathbf{K}}}_{\boldsymbol\theta, \boldsymbol\phi}$ as defined in \eqref{EqMultiCircuit} with parameters $\boldsymbol\theta := (\theta_\mathbf{n})_{-\mathbf{L}_{\mathbf{N}, \mathbf{K}} \leq \mathbf{n} \leq \mathbf{L}_{\mathbf{N}, \mathbf{K}}} \in \bigtimes_{-\mathbf{L}_{\mathbf{N}, \mathbf{K}} \leq \mathbf{n} \leq \mathbf{L}_{\mathbf{N}, \mathbf{K}}} \bigtimes_{j=1}^d \mathbb{R}^{2 \vert n_j \vert+1}$ and $\boldsymbol\phi := (\phi_\mathbf{n})_{-\mathbf{L}_{\mathbf{N}, \mathbf{K}}\leq \mathbf{n} \leq \mathbf{L}_{\mathbf{N}, \mathbf{K}}} \in \bigtimes_{-\mathbf{L}_{\mathbf{N}, \mathbf{K}} \leq \mathbf{n} \leq \mathbf{L}_{\mathbf{N}, \mathbf{K}}} \bigtimes_{j=1}^d \mathbb{R}^{2\vert n_j \vert+2}$ satisfying
	\begin{equation}
		\label{EqCorJacksonMultiK}
		\sup_{x \in \mathbb{R}^d} \left\vert f(\mathbf{x}) - c \cdot 2^q \cdot f^{\mathbf{L}_{\mathbf{N}, \mathbf{K}}}_{\boldsymbol\theta,\boldsymbol\phi}(\mathbf{x}) \right\vert \leq C_\mathbf{K} \sum_{j=1}^d \frac{\omega_{\partial_j^{K_j} f}\big(\tfrac{1}{N_j}\big)}{N_j^{K_j}},
	\end{equation}
	where $\mathfrak{n} := |\{\mathbf{n} \in \mathbb{Z}^d, -\mathbf{L}_{\mathbf{N}, \mathbf{K}} \leq \mathbf{n} \leq\mathbf{L}_{\mathbf{N}, \mathbf{K}}\}|$ and $q := \lceil \log_2(\mathfrak{n}) \rceil$. Moreover, if $f \in C_{2\pi}(\mathbb{R}^d)$ has a ($K_j+1$)-th continuous partial derivative $\partial_j^{K_j+1} f: \mathbb{R}^d \rightarrow \mathbb{R}$, $j = 1,\dots,d$, then the right-hand side of \eqref{EqCorJacksonMultiK} can be upper bounded by $C_\mathbf{K} \sum_{j=1}^d \frac{\Vert \partial_j^{K_j+1} f \Vert_\infty}{N_j^{K_j+1}}$.
\end{theorem}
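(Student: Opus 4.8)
The plan is to combine a structural identity for the QNN output \eqref{EqMultiCircuitOut} with a multivariate Jackson inequality, obtained by applying the univariate approximation behind Theorem~\ref{thrmJacksonK} coordinate by coordinate. One may assume $f \not\equiv 0$ (otherwise both sides of \eqref{EqCorJacksonMultiK} vanish), so that $c > 0$. \textbf{Step 1 (LCU identity).} Since $H$ is Hermitian with $H^2 = I_2$, the conjugating factor in \eqref{EqMultiCircuit} is self-inverse, and $\left( H^{\otimes q} \otimes I_2^{\otimes d} \right) \ket{0}_{q+d} = 2^{-q/2} \sum_{i=0}^{2^q-1} \ket{i} \otimes \ket{0}_d$. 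Evaluating \eqref{EqMultiCircuit} on $\ket{0}_{q+d}$ through \eqref{EqMultiQubitOper}, re-applying the Hadamard layer, and pairing with $\bra{0}_{q+d}$, one obtains
\[ c \cdot 2^q \cdot f^{\mathbf{L}_{\mathbf{N},\mathbf{K}}}_{\boldsymbol\theta,\boldsymbol\phi}(\mathbf{x}) \;=\; c \sum_{i=0}^{\mathfrak{n}-1} \prod_{j=1}^d f^{2|n_{i,j}|}_{\theta_{\mathbf{n}_i,j},\phi_{\mathbf{n}_i,j}}(x_j), \qquad \mathbf{x} \in \mathbb{R}^d. \]
Here one uses the orthonormality of the computational basis (so that the $X^{\otimes d}$-padding terms with index $i \ge \mathfrak{n}$ drop out, the all-zeros and all-ones states of the $d$-qubit register being orthogonal) together with $\bra{0}_d \left( A_1 \otimes \cdots \otimes A_d \right) \ket{0}_d = \prod_{j=1}^d \bra{0} A_j \ket{0}$; this is precisely the linear-combination-of-unitaries structure of \cite{childs12, yu25}. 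It then suffices to choose the single-qubit parameters so that the right-hand side reproduces a good trigonometric-polynomial approximant of $f$.

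\textbf{Step 2 (multivariate Jackson).} For each $j$, let $\mathcal{J}_j$ denote the univariate higher-order Jackson operator of order $\lceil (K_j+3)/2 \rceil$ acting on the variable $x_j$, i.e.\ the operator used to prove Theorem~\ref{thrmJacksonK}: it is a convolution in $x_j$; its range consists of trigonometric polynomials of degree at most $\lceil (K_j+3)/2 \rceil \lfloor N_j/2 \rfloor$ in $x_j$ (the $j$-th component of $\mathbf{L}_{\mathbf{N},\mathbf{K}}$ from \eqref{eqL}); it leaves the dependence on the remaining variables unchanged; it has operator norm at most $2^{K_j+1}-1$ on $C_{2\pi}(\mathbb{R}^d)$; and it satisfies $\Vert h - \mathcal{J}_j h \Vert_\infty \le C_{K_j}\, \omega_{\partial_j^{K_j} h}\big(\tfrac{1}{N_j}\big) / N_j^{K_j}$ with $C_{K_j}$ independent of $h$. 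Set $P := \mathcal{J}_d \cdots \mathcal{J}_1 f$. Telescoping and applying the univariate bound in the $j$-th slot to $g_j := \mathcal{J}_{j-1} \cdots \mathcal{J}_1 f$ gives $\Vert f - P \Vert_\infty \le \sum_{j=1}^d C_{K_j}\, \omega_{\partial_j^{K_j} g_j}\big(\tfrac{1}{N_j}\big) / N_j^{K_j}$. Since each $\mathcal{J}_i$ with $i \ne j$ is a convolution in $x_i$, it commutes with $\partial_j^{K_j}$ and enlarges the modulus of continuity in the $x_j$-direction by at most the factor $\Vert \mathcal{J}_i \Vert \le 2^{K_i+1}-1$, whence $\omega_{\partial_j^{K_j} g_j}(\delta) \le \big( \prod_{i<j} (2^{K_i+1}-1) \big) \omega_{\partial_j^{K_j} f}(\delta)$. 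Absorbing the $\mathbf{K}$-dependent factors into one $f$-independent constant $C_{\mathbf{K}}$ yields $\Vert f - P \Vert_\infty \le C_{\mathbf{K}} \sum_{j=1}^d \omega_{\partial_j^{K_j} f}\big(\tfrac{1}{N_j}\big) / N_j^{K_j}$, and when $\partial_j^{K_j+1} f$ is continuous the elementary estimate $\omega_h(\delta) \le \delta \Vert h' \Vert_\infty$ upgrades this to $C_{\mathbf{K}} \sum_{j=1}^d \Vert \partial_j^{K_j+1} f \Vert_\infty / N_j^{K_j+1}$. Finally $\Vert P \Vert_\infty \le \prod_{j=1}^d (2^{K_j+1}-1) \Vert f \Vert_\infty = c$, and $P(\mathbf{x}) = \sum_{-\mathbf{L}_{\mathbf{N},\mathbf{K}} \le \mathbf{n} \le \mathbf{L}_{\mathbf{N},\mathbf{K}}} a_{\mathbf{n}}\, e^{\mathbf{i} \mathbf{n}^\top \mathbf{x}}$.

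\textbf{Step 3 (matching and conclusion).} Every Fourier coefficient satisfies $|a_{\mathbf{n}_i}| \le \Vert P \Vert_\infty \le c$. Put $\nu_{i,1} := a_{\mathbf{n}_i}/c$ (of modulus $\le 1$) and $\nu_{i,j} := 1$ for $j \ge 2$, so that $\prod_{j=1}^d \nu_{i,j} = a_{\mathbf{n}_i}/c$. By the single-qubit realizability results of \cite{yu22, ylwang23} underlying Theorem~\ref{thrmJacksonK} --- a depth-$2m$ circuit \eqref{EqDefSingleCircuit} realizes, via \eqref{EqOutSingleCircuit}, any one-variable trigonometric polynomial of degree at most $m$ with supremum norm at most $1$, in particular the scaled monomial $\nu\, e^{\mathbf{i} m x}$ whenever $|\nu| \le 1$ --- and since $|n_{i,j}|$ does not exceed the $j$-th component of $\mathbf{L}_{\mathbf{N},\mathbf{K}}$, one may choose the parameters $(\theta_{\mathbf{n}_i,j}, \phi_{\mathbf{n}_i,j})$ so that $f^{2|n_{i,j}|}_{\theta_{\mathbf{n}_i,j},\phi_{\mathbf{n}_i,j}}(x_j) = \nu_{i,j}\, e^{\mathbf{i} n_{i,j} x_j}$. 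Substituting into the identity of Step 1 gives
\[ c \cdot 2^q \cdot f^{\mathbf{L}_{\mathbf{N},\mathbf{K}}}_{\boldsymbol\theta,\boldsymbol\phi}(\mathbf{x}) = c \sum_{i=0}^{\mathfrak{n}-1} \prod_{j=1}^d \nu_{i,j}\, e^{\mathbf{i} n_{i,j} x_j} = \sum_{i=0}^{\mathfrak{n}-1} a_{\mathbf{n}_i}\, e^{\mathbf{i} \mathbf{n}_i^\top \mathbf{x}} = P(\mathbf{x}), \]
so the left-hand side of \eqref{EqCorJacksonMultiK} equals $\Vert f - P \Vert_\infty$, and the bounds of Step 2 finish the proof.

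\textbf{Expected main obstacle.} Steps 1 and 3 are essentially bookkeeping, once the LCU identity is unwound and the exact class of single-qubit-realizable trigonometric polynomials is pinned down (this is exactly the ingredient already needed for the univariate Theorem~\ref{thrmJacksonK}; should that class be narrower than ``degree $\le m$, supremum norm $\le 1$'', one instead assembles $P$ directly out of admissible one-variable building blocks). The technical core is Step 2: checking that composing the operators $\mathcal{J}_i$ over the remaining coordinates inflates neither the supremum norm nor the relevant directional modulus of continuity of $\partial_j^{K_j} f$ by more than an $f$-independent (though $\mathbf{K}$-, and hence $d$-, dependent) constant, so that the error genuinely telescopes into $\sum_{j=1}^d \omega_{\partial_j^{K_j} f}(1/N_j)/N_j^{K_j}$ with a single constant $C_{\mathbf{K}}$.
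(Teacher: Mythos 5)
Your proof is correct and follows the same overall route as the paper: produce a trigonometric polynomial $P$ of degree $\mathbf{L}_{\mathbf{N},\mathbf{K}}$ with $\Vert P \Vert_\infty \le c$ via the iterated Jackson operators, realize $P$ through the LCU structure of \eqref{EqMultiCircuit}--\eqref{EqMultiCircuitOut} with single-qubit subcircuits from Proposition~\ref{PropSingleQubit}, and combine with the multivariate Jackson estimate. Your Steps~1 and~3 together reproduce Proposition~\ref{proposition-multivariate} (including the observation that the padding terms vanish since $\bra{0}_d X^{\otimes d}\ket{0}_d = 0$, and the device of putting the full coefficient $a_{\mathbf{n}_i}/c$ on the first register and $1$ on the rest), while the paper's device of working with $g := f/c$ from the outset versus your carrying $c$ along and dividing the Fourier coefficients is purely presentational.

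The one substantive difference is in Step~2: where the paper invokes the multivariate Jackson inequality as a black box (Proposition~\ref{propJacksonMultiK}, cited from \cite[Theorem~6.6]{lorentz1966}), you re-derive it by telescoping $f - \mathcal{J}_d\cdots\mathcal{J}_1 f = \sum_{j=1}^d (g_j - \mathcal{J}_j g_j)$ and controlling $\omega_{\partial_j^{K_j} g_j}$ using that each $\mathcal{J}_i$ ($i\neq j$) commutes with $\partial_j^{K_j}$ and with translations and has operator norm $\le 2^{K_i+1}-1$, so the directional modulus is inflated by at most $\prod_{i<j}(2^{K_i+1}-1)$. This is a correct, self-contained derivation of exactly the cited estimate (indeed it is the standard proof of that theorem), and it is the same argument the paper already uses in a weaker form in the proof of Theorem~\ref{thrmJacksonMultiK} to obtain the norm bound $\Vert \mathcal{T}_{\mathbf{N},\mathbf{K}} g\Vert_\infty \le 1$. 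The hedge you raise at the end regarding the exact class of single-qubit-realizable trigonometric polynomials is resolved affirmatively by Proposition~\ref{PropSingleQubit}: degree $\le L$ and $|T|\le 1$ suffice, so the scaled monomials $\nu\, e^{\mathbf{i} m x}$ with $|\nu|\le 1$ are indeed admissible and your matching step goes through without modification.
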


\begin{remark}[\textbf{Complexity analysis}]
	For simplicity, we set $N := N_1 = \dots=N_d$, $K:=K_1= \dots=K_d$ and $k:=K+1$. In the case where $f \in C_{2\pi}(\mathbb{R}^d)$ has all $k$-th continuous partial derivatives, we have that $N^k = \mathcal{O}(d\epsilon^{-1})$, and therefore $N = \mathcal{O}((d\epsilon^{-1})^{1/k})$, where $\e > 0$ is the prescribed error tolerance level. The number of parameters $\Theta$ in our QNN is of order
	\begin{equation}
		\label{upperboundw}
		\Theta = \mathcal{O}\left( \sum\limits_{-\mathbf{L} \leq \mathbf{n} \leq \mathbf{L}} \Vert \mathbf{n} \Vert_1 \right) = \mathcal{O}(d(Nk)\cdot\mathfrak{n}) = \mathcal{O}(d(Nk)(2Nk+1)^d).
	\end{equation}
	Substituting $N = \mathcal{O}((d\epsilon^{-1})^{1/k})$ into \eqref{upperboundw}, we have
	\begin{equation*}
		\Theta = \mathcal{O}(d\cdot3^d\cdot d^{(d+1)/k} \cdot k^{d+1} \cdot \epsilon^{-(d+1)/k}) = \mathcal{O}(3^d d^{(d+k+1)/k}k^{d+1}\cdot \epsilon^{-(d+1)/k}).
	\end{equation*} 
	Furthermore, the number of ancilla qubits used in this set-up is
	\begin{equation*}
		q = \mathcal{O}(\log_2(\mathfrak{n})) = \mathcal{O}\left(\frac{d}{k}\log_2(d\epsilon^{-1}) + d\log_2(k)\right) = \mathcal{O}\left(\frac{d}{k}\log_2(d) + d\log_2(k)+\frac{d}{k}\log_2(\epsilon^{-1})\right).
	\end{equation*}
	The total number of qubits is just $$d+q = \mathcal{O}\left(d+\frac{d}{k}\log_2(d) + d\log_2(k)+\frac{d}{k}\log_2(\epsilon^{-1})\right).$$
\end{remark}

\begin{remark}[\textbf{Comparison with previous works}]
	For fixed $d, k\in\mathbb{N}$, we compare the number of parameters as well as the number of qubits used to \cite{gonon23}. There, for any prescribed error level $\e>0$, the authors achieve the $L^{\infty}$-approximation error on a hypercube $[-M,M]^d$ using a QNN with $\mathcal{O}(\epsilon^{-2})$ parameters and $\mathcal{O}(\log_2(\epsilon^{-1}))$ qubits for any continuous function $f \in C(\mathbb{R})$ with Fourier transform satisfying
	\begin{equation}
		\label{conditionLukas}
		\int_{\mathbb{R}^d} \Vert\mathbf{\xi}\Vert_2^2 \vert\widehat{f}(\mathbf{\xi})\vert \,d \mathbf{\xi} < \infty.
	\end{equation}
	Here, in our paper, we consider $2\pi$-periodic functions. More precisely, when $d=1$ and $f$ is $k$-times continuously differentiable with a bounded $k$-th derivative, we obtain the $L^\infty$-approximation error $\epsilon > 0$ using $\mathcal{O}(\epsilon^{-1/k})$ parameters and only one qubit. Otherwise, if $d\in\mathbb{N}\cap[2,\infty)$ and $f$ has $k$-th continuous partial derivatives on all its variables, we obtain the approximation error $\epsilon > 0$ using $\mathcal{O}(\epsilon^{-(d+1)/k})$ parameters and $\mathcal{O}(\log_2(\epsilon^{-1}))$ qubit. When $k > d$, our QNN will only need to use $\mathcal{O}(\epsilon^{-1})$ parameters and $\mathcal{O}(\log_2(\epsilon^{-1}))$ qubits to obtain the approximation error $\epsilon > 0$. Hence, when restricting to periodic functions, we require asymptotically fewer parameters and the same number of qubits (as $\epsilon \rightarrow 0$) without having to impose any integrability on their Fourier transforms. 
\end{remark}

\section{Numerical Experiments}

In this section, we present an explicit algorithm for approximating both univariate and multivariate periodic functions using quantum neural networks (QNNs). Moreover, we illustrate in various numerical examples the approximation powers of QNNs.\footnote{The numerical experiments have been implemented in \texttt{Python} using the package \texttt{qiskit} on a HPC (high-performing computing) cluster of ETH Zurich. The code can be found under the following link: https://github.com/tranvietkhoa/Quantum-Neural-Network-via-Jackson-Inequality}

\subsection{Approximation of univariate periodic functions}

First, we consider the approximation of univariate periodic functions by single-qubit QNNs of the form \eqref{EqDefSingleCircuit}--\eqref{EqOutSingleCircuit}. In Algorithm~\ref{AlgUni}, we describe in pseudocode how the parameters $\theta \in \mathbb{R}^{2L+1}$ and $\phi \in \mathbb{R}^{2L+2}$ of the single-qubit QNN $U^{2L}_{\theta,\phi}$ can be computed to obtain the approximation rate \eqref{EqCorJacksonK} in Theorem~\ref{thrmJacksonK}.

\begin{small}
	\begin{algorithm}[!ht]
		\DontPrintSemicolon
		\KwInput{$K \in \mathbb{N}_0$, $N \in \mathbb{N}$, and a periodic function $f \in C_{2\pi}(\mathbb{R})$ that is $K$-times differentiable.}
		\KwOutput{QNN-based approximation $f^{2L}_{\theta,\phi}(x)$ of the form \eqref{EqOutSingleCircuit} with parameters $\theta \in \mathbb{R}^{2L+1}$ and $\phi \in \mathbb{R}^{2L+2}$ satisfying~\eqref{EqCorJacksonK}.}
		
		Define $c := (2^{K+1}-1) \Vert f \Vert_\infty \geq 0$, $r_K := \big\lceil \frac{K+3}{2} \big\rceil$, and $L := \lceil \frac{K+3}{2} \rceil \lfloor \frac{N}{2} \rfloor \in \mathbb{N}_0$.
		
		For every $n = -L,-L+1,\dots,L$ compute the Fourier coefficient $\widehat{f}(n) := \frac{1}{2\pi} \int_{-\pi}^\pi e^{-\mathbf{i} n x} f(x) dx$. 
		
		Construct the trigonometric polynomial $x \mapsto (\mathcal{T}_{N,K} f)(x) = \frac{2\pi}{\lambda_{N,K}} \sum\limits_{n=-r_K \lfloor N/2 \rfloor}^{r_K \lfloor N/2 \rfloor} m_{n,K} \widehat{f}(n) e^{\mathbf{i} n x}$ with $m_{n,K}$ defined in Proposition~\ref{LemmaJacksonK} and $\lambda_{N,K}$ defined below \eqref{EqDefKernelK}.
		
		Compute the parameters $\theta \in \mathbb{R}^{2L+1}$ and $\phi \in \mathbb{R}^{2L+2}$ via \cite[Algorithm~3]{ylwang23}, where the input is the trigonometric polynomial $x \mapsto \frac{(\mathcal{T}_{N,K} f)(x)}{c}$.
		
		Construct QNN $x \mapsto U^{2L}_{\theta,\phi}(x)$ as defined in \eqref{EqDefSingleCircuit}.
		
		\KwRet $x \mapsto f^{2L}_{\theta, \phi}(x):= \bra{0} U^{2L}_{\theta, \phi}(x)\ket{0}$.
		\caption{Learning$\:$a$\:$univariate$\:$periodic$\:$function$\:$by$\:$a$\:$quantum$\:$neural network$\:$(QNN)}
		\label{AlgUni}
	\end{algorithm}
\end{small}

Moreover, we investigate in two numerical experiments how the smoothness of the function affects the approximation error. To this end, we approximate the continuous but non-differentiable function $f_1(x) := \vert \sin(x) \vert$ and the twice but not three times differentiable function $f_{2.5}(x) := \vert \sin(x) \vert^{2.5}$ by a single-qubit QNN via $f^{2L}_{\theta,\phi}$, defined in \eqref{EqOutSingleCircuit}. According to Theorem~\ref{thrmJacksonK}, they should have approximation rates of order $\omega_{f_1}(N^{-1})$ and $N^{-2} \omega_{f''_{2.5}}(N^{-1})$, respectively. This is reflected in Figures~\ref{FigUni1}--\ref{FigUni2}, where the approximation rate of the smoother function $f_{2.5}$ decays more rapidly than $f_1$.

\begin{figure}
	\begin{minipage}[t]{0.49\textwidth}
		\centering
		\includegraphics[height=6.0cm]{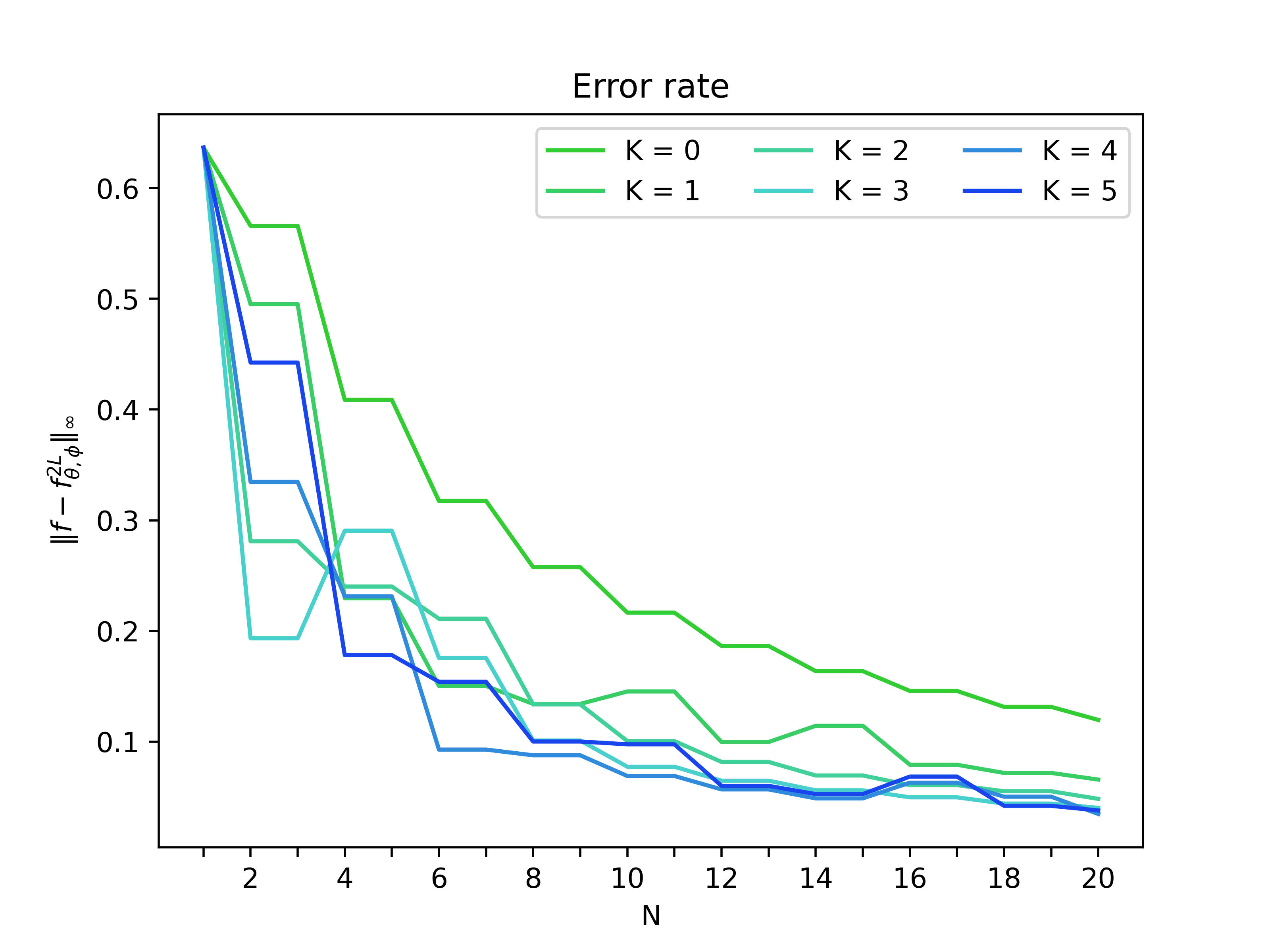}
		
		{\small {\bf (a)} Error rate}
		\vspace{0.2cm}
	\end{minipage}
	\begin{minipage}[t]{0.49\textwidth}
		\centering
		\includegraphics[height=6.0cm]{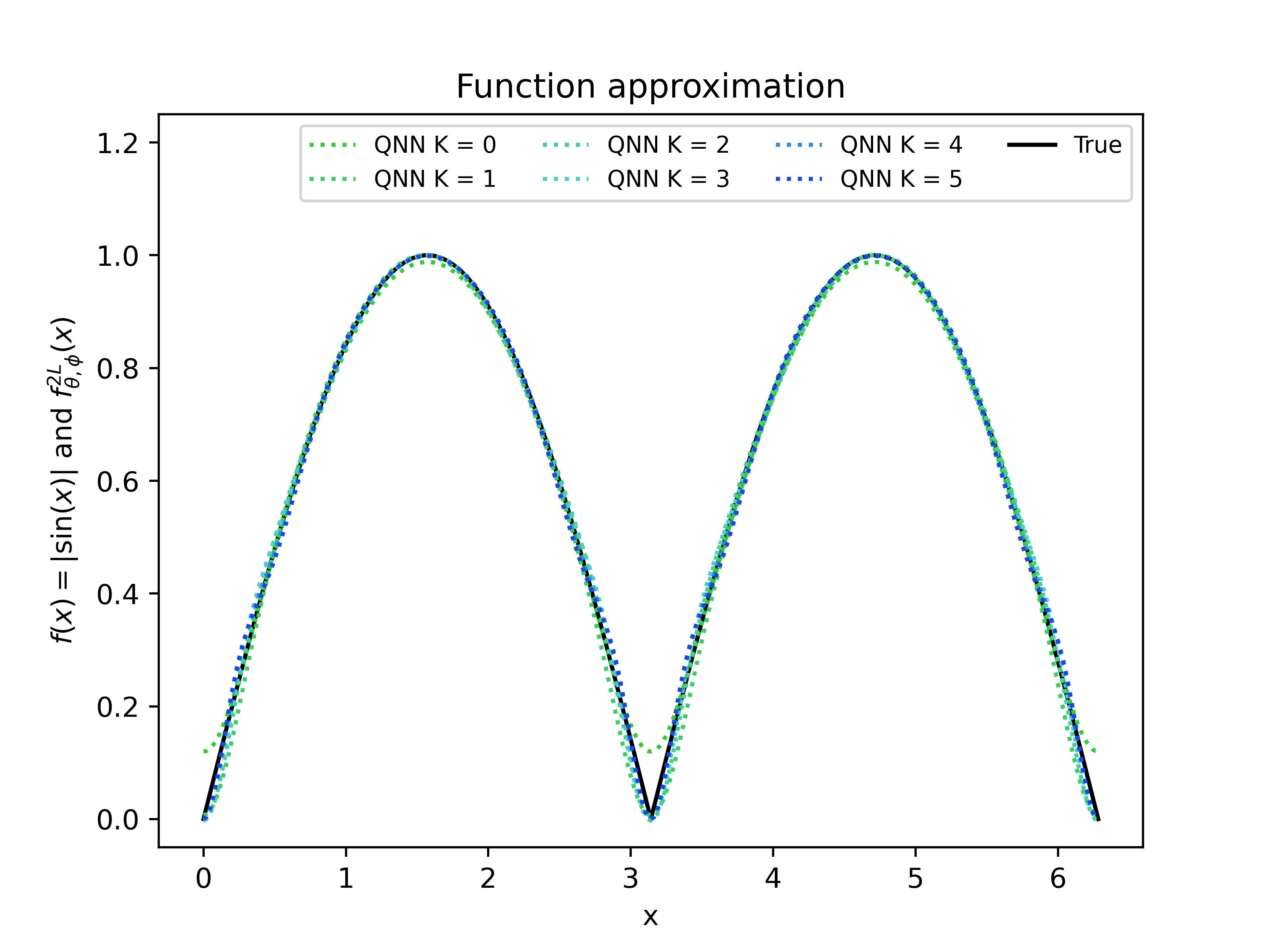}
		
		{\small {\bf (b)} Function approximation}
		\vspace{0.2cm}
	\end{minipage}
	\caption{Approximating the continuous but non-differentiable function $f_1(x) := \vert \sin(x) \vert$ by a quantum neural network (QNN) $U^{2L}_{\theta,\phi}$ with $L := \lceil \frac{K+3}{2} \rceil \lfloor \frac{N}{2} \rfloor$. In (a), the approximation error $\Vert f_1 - f^{2L}_{\theta,\phi} \Vert_\infty$ is displayed against $N \in \lbrace 1,\dots,20 \rbrace$, for $K \in \lbrace 0,\dots,5 \rbrace$. In (b), the function $f_1$ and its QNN-based approximation $f^{2L}_{\theta,\phi}$ are shown, for $N = 20$ and $K \in \lbrace 0,\dots,5 \rbrace$.}
	\label{FigUni1}
\end{figure}		

\begin{figure}
	\begin{minipage}[t]{0.49\textwidth}
		\centering
		\includegraphics[height=6.0cm]{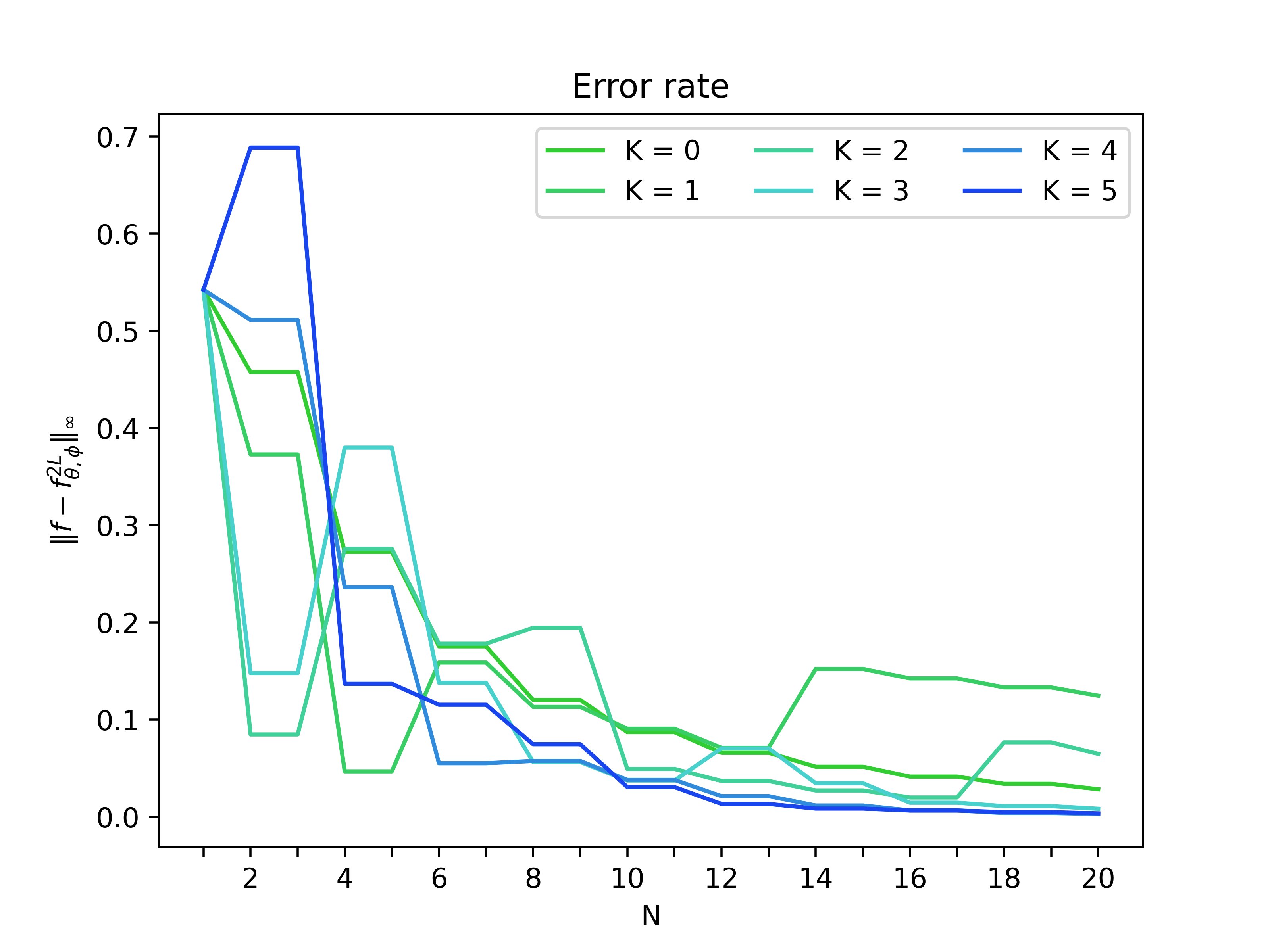}
		
		{\small {\bf (a)} Error rate}
		\vspace{0.2cm}
	\end{minipage}
	\begin{minipage}[t]{0.49\textwidth}
		\centering
		\includegraphics[height=6.0cm]{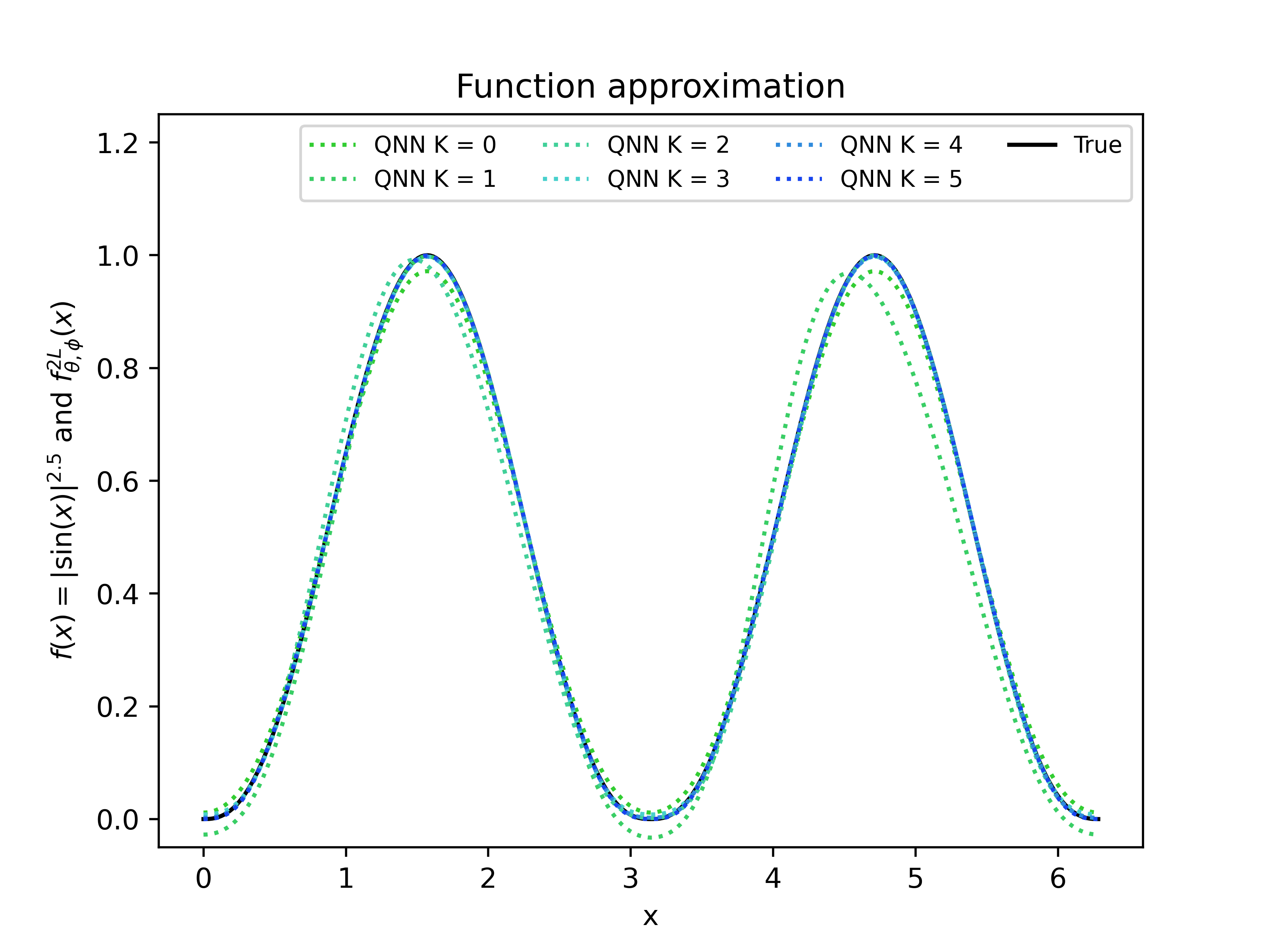}
		
		{\small {\bf (b)} Function approximation}
		\vspace{0.2cm}
	\end{minipage}
	\caption{Approximating the twice but not three times differentiable function $f_{2.5}(x) := \vert \sin(x) \vert^{2.5}$ by a quantum neural network (QNN) $U^{2L}_{\theta,\phi}$ with $L := \lceil \frac{K+3}{2} \rceil \lfloor \frac{N}{2} \rfloor$. In (a), the approximation error $\Vert f_{2.5} - f^{2L}_{\theta,\phi} \Vert_\infty$ is displayed against $N \in \lbrace 1,\dots,20 \rbrace$, for $K \in \lbrace 0,\dots,5 \rbrace$. In (b), the function $f_{2.5}$ and its QNN-based approximation $f^{2L}_{\theta,\phi}$ are shown, for $N = 20$ and $K \in \lbrace 0,\dots,5 \rbrace$.}
	\label{FigUni2}
\end{figure}

\subsection{Approximation of multivariate periodic functions}

In this section, we study the approximation of multivariate periodic function by quantum neural networks (QNNs) as defined in~\eqref{EqMultiQubitOper}--\eqref{EqMultiCircuit}. First, we describe in Algorithm~\ref{AlgMulti} using pseudocode how to find the parameters $\boldsymbol\theta := (\theta_\mathbf{n})_{-\mathbf{L}_{\mathbf{N}, \mathbf{K}} \leq \mathbf{n} \leq \mathbf{L}_{\mathbf{N}, \mathbf{K}}} \in \bigtimes_{-\mathbf{L}_{\mathbf{N}, \mathbf{K}} \leq \mathbf{n} \leq \mathbf{L}_{\mathbf{N}, \mathbf{K}}} \bigtimes_{j=1}^d \mathbb{R}^{2\vert n_j \vert+1}$ and $\boldsymbol\phi := (\phi_\mathbf{n})_{-\mathbf{L}_{\mathbf{N}, \mathbf{K}}\leq \mathbf{n} \leq \mathbf{L}_{\mathbf{N}, \mathbf{K}}} \in \bigtimes_{-\mathbf{L}_{\mathbf{N}, \mathbf{K}} \leq \mathbf{n} \leq \mathbf{L}_{\mathbf{N}, \mathbf{K}}} \bigtimes_{j=1}^d \mathbb{R}^{2\vert n_j \vert+2}$ of a $(d+q)$-qubit QNN $U^{\mathbf{L}}_{\boldsymbol\theta,\boldsymbol\phi}$ such that the approximation rate~\eqref{EqCorJacksonMultiK} in Theorem~\ref{thrmJacksonMultiK} holds true.

\begin{small}
	\begin{algorithm}[!ht]
		\DontPrintSemicolon
		\KwInput{$\mathbf{K} := (K_1, \dots,K_d) \in \mathbb{N}_0^d$, $\mathbf{N} = (N_1,\dots,N_d) \in \mathbb{N}^d$, and a periodic function $f \in C_{2\pi}(\mathbb{R}^d)$ that admits a $K_j$-th continuous partial derivative $\partial_j^{K_j} f$ into each direction $j = 1,\dots,d$.}
		\KwOutput{QNN-based approximation $f^{\mathbf{L}_{\mathbf{N}, \mathbf{K}}}_{\boldsymbol\theta,\boldsymbol\phi}$ of the form \eqref{EqMultiCircuitOut} with parameters $\boldsymbol\theta := (\theta_\mathbf{n})_{-\mathbf{L}_{\mathbf{N}, \mathbf{K}} \leq \mathbf{n} \leq \mathbf{L}_{\mathbf{N}, \mathbf{K}}}$ and $\boldsymbol\phi := (\phi_\mathbf{n})_{-\mathbf{L}_{\mathbf{N}, \mathbf{K}}\leq \mathbf{n} \leq \mathbf{L}_{\mathbf{N}, \mathbf{K}}}$ satisfying \eqref{EqCorJacksonMultiK}.}
		
		Define $c := \Vert f \Vert_\infty \prod_{j=1}^d (2^{K_j+1} - 1) \geq 0$, $\mathbf{L}_{\mathbf{N},\mathbf{K}} \in \mathbb{N}_0^d$ as in \eqref{eqL}, $\mathfrak{n} := \left|\{-\mathbf{L}_{\mathbf{N},\mathbf{K}} \leq \mathbf{n} \leq \mathbf{L}_{\mathbf{N},\mathbf{K}}\}\right|$ and $q:=\left\lceil\log_2(\mathfrak{n})\right\rceil$.
		
		For every $-\mathbf{L}_{\mathbf{N},\mathbf{K}} \leq \mathbf{n} \leq \mathbf{L}_{\mathbf{N},\mathbf{K}}$ compute the Fourier coefficient $\widehat{f}(\mathbf{n}) := \frac{1}{2\pi} \int_{-\pi}^\pi e^{-\mathbf{i} \mathbf{n}^\top \mathbf{x}} f(\mathbf{x}) d\mathbf{x}$. 
		
		Construct the trigonometric polynomial $\mathbf{x} \mapsto \frac{(\mathcal{T}_{\mathbf{N},\mathbf{K}} f)(\mathbf{x})}{c}  = \frac{(2\pi)^d}{c\left(\prod_{j=1}^d \lambda_{N_j,r_j}\right)} \sum_\mathbf{n} \mathbf{m}_{\mathbf{n},\mathbf{K}} \widehat{f}(\mathbf{n}) e^{\mathbf{i} \mathbf{n}^\top \mathbf{x}}$ with $\mathbf{m}_{\mathbf{n},\mathbf{K}}$ defined in Proposition~\ref{LemmaJacksonMultiK} and $\lambda_{N_j,r_j}$ defined below \eqref{EqDefKernelK}.
		
		\For{ $-\mathbf{L}_{\mathbf{N},\mathbf{K}} \leq \mathbf{n} \leq \mathbf{L}_{\mathbf{N},\mathbf{K}}$}{
			\For {j = 1,\dots,d}{
				\If {j = 1}{
					Compute the parameters $\theta_{\mathbf{n},1} \in \mathbb{R}^{2\vert n_1 \vert+1}$ and $\phi_{\mathbf{n},1} \in \mathbb{R}^{2\vert n_1 \vert+2}$ via \cite[Algorithm~3]{ylwang23}, where the input is the complex-valued trigonometric polynomial $x_1 \mapsto c_\mathbf{n}e^{\mathbf{i}n_1x_1}$.
					
					Construct single-qubit QNN $x_1 \mapsto U^{2\vert n_1 \vert}_{\theta_{\mathbf{n},1}, \phi_{\mathbf{n},1}}(x_1)$ as defined in \eqref{EqDefSingleCircuit}.
				}
				\Else {
					Compute the parameters $\theta_{\mathbf{n},j} \in \mathbb{R}^{2\vert n_j \vert+1}$ and $\phi_{\mathbf{n},j} \in \mathbb{R}^{2\vert n_j \vert+2}$ via \cite[Algorithm~3]{ylwang23}, where the input is the complex-valued trigonometric polynomial $x_j \mapsto e^{\mathbf{i}n_jx_j}$.
					
					Construct single-qubit QNN $x_j \mapsto U^{2\vert n_j \vert}_{\theta_{\mathbf{n},j}, \phi_{\mathbf{n},j}}(x_j)$ as defined in \eqref{EqDefSingleCircuit}.
				}
			}
			
			Construct $d$-qubit QNN $\mathbf{x} \mapsto U^{\mathbf{n}}_{\theta_{\mathbf{n}}, \phi_{\mathbf{n}}} = \bigotimes\limits_{j=1}^d U^{2\vert n_j \vert}_{\theta_{\mathbf{n},j}, \phi_{\mathbf{n},j}}(x_j)$.
			
		}
		
		Construct $\mathbf{x} \mapsto C^{\mathbf{L}_{\mathbf{N},\mathbf{K}}}_{\boldsymbol\theta,\boldsymbol\phi}(\mathbf{x}) := \sum_{i = 0}^{\mathfrak{n}-1} \ket{i} \bra{i} \otimes U^{\mathbf{n}_i}_{\theta_{\mathbf{n}_i},\phi_{\mathbf{n}_i}}(\mathbf{x}) 
		+ \sum_{i=\mathfrak{n}}^{2^q-1} \ket{i} \bra{i} \otimes X^{\otimes d}$.
		
		Construct QNN $\mathbf{x} \mapsto U^{\mathbf{L}_{\mathbf{N},\mathbf{K}}}_{\boldsymbol\theta,\boldsymbol\phi}(\mathbf{x}) := \left( H^{\otimes q} \otimes I_2^{\otimes d} \right)^\dagger C^{\mathbf{L}_{\mathbf{N},\mathbf{K}}}_{\boldsymbol\theta,\boldsymbol\phi}(\mathbf{x}) \left( H^{\otimes q} \otimes I_2^{\otimes d} \right)$.
		
		\KwRet  $\mathbf{x} \mapsto f^{\mathbf{L}_{\mathbf{N},\mathbf{K}}}_{\boldsymbol\theta,\boldsymbol\phi}(\mathbf{x}) := \bra{0}_{q+d} U^{\mathbf{L}_{\mathbf{N},\mathbf{K}}}_{\boldsymbol\theta,\boldsymbol\phi}(\mathbf{x}) \ket{0}_{q+d}$.
		\caption{Learning$\:$a$\:$multivariate$\:$periodic$\:$function$\:$by$\:$a$\:$quantum$\:$neural network$\:$(QNN)}
		\label{AlgMulti}
	\end{algorithm}
\end{small}

Moreover, we illustrate in a numerical experiment how quantum neural networks can learn a given multivariate function. To this end, we consider the \textit{heat equation}, which is a fundamental object across different disciplines such as mathematics, physics, and engineering. More precisely, for a $2\pi$-periodic and square-integrable initial condition $g: \mathbb{R}^d \rightarrow \mathbb{R}$, we consider the solution of the parabolic partial differential equation (PDE)
\begin{alignat}{2}
	\label{eqheat}
	\frac{\partial u}{\partial t}(t,\mathbf{x}) - \sum_{i=1}^d \frac{\partial^2 u}{\partial x_i^2}(t,\mathbf{x}) & = 0, \quad\quad & (t,\mathbf{x}) \in (0,T] \times \mathbb{R}^d, \\
	u(0,\mathbf{x}) & = g(\mathbf{x}), & \mathbf{x} \in \mathbb{R}^d.
\end{alignat}
Then, by following \cite[Section~2.3]{evans10}, the solution at time $t > 0$ is given by
\begin{equation*}
	u(t,\mathbf{x}) = (\Phi_t * g)(\mathbf{x}) := \int_{\mathbb{R}^d} \Phi_t(\mathbf{x}-\mathbf{y}) g(\mathbf{y}) d\mathbf{y}, \quad\quad \text{with} \quad\quad \Phi_t(\mathbf{z}) := \frac{1}{(4\pi t)^\frac{d}{2}} e^{-\frac{\Vert \mathbf{z} \Vert^2}{4t}},
\end{equation*}
which is smooth and $2\pi$-periodic. For more details, we refer to \cite[Section~2.3]{evans10}.

For fixed time $t \in (0,T]$, we now approximate in Figure~\ref{FigMulti}--\ref{FigMulti2} the solution of the heat equation $\mathbf{x} \mapsto u(t,\mathbf{x})$ with initial condition $g(\mathbf{x}) := \prod_{j=1}^d g_j(x_j)$, where
\begin{equation}
	\label{eq:heat:init_cond}
	g_j(s) := 
	\begin{cases}
		1, & \text{if } s \in \lbrace (2k\pi,(2k+1)\pi): k \in \mathbb{Z} \rbrace, \\
		0, & \text{if } s \in \lbrace 2k \pi: k \in \mathbb{Z} \rbrace, \\
		-1, & \text{if } s \in \lbrace ((2k-1)\pi,2k\pi): k \in \mathbb{Z} \rbrace.
	\end{cases}
\end{equation}
by a $(d+q)$-qubit QNN $U^{\mathbf{L}}_{\boldsymbol\theta,\boldsymbol\phi}$ via $f^{\mathbf{L}}_{\boldsymbol\theta,\boldsymbol\phi}$ as defined in \eqref{EqMultiCircuitOut}. Indeed, Figure~\ref{FigMulti}--\ref{FigMulti2} empirically demonstrate that $(d+q)$-qubit QNNs can approximate the solution of the heat equation $\mathbf{x} \mapsto u(t,\mathbf{x})$ at different times $t \in \lbrace 0.5, 1 \rbrace$. Moreover, since $\mathbf{x} \mapsto u(t,\mathbf{x})$ is a smooth function (i.e. infinitely many times continuously differentiable), the approximation rate decays more rapidly, as $K$ increases, which is in accordance with Theorem~\ref{thrmJacksonMultiK}. 

\begin{figure}
	\centering
	\begin{minipage}[t]{0.49\textwidth}
		\centering
		\includegraphics[height=6.0cm]{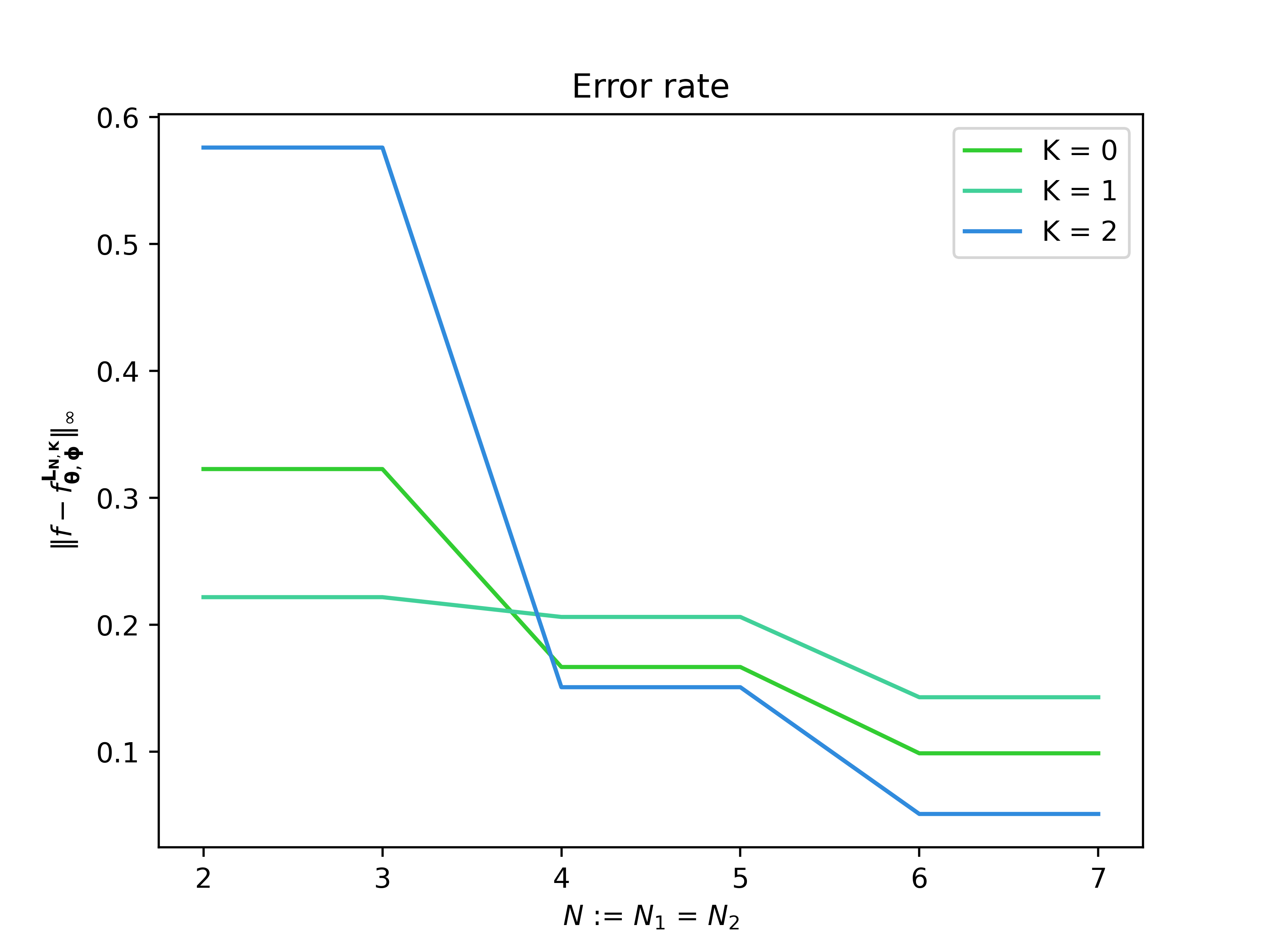}
		
		{\small {\bf (a)} Error rate}
		\vspace{0.2cm}
	\end{minipage}
	\begin{minipage}[t]{0.49\textwidth}
		\centering
		\includegraphics[height=6.0cm]{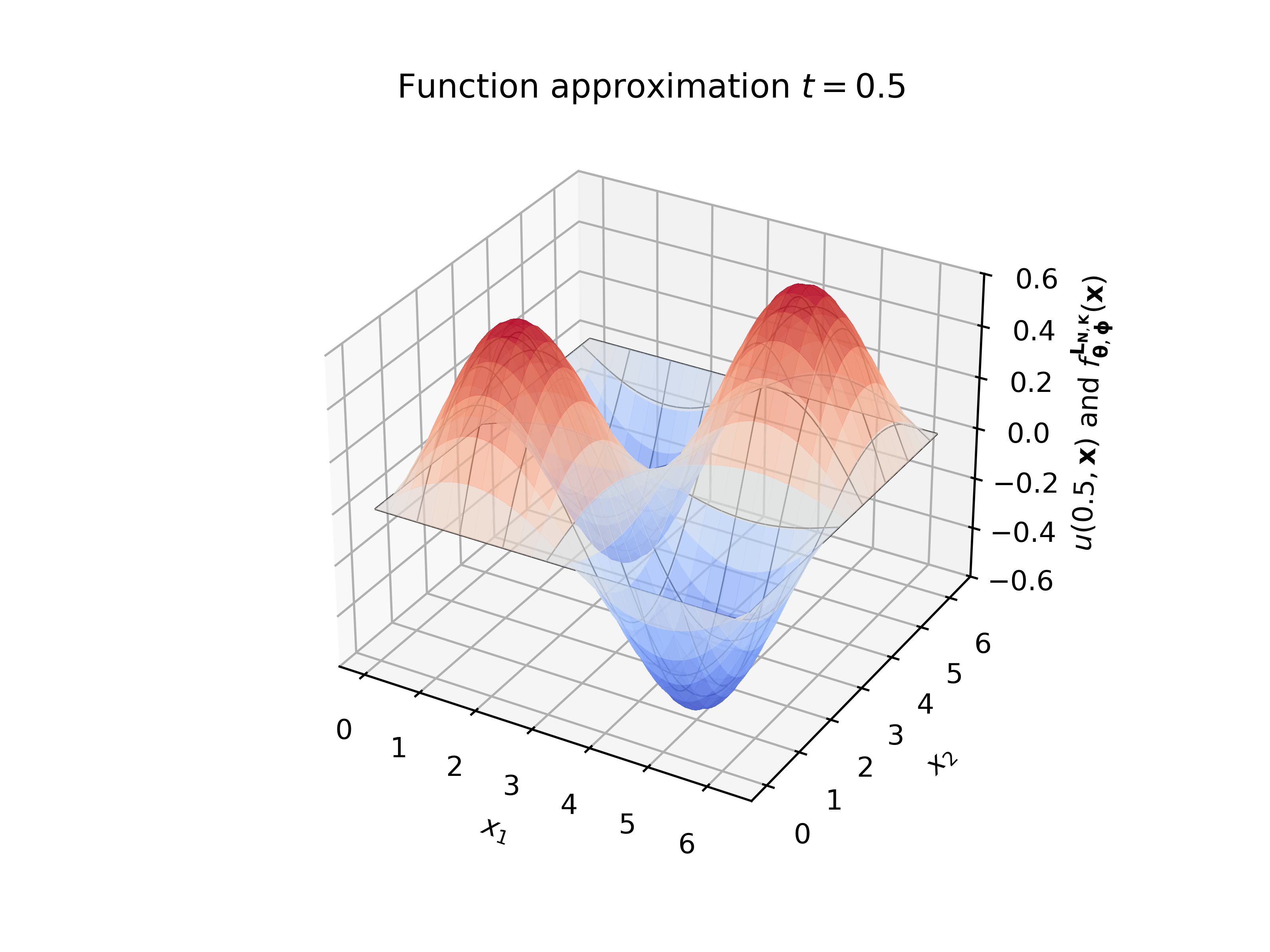}
		
		{\small {\bf (b)} Function approximation}
		\vspace{0.2cm}
	\end{minipage}
	
	\caption{Learning the solution of the heat equation $\mathbf{x} \mapsto u(0.5,\mathbf{x})$ in \eqref{eqheat} by a quantum neural network (QNN) $U^{\mathbf{L}_{\mathbf{N},\mathbf{K}}}_{\boldsymbol\theta,\boldsymbol\phi}$, with $d = 2$ and $g(\mathbf{x}) := \prod_{j=1}^d g(x_j)$, $g_j$ defined in \eqref{eq:heat:init_cond}. In (a), the approximation error $\Vert u(0.5,\mathbf{x}) - f^{\mathbf{L}_{\mathbf{N},\mathbf{K}}}_{\boldsymbol\theta,\boldsymbol\phi} \Vert_\infty$ is displayed against $N := N_1 = N_2 \in \lbrace 2,\dots,7 \rbrace$, for different $K := K_1 = K_2 \in \lbrace 0, 1, 2 \rbrace$. In (b), the function $\mathbf{x} \mapsto u(0.5,\mathbf{x})$ (wireframe) and its QNN-based approximation $\mathbf{x} \mapsto f^{\mathbf{L}_{\mathbf{N},\mathbf{K}}}_{\boldsymbol\theta,\boldsymbol\phi}(\mathbf{x})$ (colormap) are shown, for $N := N_1 = N_2 \in \lbrace 2,\dots,7 \rbrace$ and $K := K_1 = K_2 \in \lbrace 0, 1, 2 \rbrace$.}
	\label{FigMulti}
\end{figure}

\begin{figure}
	\centering
	\begin{minipage}[t]{0.49\textwidth}
		\centering
		\includegraphics[height=6.0cm]{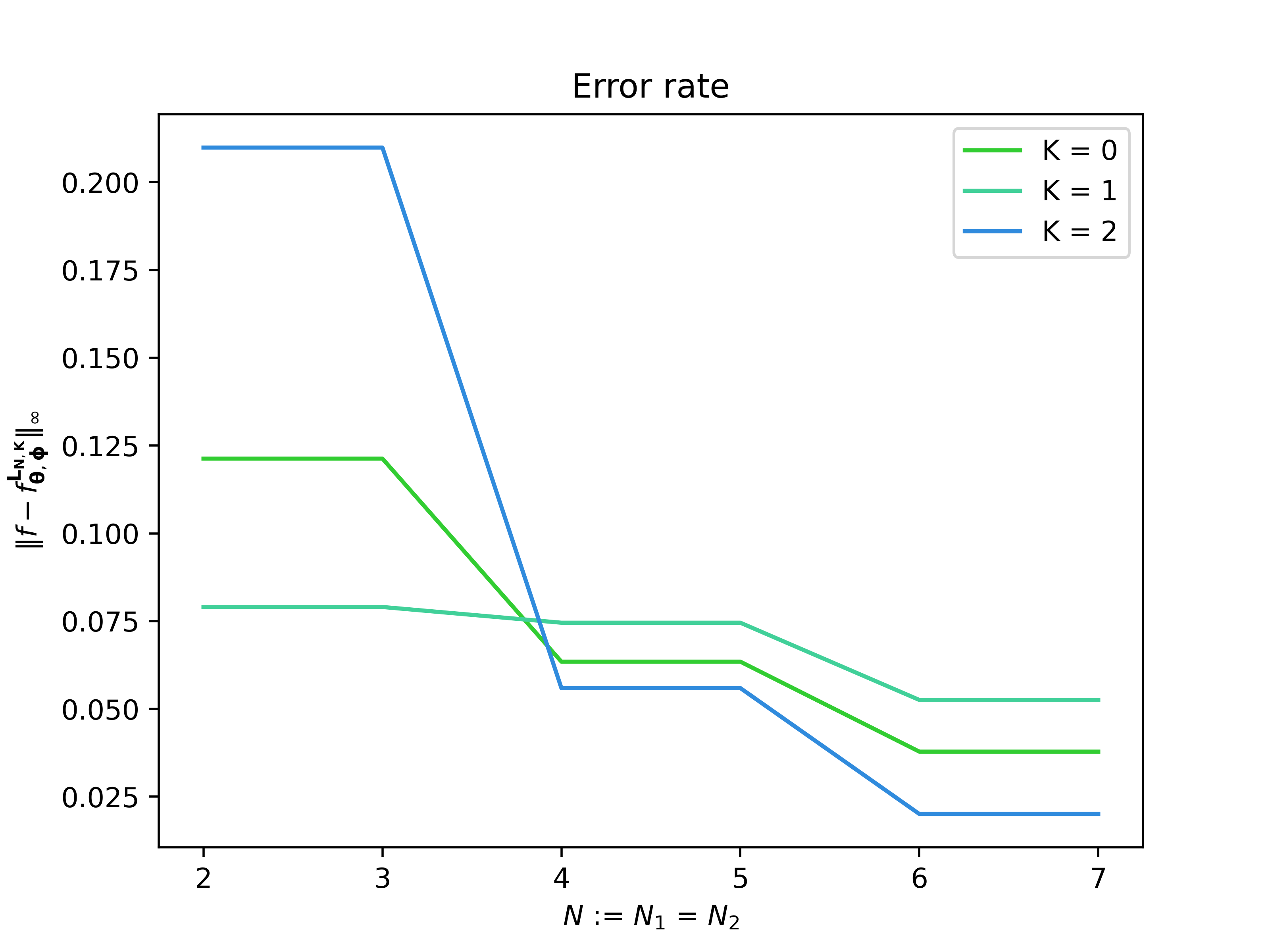}
		
		{\small {\bf (a)} Error rate}
		\vspace{0.2cm}
	\end{minipage}
	\begin{minipage}[t]{0.49\textwidth}
		\centering
		\includegraphics[height=6.0cm]{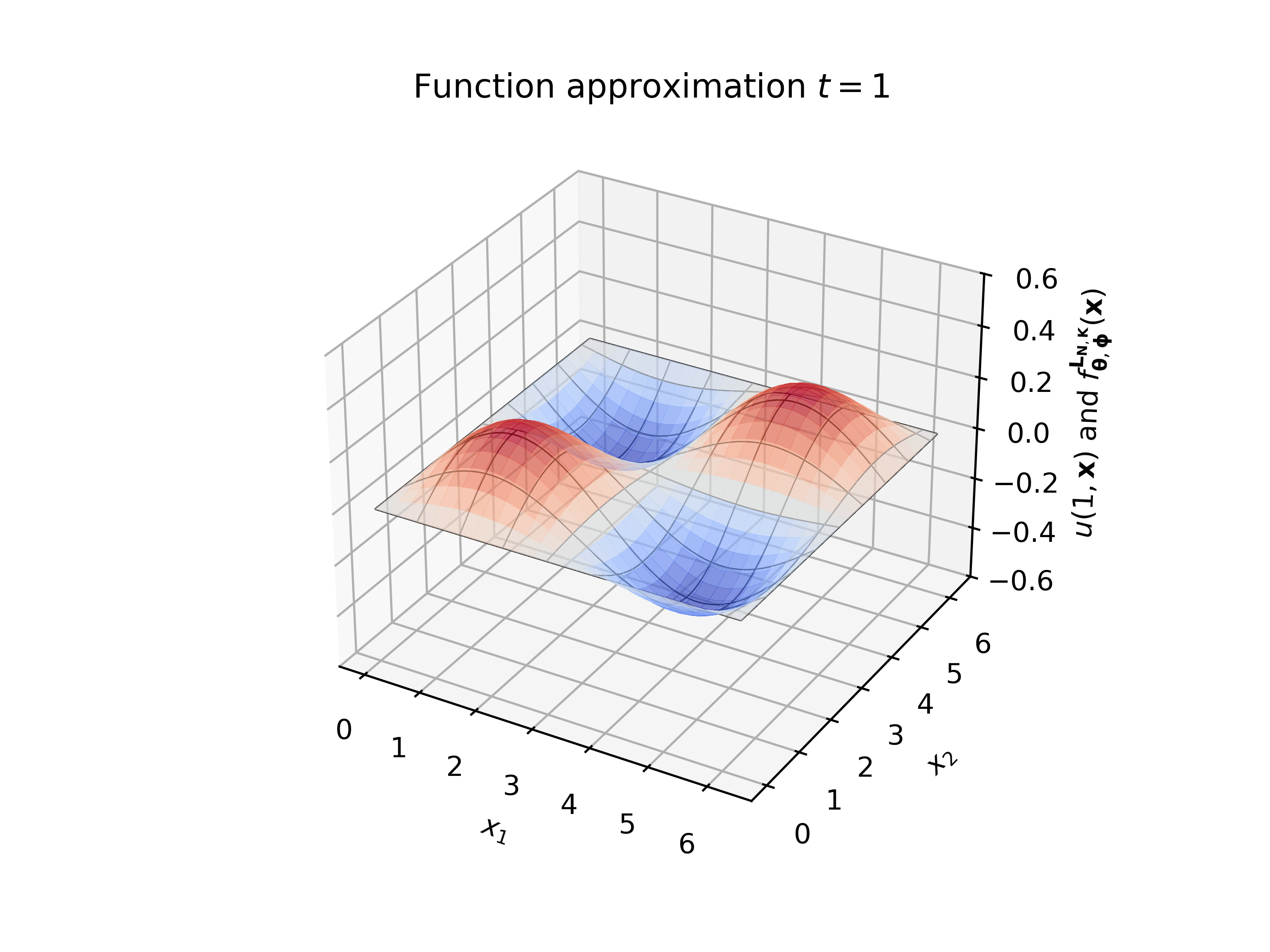}
		
		{\small {\bf (b)} Function approximation}
		\vspace{0.2cm}
	\end{minipage}
	
	\caption{Learning the solution of the heat equation $\mathbf{x} \mapsto u(1,\mathbf{x})$ in \eqref{eqheat} by a quantum neural network (QNN) $U^{\mathbf{L}_{\mathbf{N},\mathbf{K}}}_{\boldsymbol\theta,\boldsymbol\phi}$, with $d = 2$ and $g(\mathbf{x}) := \prod_{j=1}^d g(x_j)$, $g_j$ defined in \eqref{eq:heat:init_cond}. In (a), the approximation error $\Vert u(1,\mathbf{x}) - f^{\mathbf{L}_{\mathbf{N},\mathbf{K}}}_{\boldsymbol\theta,\boldsymbol\phi} \Vert_\infty$ is displayed against $N := N_1 = N_2 \in \lbrace 2,\dots,7 \rbrace$, for different $K := K_1 = K_2 \in \lbrace 0, 1, 2 \rbrace$. In (b), the function $\mathbf{x} \mapsto u(1,\mathbf{x})$ (wireframe) and its QNN-based approximation $\mathbf{x} \mapsto f^{\mathbf{L}_{\mathbf{N},\mathbf{K}}}_{\boldsymbol\theta,\boldsymbol\phi}(\mathbf{x})$ (colormap) are shown, for $N := N_1 = N_2 \in \lbrace 2,\dots,7 \rbrace$ and $K := K_1 = K_2 \in \lbrace 0, 1, 2 \rbrace$.}
	\label{FigMulti2}
\end{figure}

\section{Proof of Theorem~\ref{thrmJacksonK}}

In this section, we present the proof of Theorem~\ref{thrmJacksonK}. For a given real, 
$2\pi$-periodic univariate function $f$, the main idea is to first construct the trigonometric polynomial obtained from the Jackson inequality, which provides an approximation rate depending on the smoothness of $f$. Then, we will make use of recent results in quantum machine learning to show that trigonometric polynomials can be implemented by a suitable single-qubit quantum neural network (QNN) introduced in Section~\ref{sectionSingleQNN}.

\subsection{Univariate trigonometric polynomials and Jackson inequality}

As a cornerstone of classical approximation theory, the Jackson inequality provides an upper bound for the approximation of a continuous periodic function by a trigonometric polynomial. The inequality has been first established by David Jackson \cite{jackson11} and subsequently extended in several directions (see, e.g., \cite{achiezer13,lorentz1966,cheney82}). 

Let us first introduce complex-valued trigonometric polynomials. A function $T: \mathbb{R} \rightarrow \mathbb{C}$ is called a complex-valued \emph{(univariate) trigonometric polynomial} if there exists some $N \in \mathbb{N}_0$ and $a_n,b_n \in \mathbb{C}$, $n = 0,\dots,N$, such that for every $x \in \mathbb{R}$ it holds that
\begin{equation*}
	T(x) = a_0 + \sum_{n=1}^{N}(a_n \cos(nx) + b_n \sin(nx)).
\end{equation*}
In this case, we say that the complex-valued trigonometric polynomial is of degree $N$ if $a_N\neq 0$ or $b_N\neq 0$. Alternatively, since the functions $x \mapsto \cos(nx)$ and $x \mapsto \sin(nx)$ can be written as linear combinations of the functions $x \mapsto e^{\mathbf{i}nx}$ and $x \mapsto e^{-\mathbf{i}nx}$, we obtain the equivalent form for $T: \mathbb{R} \rightarrow \mathbb{C}$ given for every $x \in \mathbb{R}$ as
\begin{equation*}
	T(x) = \sum_{n=-N}^N c_n e^{\mathbf{i}nx},
\end{equation*}
where the coefficients $(c_n)_n$ are given by $c_0 := a_0$ as well as $c_{-n} := \frac{a_n+b_n \mathbf{i}}{2}$ and $c_n := \frac{a_n-b_n \mathbf{i}}{2}$ for $n = 1,\dots,N$.

We now define for every $K \in \mathbb{N}_0$ the number $r_K = \left\lceil \frac{K+3}{2} \right\rceil$. Moreover, we follow \cite[Section~4.3]{lorentz1966} and introduce for every fixed $N \in \mathbb{N}$, $K \in \mathbb{N}_0$ the kernel
\begin{equation}
	\label{EqDefKernelK}
	[-\pi, \pi] \ni t \quad \mapsto \quad J_{N,K}(t) := 
	\begin{cases}
		\frac{1}{\lambda_{N,K}} \left( \frac{\sin\left( \left( \left\lfloor \frac{N}{2} \right\rfloor + 1 \right) \frac{t}{2} \right)}{\sin\left( \frac{t}{2} \right)} \right)^{2r_K}, & t \neq 0, \\
		\frac{1}{\lambda_{N,K}} \left( \left\lfloor \frac{N}{2} \right\rfloor + 1 \right)^{2r_K}, & t = 0,
	\end{cases}
\end{equation}
where $\lambda_{N,K} > 0$ is a normalizing constant such that $\int_{-\pi}^\pi J_{N,K}(t) dt = 1$. Using this, we define for every fixed $K \in \mathbb{N}_0$ and $f \in C_{2\pi}(\mathbb{R})$ the function
\begin{equation}
	\label{EqDefTK}
	\mathbb{R} \ni x \quad \mapsto \quad (\mathcal{T}_{N,K} f)(x) := \int_{-\pi}^\pi J_{N,K}(t) \sum_{k=1}^{K+1} (-1)^{k+1} \binom{K+1}{k} f(x+kt) dt \in \mathbb{R}.
\end{equation}
We first show in Proposition~\ref{LemmaJacksonK} below that $\mathcal{T}_{N,K} f$ is a trigonometric polynomial. To this end, let us present the following elementary identity.

\begin{lemma}
	\label{LemmaDirichletKern}
	For every $a \in \frac{1}{2} \mathbb{Z}$ and $t \in [-\pi,\pi]$, it holds that
	\begin{equation*}
		\sum_{n=-a}^a e^{\mathbf{i} n t} = 
		\begin{cases}
			\frac{\sin\left((2a+1) \frac{t}{2} \right)}{\sin\left( \frac{t}{2} \right)}, & \text{if } t \in [-\pi,\pi] \setminus \lbrace 0 \rbrace,\\
			2a+1, & \text{if } t = 0.
		\end{cases}
	\end{equation*}
\end{lemma}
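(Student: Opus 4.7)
The plan is to prove the identity directly by reducing the sum to a finite geometric series and then symmetrizing the resulting fraction to recognize sines. For $t = 0$, there is nothing to show: each term of the sum equals $1$, and by the indexing convention recalled in the Notation subsection, the number of indices in $\{-a,-a+1,\dots,a-1,a\}$ is exactly $2a+1$, regardless of whether $a$ is an integer or a half-integer.

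For $t \in [-\pi,\pi]\setminus\{0\}$, I would first observe that $e^{\mathbf{i}t}\neq 1$ and $\sin(t/2)\neq 0$, so that the manipulations below are well defined. Factor out the smallest-index exponential $e^{-\mathbf{i}a t}$ to rewrite
\begin{equation*}
\sum_{n=-a}^{a} e^{\mathbf{i} n t} \;=\; e^{-\mathbf{i} a t}\sum_{k=0}^{2a} e^{\mathbf{i} k t},
\end{equation*}
where the upper summation index $2a$ is a nonnegative integer in both the integer and the half-integer case (using that $a \in \tfrac{1}{2}\mathbb{Z}$ and that by the indexing convention the sum contains $2a+1$ equally spaced terms with common ratio $e^{\mathbf{i}t}$). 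Applying the standard geometric series formula $\sum_{k=0}^{M} z^k = (z^{M+1}-1)/(z-1)$ with $z = e^{\mathbf{i} t}$ and $M = 2a$ yields
\begin{equation*}
\sum_{n=-a}^{a} e^{\mathbf{i} n t}
\;=\; \frac{e^{\mathbf{i}(a+1)t}-e^{-\mathbf{i}a t}}{e^{\mathbf{i}t}-1}.
\end{equation*}

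Finally, I would multiply numerator and denominator by $e^{-\mathbf{i}t/2}$ to obtain
\begin{equation*}
\sum_{n=-a}^{a} e^{\mathbf{i} n t}
\;=\; \frac{e^{\mathbf{i}(2a+1)t/2}-e^{-\mathbf{i}(2a+1)t/2}}{e^{\mathbf{i}t/2}-e^{-\mathbf{i}t/2}}
\;=\; \frac{2\mathbf{i}\sin\!\big((2a+1)t/2\big)}{2\mathbf{i}\sin(t/2)}
\;=\; \frac{\sin\!\big((2a+1)t/2\big)}{\sin(t/2)},
\end{equation*}
which is the asserted expression. There is no genuine obstacle here; the only point that requires a small amount of care is consistency between the two cases $a\in\mathbb{Z}$ and $a\in\tfrac{1}{2}+\mathbb{Z}$, which is handled uniformly by the indexing convention from the Notation subsection so that in both cases the sum contains exactly $2a+1$ equally spaced exponentials.
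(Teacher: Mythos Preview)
Your proof is correct and follows essentially the same approach as the paper: factor out $e^{-\mathbf{i}at}$, sum the resulting geometric series, and rewrite the quotient using $\sin(\theta)=\frac{1}{2\mathbf{i}}(e^{\mathbf{i}\theta}-e^{-\mathbf{i}\theta})$. The only cosmetic difference is that the paper applies the identity $1-e^{\mathbf{i}\theta}=-2\mathbf{i}e^{\mathbf{i}\theta/2}\sin(\theta/2)$ to numerator and denominator separately, whereas you multiply both by $e^{-\mathbf{i}t/2}$; these are the same manipulation.
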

\begin{proof}
	Fix some $a \in \frac{1}{2} \mathbb{Z}$ and $t \in [-\pi,\pi]$. If $t \neq 0$, we use the formula of geometric series and that $\sin(\theta/2) = \frac{1}{2\mathbf{i}} \left( e^{\mathbf{i}\theta/2} - e^{-\mathbf{i}\theta/2} \right)$ implies $1-e^{\mathbf{i} \theta} = -2\mathbf{i} e^{\mathbf{i} \theta/2} \sin(\theta/2)$ for any $\theta \in \mathbb{R}$ to conclude that
	\begin{equation*}
		\begin{aligned}
			\sum_{n=-a}^a e^{\mathbf{i} n t} & = \sum_{k=0}^{2a} e^{\mathbf{i} (-a+k) t} = e^{-\mathbf{i} a t} \sum_{k=0}^{2a} \left( e^{\mathbf{i} t} \right)^k = e^{-\mathbf{i} a t} \frac{1-e^{\mathbf{i} (2a+1) t}}{1-e^{\mathbf{i} t}} \\
			& = e^{-\mathbf{i} a t} \frac{-2\mathbf{i} e^{\mathbf{i} (2a+1) t/2} \sin\left((2a+1) \frac{t}{2} \right)}{-2\mathbf{i} e^{\mathbf{i} t/2} \sin\left( \frac{t}{2} \right)} = \frac{\sin\left((2a+1) \frac{t}{2} \right)}{\sin\left( \frac{t}{2} \right)}.
		\end{aligned}
	\end{equation*}
	Otherwise, if $t = 0$, it holds that
	\begin{equation*}
		\sum_{n=-a}^a e^{\mathbf{i} n t} = e^{-\mathbf{i} a t} \sum_{k=0}^{2a} \underbrace{e^{\mathbf{i} k t}}_{=1} = 2a+1 = \lim_{t \rightarrow 0} \frac{\sin\left((2a+1) \frac{t}{2} \right)}{\sin\left( \frac{t}{2} \right)},
	\end{equation*}
	which completes the proof.
\end{proof}

\begin{proposition}
	\label{LemmaJacksonK}
	Let $f \in C_{2\pi}(\mathbb{R})$, let $N\in \mathbb{N}$, and $K \in \mathbb{N}_0$. Then, for every $x \in \mathbb{R}$, it holds that
	\begin{equation*}
		(\mathcal{T}_{N,K} f)(x) = \frac{2\pi}{\lambda_{N,K}} \sum_{n=-r_K \lfloor N/2 \rfloor}^{r_K \lfloor N/2 \rfloor} m_{n,K} \widehat{f}(n) e^{\mathbf{i} n x},
	\end{equation*}
	where $m_{n,K} := \sum\limits_{k \in [1,K+1] \cap \mathbb{N}: \atop k \vert n \vert \leq r_K \lfloor N/2 \rfloor} (-1)^{k+1} \binom{K+1}{k} \widetilde{m}_{k \vert n \vert}$ for $n \in \mathbb{Z}$, with $(\widetilde{m}_l)_{l \in \mathbb{Z}}$\footnote{Note that $\widetilde{m}_l$ can be calculated using the inclusion-exclusion formula.} given by $$\widetilde{m}_l := \left\vert\left\lbrace \mathbf{m} \in \lbrace -\lfloor N/2 \rfloor/2, -\lfloor N/2 \rfloor/2+1,\dots,\lfloor N/2 \rfloor/2 \rfloor \rbrace^{2r_K}: \sum\limits_{j=1}^{2r_K} \mathbf{m}_j = l \right\rbrace\right\vert$$ for $l \in \mathbb{Z}.$
\end{proposition}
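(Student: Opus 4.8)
The plan is to rewrite the Jackson kernel $J_{N,K}$ as an honest trigonometric polynomial in the variable $t$, substitute this into the definition \eqref{EqDefTK} of $\mathcal{T}_{N,K}f$, and then evaluate the finitely many resulting integrals $\int_{-\pi}^{\pi} e^{\mathbf{i} l t} f(x+kt)\, dt$ explicitly in terms of the Fourier coefficients $\widehat f(n)$.

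\textbf{Step 1 (expanding the kernel).} I would apply Lemma~\ref{LemmaDirichletKern} with $a := \lfloor N/2 \rfloor / 2 \in \frac{1}{2}\mathbb{Z}$, for which $2a+1 = \lfloor N/2 \rfloor + 1$, to obtain $\frac{\sin((\lfloor N/2\rfloor+1)t/2)}{\sin(t/2)} = \sum_{n=-\lfloor N/2\rfloor/2}^{\lfloor N/2\rfloor/2} e^{\mathbf{i}nt}$ for $t \in [-\pi,\pi]\setminus\{0\}$, and note that this extends to $t=0$ by continuity (both sides equal $\lfloor N/2\rfloor+1$ there). Raising this identity to the $2r_K$-th power, expanding by the multinomial theorem, and then collecting the terms according to the value $l := \sum_{j=1}^{2r_K} \mathbf{m}_j$ of the sum of the exponents — which ranges over $\{-r_K\lfloor N/2\rfloor, \dots, r_K\lfloor N/2\rfloor\}$ and is always an integer since $2r_K$ is even — yields $J_{N,K}(t) = \frac{1}{\lambda_{N,K}} \sum_{l=-r_K\lfloor N/2\rfloor}^{r_K\lfloor N/2\rfloor} \widetilde m_l\, e^{\mathbf{i}lt}$ with the coefficients $\widetilde m_l$ precisely as in the statement; note also $\widetilde m_l = \widetilde m_{-l}$, via the bijection $\mathbf{m} \mapsto -\mathbf{m}$ of the index set.

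\textbf{Step 2 (the key integral).} Substituting the kernel expansion into \eqref{EqDefTK} and interchanging the two finite sums (over $l$ and over $k = 1,\dots,K+1$) with the integral — legitimate by linearity of the integral — reduces everything to computing $I_{k,l}(x) := \int_{-\pi}^{\pi} e^{\mathbf{i}lt} f(x+kt)\, dt$. Substituting $u = kt$ gives $I_{k,l}(x) = \frac{1}{k}\int_{-k\pi}^{k\pi} e^{\mathbf{i}(l/k)u} f(x+u)\, du$. If $k \mid l$, the integrand $u \mapsto e^{\mathbf{i}(l/k)u} f(x+u)$ is $2\pi$-periodic, so the integral over the $k$ periods inside $[-k\pi,k\pi]$ equals $k$ times the integral over $[-\pi,\pi]$, which after a shift of the integration interval (using $2\pi$-periodicity of $f$) is $2\pi\, \widehat f(-l/k)\, e^{-\mathbf{i}(l/k)x}$; if $k \nmid l$, write $\zeta := e^{2\pi\mathbf{i}l/k} \neq 1$ (a $k$-th root of unity, since $l \in \mathbb{Z}$), split $[-k\pi,k\pi]$ into $k$ sub-intervals $I_0,\dots,I_{k-1}$ of length $2\pi$, use periodicity to get $\int_{I_j} = \zeta^j \int_{I_0}$, and conclude $I_{k,l}(x) = 0$ because $\sum_{j=0}^{k-1}\zeta^j = 0$. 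This route sidesteps any appeal to pointwise convergence of the Fourier series of the merely continuous $f$.

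\textbf{Step 3 (reassembling).} Inserting these values, only the pairs $(k,l)$ with $k \mid l$ survive; reindexing by $n := -l/k$ (so $l = -kn$, and $\widetilde m_l = \widetilde m_{-kn} = \widetilde m_{k|n|}$ by Step~1), with the surviving constraint $|l| \le r_K\lfloor N/2\rfloor$ becoming $k|n| \le r_K\lfloor N/2\rfloor$, and finally swapping the order of the $k$- and $n$-summations, collects exactly the coefficient $m_{n,K} = \sum_{k \in [1,K+1]\cap\mathbb{N},\, k|n| \le r_K\lfloor N/2\rfloor} (-1)^{k+1}\binom{K+1}{k}\widetilde m_{k|n|}$ in front of $\widehat f(n)\,e^{\mathbf{i}nx}$, which is the asserted identity. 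I expect the main obstacle to be purely the bookkeeping in Steps~2--3 — tracking that $l$ is always an integer (so that $\zeta^k = 1$ and the root-of-unity cancellation applies), and verifying that the final summation range of $n$ is exactly $|n| \le r_K\lfloor N/2\rfloor$ with the inner $k$-sum correctly truncated.
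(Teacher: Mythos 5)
Your proof is correct, and while it shares the overall skeleton with the paper's argument (expand the Jackson kernel via Lemma~\ref{LemmaDirichletKern} and the multinomial theorem into a finite trigonometric sum with coefficients $\widetilde{m}_l$, then evaluate the resulting integrals and regroup), your Step~2 genuinely departs from the paper at the crucial point. The paper computes $\int_{-\pi}^{\pi} e^{\mathbf{i}nt}f(x+kt)\,dt$ by inserting the Fourier expansion $f(y)=\sum_{j\in\mathbb{Z}}\widehat{f}(j)e^{\mathbf{i}jy}$ and interchanging the infinite sum with the integral; this tacitly relies on pointwise (indeed nicely convergent) Fourier series, which for a merely continuous $2\pi$-periodic $f$ is a delicate matter --- the Fourier series of a continuous function need not converge pointwise --- so as written it really needs a summability or an $L^2$/Parseval justification. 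Your route replaces this with the substitution $u=kt$ plus the observation that either the integrand becomes $2\pi$-periodic (when $k\mid l$) and one period suffices, or a root-of-unity geometric sum forces the integral to vanish (when $k\nmid l$). This is strictly more elementary and requires nothing beyond continuity and periodicity of $f$, sidestepping the convergence issue entirely; that is a real improvement in rigor. Your use of the symmetry $\widetilde{m}_l=\widetilde{m}_{-l}$ (via $\mathbf{m}\mapsto-\mathbf{m}$) to reconcile the sign in $\widehat{f}(-l/k)e^{-\mathbf{i}(l/k)x}$ with the stated $\widehat{f}(n)e^{\mathbf{i}nx}$ is also cleaner than the paper, whose intermediate display $\int_{-\pi}^{\pi}e^{\mathbf{i}(n-jk)t}\,dt$ in fact contains a sign slip (it should be $n+jk$), invisible in the final result for exactly the symmetry reason you identify. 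Both proofs land on the same coefficient $m_{n,K}$ after the reindexing and exchange of the $k$- and $n$-sums.
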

\begin{proof}
	Fix some $f \in C_{2\pi}(\mathbb{R})$, let $N \in \mathbb{N}$, and $K \in \mathbb{N}_0$. Then, we first use $f(y) = \sum_{j \in \mathbb{Z}} \widehat{f}(j) e^{\mathbf{i} j y}$ for any $y \in \mathbb{R}$ and $\int_{-\pi}^\pi e^{-\mathbf{i} l t} dt = 2\pi \mathds{1}_{\lbrace 0 \rbrace}(l)$ for any $l \in \mathbb{Z}$ to conclude for every $k,n \in \mathbb{N}$ and $x \in \mathbb{R}$ that 
	\begin{equation}
		\label{EqLemmaJacksonKProof1}
		\int_{-\pi}^\pi e^{\mathbf{i} n t} f(x+kt) dt = \sum_{j \in \mathbb{Z}} \widehat{f}(j) \left( \int_{-\pi}^\pi e^{\mathbf{i} (n-jk) t} dt \right) e^{\mathbf{i} j x} =
		\begin{cases}
			2\pi \widehat{f}\left( \frac{n}{k} \right) e^{\mathbf{i} \frac{n}{k} x}, & \text{if } \frac{n}{k} \in \mathbb{Z}, \\
			0, & \text{otherwise}.
		\end{cases}
	\end{equation}
	Moreover, by inserting \eqref{EqDefKernelK}--\eqref{EqDefTK} and by using Lemma \ref{LemmaDirichletKern}, we obtain for every $x \in \mathbb{R}$ that
	\begin{equation*}
		\begin{aligned}
			(\mathcal{T}_{N,K} f)(x) & = \frac{1}{\lambda_{N,K}} \int_{-\pi}^\pi \left( \frac{\sin\left( \left( \left\lfloor \frac{N}{2} \right\rfloor + 1 \right) \frac{t}{2} \right)}{\sin\left( \frac{t}{2} \right)} \right)^{2r_K} \sum_{k=1}^{K+1} (-1)^{k+1} \binom{K+1}{k} f(x+kt) dt \\
			& = \frac{1}{\lambda_{N,K}} \sum_{k=1}^{K+1} (-1)^{k+1} \binom{K+1}{k} \int_{-\pi}^\pi \left( \sum_{n=-\lfloor N/2 \rfloor/2}^{\lfloor N/2 \rfloor/2} e^{\mathbf{i} n t} \right)^{2r_K} f(x+kt) dt.
		\end{aligned}
	\end{equation*}
	Hence, by using the multinomial theorem (with coefficients $(\widetilde{m}_n)_{n\in\mathbb{Z}}$), the identity \eqref{EqLemmaJacksonKProof1}, and reordering the sum $\widetilde{n} \mapsto n/k \in \mathbb{Z}$, it follows for every $x \in \mathbb{R}$ that
	\begin{equation*}
		\begin{aligned}
			(\mathcal{T}_{N,K} f)(x) & = \frac{1}{\lambda_{N,K}} \sum_{k=1}^{K+1} (-1)^{k+1} \binom{K+1}{k} \sum_{n=-r_K \lfloor N/2 \rfloor}^{r_K \lfloor N/2 \rfloor} \widetilde{m}_n \int_{-\pi}^\pi e^{\mathbf{i} n t} f(x+kt) dt \\
			& = \frac{2\pi}{\lambda_{N,K}} \sum_{n=-r_K \lfloor N/2 \rfloor}^{r_K \lfloor N/2 \rfloor} \sum_{k=1}^{K+1} (-1)^{k+1} \binom{K+1}{k} \widetilde{m}_n \mathds{1}_\mathbb{Z}\left( \frac{n}{k} \right) \widehat{f}\left( \frac{n}{k} \right) e^{\mathbf{i} \frac{n}{k} x} \\ 
			& = \frac{2\pi}{\lambda_{N,K}} \sum_{\widetilde{n}=-r_K \lfloor N/2 \rfloor}^{r_K \lfloor N/2 \rfloor} \sum_{k=1 \atop k \vert \widetilde{n} \vert \leq r \lfloor N/2 \rfloor}^{K+1} (-1)^{k+1} \binom{K+1}{k} \widetilde{m}_{k\widetilde{n}} \widehat{f}(\widetilde{n}) e^{\mathbf{i} \widetilde{n} x}, \\
			& = \frac{2\pi}{\lambda_{N,K}} \sum_{n=-r_K \lfloor N/2 \rfloor}^{r_K \lfloor N/2 \rfloor} m_{n,K} \widehat{f}(n) e^{\mathbf{i} n x},
		\end{aligned}
	\end{equation*}
	which completes the proof.
\end{proof}

Now, we recall the Jackson inequality for $K$-times continuously differentiable periodic functions.

\begin{proposition}[{\!\!\cite[Theorem~4.3, p.~57]{lorentz1966}}]
	\label{propJacksonK}
	For $K \in \mathbb{N}_0$, let $f \in C_{2\pi}(\mathbb{R})$ be $K$-times continuously differentiable. Then, there exists a constant $C_K > 0$ (independent of $f$) such that for every $N \in \mathbb{N}$ it holds that
	\begin{equation}
		\label{EqThmJacksonK}
		\sup_{x \in \mathbb{R}} \left\vert f(x) - (\mathcal{T}_{N,K} f)(x) \right\vert \leq \frac{C_K \omega_{f^{(k)}}\left( \frac{1}{N} \right)}{N^K}.
	\end{equation}
	Moreover, if $f \in C_{2\pi}(\mathbb{R})$ is $(K+1)$-times continuously differentiable, then the right-hand side of \eqref{EqThmJacksonK} can be upper bounded by $\frac{C_K \Vert f^{(K+1)} \Vert_\infty}{N^{K+1}}$, where $f^{(K+1)}$ denotes the $(K+1)^{\text{th}}$ derivative of~$f$.
\end{proposition}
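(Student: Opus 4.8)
The plan is to exploit that $\mathcal{T}_{N,K}$ combines the nonnegative Jackson kernel $J_{N,K}$ with a finite-difference pattern whose weights sum to one, so that the approximation error is a kernel-weighted average of high-order differences of $f$. First I would record the identity $\sum_{k=1}^{K+1}(-1)^{k+1}\binom{K+1}{k} = 1$, which is immediate from $(1-1)^{K+1}=0$, and recall that $\int_{-\pi}^\pi J_{N,K}(t)\,dt = 1$ by the choice of $\lambda_{N,K}$. Together these give, for every $x \in \mathbb{R}$,
\begin{equation*}
	f(x) - (\mathcal{T}_{N,K} f)(x) = \int_{-\pi}^\pi J_{N,K}(t)\left( f(x) - \sum_{k=1}^{K+1}(-1)^{k+1}\binom{K+1}{k} f(x+kt)\right)dt,
\end{equation*}
and the bracket equals $(-1)^{K+1}\Delta_t^{K+1} f(x)$, where $\Delta_t^{m} g(x) := \sum_{k=0}^{m}(-1)^{m-k}\binom{m}{k} g(x+kt)$ denotes the $m$-th forward difference of $g$ with step $t$. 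Hence $\big\vert f(x) - (\mathcal{T}_{N,K} f)(x)\big\vert \leq \int_{-\pi}^\pi J_{N,K}(t)\,\big\vert \Delta_t^{K+1} f(x)\big\vert\,dt$ for all $x \in \mathbb{R}$.

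Next I would bound $\Delta_t^{K+1} f$ pointwise. Using $\Delta_t^K g(x) = \int_0^t\!\cdots\!\int_0^t g^{(K)}(x+u_1+\cdots+u_K)\,du_1\cdots du_K$ for $K$-times continuously differentiable $g$, together with $\Delta_t^{K+1} f = \Delta_t(\Delta_t^K f)$, one obtains $\big\vert \Delta_t^{K+1} f(x)\big\vert \leq \vert t\vert^K\,\omega_{f^{(K)}}(\vert t\vert)$, and, when $f$ is in addition $(K+1)$-times continuously differentiable, the sharper bound $\big\vert \Delta_t^{K+1} f(x)\big\vert \leq \vert t\vert^{K+1}\,\Vert f^{(K+1)}\Vert_\infty$. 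Combining the first estimate with the elementary dilation inequality $\omega_{f^{(K)}}(\vert t\vert) \leq (1 + N\vert t\vert)\,\omega_{f^{(K)}}(1/N)$ reduces the first assertion to the kernel moment bounds
\begin{equation*}
	\int_{-\pi}^\pi J_{N,K}(t)\,\vert t\vert^m\,dt = \mathcal{O}\big(N^{-m}\big), \qquad m \in \{K,\,K+1\},
\end{equation*}
and the second estimate reduces the refinement directly to this bound with $m = K+1$.

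The main obstacle is proving these moment bounds for the Jackson kernel. Setting $\ell := \lfloor N/2\rfloor + 1$, I would use $\vert\sin(t/2)\vert \geq \vert t\vert/\pi$ and $\vert\sin(\ell t/2)\vert \leq \min\{1,\,\ell\vert t\vert/2\}$ on $[-\pi,\pi]$. Integrating over $\{\vert t\vert \leq 1/N\}$, where the unnormalized kernel is of order $\ell^{2r_K} \asymp N^{2r_K}$, yields the lower bound $\lambda_{N,K} \geq c_K\,N^{2r_K-1}$; splitting $\int_{-\pi}^\pi \big(\tfrac{\sin(\ell t/2)}{\sin(t/2)}\big)^{2r_K}\vert t\vert^m\,dt$ at $\vert t\vert = 1/N$ and estimating the near-origin and tail pieces separately yields an upper bound of order $N^{2r_K-m-1}$, which is finite and of the correct size exactly because $m \leq 2r_K-2$ — and this is precisely where the choice $r_K = \lceil (K+3)/2\rceil$ enters, since then $2r_K \geq K+3 \geq m+2$. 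Dividing the two estimates gives $\int_{-\pi}^\pi J_{N,K}(t)\,\vert t\vert^m\,dt = \mathcal{O}(N^{-m})$. Finally, $\int J_{N,K}(t)\vert t\vert^K\,dt = \mathcal{O}(N^{-K})$ together with $N\int J_{N,K}(t)\vert t\vert^{K+1}\,dt = \mathcal{O}(N^{-K})$ proves \eqref{EqThmJacksonK}, and $\int J_{N,K}(t)\vert t\vert^{K+1}\,dt = \mathcal{O}(N^{-(K+1)})$ proves the refinement.
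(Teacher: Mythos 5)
The paper states this proposition as a direct citation of Lorentz's Theorem~4.3 and supplies no proof of its own, so there is no in-paper argument to compare against; what you have written is a reconstruction of the cited classical result. Your argument is correct and follows the standard Jackson-kernel route that underlies the reference: using $\sum_{k=1}^{K+1}(-1)^{k+1}\binom{K+1}{k}=1$ and $\int_{-\pi}^\pi J_{N,K}=1$ to telescope to $f(x)-(\mathcal{T}_{N,K}f)(x)=(-1)^{K+1}\int J_{N,K}(t)\,\Delta_t^{K+1}f(x)\,dt$, bounding the $(K{+}1)$-st forward difference by $|t|^K\omega_{f^{(K)}}(|t|)$ (respectively $|t|^{K+1}\Vert f^{(K+1)}\Vert_\infty$) via its iterated-integral representation, inserting the dilation bound $\omega_{f^{(K)}}(|t|)\leq(1+N|t|)\omega_{f^{(K)}}(1/N)$, and finishing with the kernel moment estimates $\int J_{N,K}(t)|t|^m\,dt=\mathcal{O}(N^{-m})$ for $m\in\{K,K{+}1\}$. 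You correctly identify the role of $r_K=\lceil(K+3)/2\rceil$: the condition $2r_K-2\geq K+1$ is exactly what makes $\int_{1/N}^\pi t^{m-2r_K}\,dt=\mathcal{O}(N^{2r_K-m-1})$ for $m\leq K+1$, and paired with $\lambda_{N,K}\gtrsim N^{2r_K-1}$ this yields the $\mathcal{O}(N^{-m})$ moments. The only spot that deserves an explicit line is the lower bound on $\lambda_{N,K}$: on $\{|t|\leq 1/N\}$ one has $(\lfloor N/2\rfloor+1)|t|/2\leq\pi/2$, so $|\sin(\ell t/2)|\geq\tfrac{2}{\pi}\cdot\tfrac{\ell|t|}{2}$ while $|\sin(t/2)|\leq|t|/2$, giving the unnormalized kernel $\geq(2\ell/\pi)^{2r_K}$ there; your phrasing "is of order $\ell^{2r_K}$" is the right idea but elides this step. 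Otherwise the proof is complete and sound.
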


\subsection{Error bound by quantum neural network}

In this section, we follow the ideas of \cite{yu22,ylwang23} and recall that every bounded complex-valued trigonometric polynomial can be represented as the output of the single-qubit QNN of the form \eqref{EqDefSingleCircuit}.

\begin{proposition}[{\!\!\cite[Corollary~2]{ylwang23}}]
	\label{PropSingleQubit}
	For $L \in \mathbb{N}_0$, let $T: \mathbb{R} \rightarrow \mathbb{C}$ be a complex-valued trigonometric polynomial of degree $L$ satisfying $|T(x)| \leq 1$ for all $x \in \mathbb{R}$. Then, there exist parameters $\theta = (\theta_0, \theta_1, \dots, \theta_{2L}) \in \mathbb{R}^{2L+1}$ and $\phi = (\varphi, \phi_0, \phi_1, \dots, \phi_{2L}) \in \mathbb{R}^{2L+2}$ such that for every $x \in \mathbb{R}$ it holds that
	\begin{equation*}
		f^{2L}_{\theta,\phi}(x) := \Bra{0} U^{2L}_{\theta,\phi}(x) \Ket{0} = T(x).
	\end{equation*}
\end{proposition}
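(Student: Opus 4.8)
The plan is to realize $T$ as the $(0,0)$-entry of a depth-$2L$ circuit by a quantum-signal-processing type synthesis, adapted to the $R_Y$--$R_Z$ parametrization of \eqref{EqDefSingleCircuit}. I would first record the algebraic structure of the circuit in the frequency variable. Setting $z := e^{\mathbf{i}x/2}$ and $w := z^2 = e^{\mathbf{i}x}$, one has $R_Z(x) = z^{-1}\operatorname{diag}(1,w)$; writing $A_l := R_Y(\theta_l)R_Z(\phi_l)\in SU(2)$ for $l=1,\dots,2L$ and $A_0 := R_Z(\varphi)R_Y(\theta_0)R_Z(\phi_0)\in SU(2)$, the circuit becomes $U^{2L}_{\theta,\phi}(x) = z^{-2L}\,A_0\, M(w)$, where $M(w) := \prod_{l=1}^{2L}\operatorname{diag}(1,w)A_l$ is a $2\times 2$ matrix of polynomials of degree $\le 2L$ with $\det M(w) = w^{2L}$ and $M(e^{\mathbf{i}x})$ unitary for all real $x$. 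Consequently $f^{2L}_{\theta,\phi}(x) = \bra{0}U^{2L}_{\theta,\phi}(x)\ket{0} = w^{-L}\cdot(\text{polynomial in }w\text{ of degree} \le 2L)$ is automatically a trigonometric polynomial of degree $\le L$ in $x$, and $\Vert f^{2L}_{\theta,\phi}\Vert_\infty \le 1$ by unitarity of $U^{2L}_{\theta,\phi}(x)$. This matches the hypotheses on $T$; the content of the proposition is the converse, namely that every such $T$ is attained.

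For the converse I would proceed in three steps. \textbf{Complementary polynomial.} Since $1 - |T(x)|^2$ is a nonnegative trigonometric polynomial of degree $\le 2L$, the Fej\'er--Riesz theorem yields a trigonometric polynomial $S$ of degree $\le L$ with $|T(x)|^2 + |S(x)|^2 = 1$ for all $x\in\mathbb{R}$; then $V(x) := \begin{pmatrix} T(x) & -\overline{S(x)} \\ S(x) & \overline{T(x)} \end{pmatrix}$ is unitary with determinant $1$ for real $x$, and $w^{L}V(x)$ is a polynomial matrix in $w$ of determinant $w^{2L}$ with the same degree pattern as $M$ above. \textbf{Iterative peeling.} I would then peel off one inner block $R_Z(x)A_l$ at a time, starting from the rightmost: inspecting the top- and bottom-degree $w$-coefficients of the current (up to a scalar, $SU(2)$-valued) function, unitarity together with $\det = w^{2L}$ forces the leading coefficient matrix to be rank one and imposes relations that are solvable, uniquely, for an angle $\theta_l$ and a phase $\phi_l$ such that right-multiplication by $(R_Z(x)A_l)^{-1}$ lowers the degree by one. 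Iterating $2L$ times collapses $V$ to a constant matrix in $SU(2)$. \textbf{Euler angles.} That residual constant matrix is realized by the outer block $R_Z(\varphi)R_Y(\theta_0)R_Z(\phi_0)$ through its $ZYZ$ (Euler-angle) decomposition. Collecting the extracted data as $\theta = (\theta_0,\dots,\theta_{2L})$ and $\phi = (\varphi,\phi_0,\dots,\phi_{2L})$ gives parameters with $U^{2L}_{\theta,\phi}(x) = V(x)$, hence $f^{2L}_{\theta,\phi}(x) = \bra{0}V(x)\ket{0} = T(x)$.

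I expect the iterative peeling to be the main obstacle: one must verify that the angle-extraction equations are always consistent (this is exactly where the identity $|T|^2 + |S|^2 \equiv 1$ and the normalization $\det = w^{2L}$ are used, as in the standard quantum signal processing synthesis), that the degree strictly decreases at each round, and that after precisely $2L$ rounds one lands on a constant — not a lower but still $x$-dependent — matrix, so that the number of inner blocks matches the prescribed depth $2L$. A secondary point requiring care is the degree bookkeeping in the Fej\'er--Riesz step: after recentering, the completion of a degree-$\le 2L$ nonnegative trigonometric polynomial can be taken of degree $\le L$, which is what keeps the whole construction within depth $2L$. (Alternatively, the statement may be invoked verbatim from \cite[Corollary~2]{ylwang23}, on which the present paper relies.)
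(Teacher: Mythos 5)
The paper does not prove this proposition at all: it is imported verbatim as \cite[Corollary~2]{ylwang23}, so the "paper's proof" is just the citation, and your fallback option of invoking that reference is exactly what the authors do. Your reconstruction is nonetheless the right proof strategy and is essentially the quantum-signal-processing synthesis used in the cited work. The necessity direction (writing $U^{2L}_{\theta,\phi}(x)=w^{-L}A_0M(w)$ with $M$ a degree-$\le 2L$ polynomial matrix, $\det M=w^{2L}$, unitary on the circle) is correct, and your degree bookkeeping in the Fej\'er--Riesz step is right: $1-|T|^2$ has frequencies in $[-2L,2L]$, its spectral factor $q(w)$ has degree $\le 2L$, and $S:=w^{-L}q(w)$ is again of degree $\le L$, so the completion stays within depth $2L$. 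The one substantive gap is that the peeling step --- which is the entire content of the statement --- is asserted rather than carried out: you would still need to verify that for the specific block $R_Z(x)R_Y(\theta_l)R_Z(\phi_l)$ the leading- and trailing-coefficient conditions forced by unitarity and $\det=w^{2L}$ are always solvable for $(\theta_l,\phi_l)$ so that the degree drops by exactly one per round (this is where generic QSP parametrizations can fail and where the extra $R_Z(\phi_l)$ phase per layer, plus the outer $R_Z(\varphi)$, is needed), and that padding handles the case $\deg T<L$ so that exactly $2L$ blocks are consumed. Since the proposition is explicitly a quoted external result, leaving that step to \cite[Corollary~2]{ylwang23} is acceptable, but as a standalone proof the argument is incomplete precisely there.
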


Now, we are in the position to prove our first main result (Theorem \ref{thrmJacksonK}), establishing the approximation rates for single-qubit quantum neural networks.

\begin{proof}[Proof of Theorem \ref{thrmJacksonK}]
	Fix some $N \in \mathbb{N}$ and define the function $g \in C_{2\pi}(\mathbb{R})$ by $g(x) := \frac{f(x)}{c}$ for $x \in \mathbb{R}$, which is also $K$-times continuously differentiable. Then, by using that $J_{N,K}(t) \geq 0$ for all $t \in [-\pi,\pi]$ and that $\int_{-\pi}^\pi J_{N,K}(t) dt = 1$, we have
	\begin{equation}
		\label{EqCorJacksonKProof1}
		\begin{aligned}
			\left\vert (\mathcal{T}_{N,K} g)(x) \right\vert & \leq \int_{-\pi}^\pi \vert J_{N,K}(t) \vert \sum_{k=1}^{K+1} \binom{K+1}{k} \vert g(x+kt) \vert dt \\
			& \leq \Vert g \Vert_\infty \sum_{k=1}^{K+1} \binom{K+1}{k} \int_{-\pi}^\pi J_{N,K}(t) dt \\
			& \leq (2^{K+1}-1) \Vert g \Vert_\infty = (2^{K+1}-1) \frac{\Vert f \Vert_\infty}{c} \leq 1.
		\end{aligned}
	\end{equation}
	Hence, by using Proposition~\ref{PropSingleQubit} (applied to $\mathcal{T}_{N,K} g$ having degree $r_K \lfloor N/2 \rfloor$), there exist parameters $\theta \in \mathbb{R}^{2 r_K \lfloor N/2 \rfloor+1}$ and $\phi \in \mathbb{R}^{2 r_K \lfloor N/2 \rfloor+2}$ such that for every $x \in \mathbb{R}$ it holds that
	\begin{equation*}
		(\mathcal{T}_{N,K} g)(x) = \Bra{0} U^{2r_K \lfloor N/2 \rfloor}_{\theta,\phi}(x) \Ket{0}.
	\end{equation*}
	Finally, by using this and Proposition~\ref{propJacksonK} (applied to $g \in C_{2\pi}(\mathbb{R})$) and that $c \cdot \omega_{g^{(k)}} = \omega_{f^{(k)}}$, it follows that
	\begin{equation*}
		\begin{aligned}
			\sup_{x \in \mathbb{R}} \left\vert f(x) - c \cdot  f^{2r_K \lfloor N/2 \rfloor}_{\theta,\phi}(x) \right\vert & = c \sup_{x \in \mathbb{R}} \left\vert g(x) - (\mathcal{T}_{N,K} g)(x) \right\vert \\
			& \leq c \frac{C_K \omega_{g^{(k)}}\left( \frac{1}{N} \right)}{N^K} \\
			& = \frac{C_K \omega_{f^{(k)}}\left( \frac{1}{N} \right)}{N^K},
		\end{aligned}
	\end{equation*}
	which completes the proof of the first part. The second part follows from the second part of Proposition \ref{propJacksonK}.
\end{proof}

\section{Proof of Theorem~\ref{thrmJacksonMultiK}}

In this section, we present the proof of Theorem \ref{thrmJacksonMultiK}. To this end, we use the multivariate generalization of the Jackson inequality (see, e.g., \cite{lorentz1966,cheney82}), which relies on an iterative application of the univariate Jackson inequality through all variables. In order to implement the multivariate trigonometric polynomial of the multivariate Jackson inequality, we use the idea of \emph{(parametrized) quantum circuits (PQCs)} in \cite{yu25}, which in turn is based on the \emph{linear combination of unitaries (LCU)} in \cite{childs12}.

\subsection{Multivariate trigonometric polynomials and Jackson inequality}

Let us first introduce complex-valued multivariate trigonometric polynomials. A function $T:\mathbb{R}^d \rightarrow \mathbb{C}$ is called a \emph{$d$-variate complex-valued trigonometric polynomial} if there exist some $\mathbf{N} := (N_1,\dots,N_d) \in \mathbb{N}_0^d$ and $a_{\mathbf{n}}, b_{\mathbf{n}} \in \mathbb{C}$, $\mathbf{n} := (n_1,\dots,n_d) \in \mathbb{N}_0^d$ and $-\mathbf{N} \leq \mathbf{n} \leq \mathbf{N}$, such that for every $\mathbf{x} \in \mathbb{R}^d$ it holds that
\begin{equation*}
	T(\mathbf{x}) = \sum_{-\mathbf{N} \leq \mathbf{n} \leq \mathbf{N}} \left( a_{\mathbf{n}} \cos\left( \mathbf{n}^\top \mathbf{x} \right) + b_{\mathbf{n}} \sin\left( \mathbf{n}^\top \mathbf{x} \right) \right).
\end{equation*}
In this case, we say that the complex-valued trigonometric polynomial $T:\mathbb{R}^d \rightarrow \mathbb{C}$ has degree $\mathbf{N} \in \mathbb{N}_0^d$ if $a_{\mathbf{N}} \neq 0$ or $b_{\mathbf{N}} \neq 0$. Alternatively, since the functions $\mathbf{x} \mapsto \cos\left( \mathbf{n}^\top \mathbf{x} \right)$ and $\mathbf{x} \mapsto \sin\left( \mathbf{n}^\top \mathbf{x} \right)$ can be written as linear combinations of the functions $\mathbf{x} \mapsto e^{\mathbf{i} \mathbf{n}^\top \mathbf{x}}$ and $\mathbf{x} \mapsto e^{-\mathbf{i} \mathbf{n}^\top \mathbf{x}}$, we obtain the equivalent form of the $d$-variate complex-valued trigonometric polynomial $T: \mathbb{R}^d \rightarrow \mathbb{C}$ given for every $\mathbf{x} \in \mathbb{R}^d$ as
\begin{equation*}
	T(\mathbf{x}) = \sum_{-\mathbf{N} \leq \mathbf{n} \leq\mathbf{N}} c_{\mathbf{n}} e^{\mathbf{i} \mathbf{n}^\top \mathbf{x}},
\end{equation*}
where the coefficients $(c_{\mathbf{n}})_{\mathbf{n}}$ are given by $c_\mathbf{0} := a_\mathbf{0}$ as well as $c_{-\mathbf{n}} := \frac{1}{2} \left( a_{\mathbf{n}}+b_{\mathbf{n}} \mathbf{i} \right)$ and $c_{\mathbf{n}} := \frac{1}{2} \left( a_{\mathbf{n}}-b_{\mathbf{n}} \mathbf{i} \right)$ for $-\mathbf{N} \leq \mathbf{n} \leq \mathbf{N}$, $\mathbf{n} \neq \lbrace 0 \rbrace^d$. 

We first recall the Jackson inequality for multivariate functions. To this end, we follow \cite[Section~6.3]{lorentz1966} and introduce for every function $f \in C_{2\pi}(\mathbb{R}^d)$ as well as for every multi-indices $\mathbf{K} := (K_1,\dots,K_d) \in \mathbb{N}_0^d$ and $\mathbf{N} := (N_1,\dots,N_d) \in \mathbb{N}^d$ the recursively defined function $\mathcal{I}_j(\mathbf{x})$, $\mathbf{x} \in \mathbb{R}^d$ such that
\begin{equation*}
	\mathcal{I}_j(\mathbf{x}) := 
	\begin{cases}
		\left(\mathcal{T}^{x_j}_{N_j, K_j} \left(\mathcal{I}_{j-1}\right) \right) (\mathbf{x}), & j = 1,2,\dots,d, \\
		f(\mathbf{x}), & j = 0,
	\end{cases}
\end{equation*}
and define $\left(\mathcal{T}_{\mathbf{N}, \mathbf{K}} \right)f(\mathbf{x}) := \mathcal{I}_d(\mathbf{x})$, where at each iteration $j = 1,\dots,d$ the function $\mathcal{T}^{x_j}_{N_j,K_j} g: \mathbb{R}^d \rightarrow \mathbb{R}$ with $g \in C_{2\pi}(\mathbb{R}^d)$ denotes the application of the operator $\mathcal{T}_{N_j,K_j}$ (defined in \eqref{EqDefTK}) to the function $\mathbb{R} \ni x_j \mapsto g(x_1,\dots,x_{j-1},x_j,x_{j+1},\dots,x_d) \in \mathbb{R}$. Then, by iteratively applying Proposition~\ref{LemmaJacksonK}, we conclude that $\mathcal{T}_{\mathbf{N},\mathbf{K}} f$ is a multivariate trigonometric polynomial.

\begin{proposition}
	\label{LemmaJacksonMultiK}
	Let $f \in C_{2\pi}(\mathbb{R}^d)$, let $\mathbf{K} := (K_1,\dots,K_d) \in \mathbb{N}_0^d$ and $\mathbf{N} := (N_1,\dots,N_d) \in \mathbb{N}^d$. Then, for every $\mathbf{x} \in \mathbb{R}^d$, it holds that
	\begin{equation*}
		(\mathcal{T}_{\mathbf{N},\mathbf{K}} f)(\mathbf{x}) = \frac{(2\pi)^d}{\prod_{j=1}^d \lambda_{N_j,r_j}} \sum_\mathbf{n} \mathbf{m}_{\mathbf{n},\mathbf{K}} \widehat{f}(\mathbf{n}) e^{\mathbf{i} \mathbf{n}^\top \mathbf{x}},
	\end{equation*}
	where the sum is taken over all $\mathbf{n} := (n_1,\dots,n_d) \in \mathbb{Z}^d$ with $\vert n_j \vert \leq r_j \lfloor N_j/2 \rfloor$ for all $j = 1,\dots,d$, and where $\mathbf{m}_{\mathbf{n},\mathbf{K}} := \prod_{j=1}^d m_{n_j,K_j}$ with $m_{n_j,K_j}$ defined in Proposition~\ref{LemmaJacksonK}.
\end{proposition}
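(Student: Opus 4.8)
The plan is to prove, by induction on $j = 0,1,\dots,d$, an explicit formula for the intermediate functions $\mathcal{I}_j$ in which the remaining dependence on $x_{j+1},\dots,x_d$ is carried by a \emph{partial} Fourier coefficient rather than by the full multivariate Fourier series; this keeps every sum finite and sidesteps convergence issues. For $0 \le j \le d$ and integers $n_1,\dots,n_j$, I would set
\begin{equation*}
	\widehat{f}^{(j)}_{n_1,\dots,n_j}(x_{j+1},\dots,x_d) := \frac{1}{(2\pi)^j} \int_{[-\pi,\pi]^j} e^{-\mathbf{i}(n_1 y_1 + \dots + n_j y_j)}\, f(y_1,\dots,y_j,x_{j+1},\dots,x_d)\, dy_1 \cdots dy_j,
\end{equation*}
which, because $f \in C_{2\pi}(\mathbb{R}^d)$ is continuous and (by periodicity) uniformly continuous, is again a continuous $2\pi$-periodic function of $(x_{j+1},\dots,x_d)$; for $j = d$ it reduces to $\widehat{f}(\mathbf{n})$. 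The inductive claim to be established is
\begin{equation*}
	\mathcal{I}_j(\mathbf{x}) = \left( \prod_{i=1}^j \frac{2\pi}{\lambda_{N_i,r_i}} \right) \sum_{\substack{|n_i| \le r_i \lfloor N_i/2 \rfloor \\ i = 1,\dots,j}} \left( \prod_{i=1}^j m_{n_i,K_i} \right) \widehat{f}^{(j)}_{n_1,\dots,n_j}(x_{j+1},\dots,x_d)\, e^{\mathbf{i}(n_1 x_1 + \dots + n_j x_j)}.
\end{equation*}
For $j = d$ the exponent becomes $\mathbf{n}^\top \mathbf{x}$, the product of the $m_{n_i,K_i}$ equals $\mathbf{m}_{\mathbf{n},\mathbf{K}}$, and $\widehat{f}^{(d)}_{n_1,\dots,n_d} = \widehat{f}(\mathbf{n})$, so the claim specializes to the asserted identity.

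The base case $j = 0$ is immediate, since $\mathcal{I}_0 = f$ with an empty product and a single term. For the inductive step, I would apply the operator $\mathcal{T}^{x_{j+1}}_{N_{j+1},K_{j+1}}$ to the inductive formula for $\mathcal{I}_j$. By definition \eqref{EqDefTK} this operator acts only on the variable $x_{j+1}$, as an integral against the nonnegative kernel $J_{N_{j+1},K_{j+1}}$ composed with a finite linear combination of shifts; it is therefore linear and passes through the finite sum over $n_1,\dots,n_j$ and through the factors $e^{\mathbf{i}(n_1 x_1 + \dots + n_j x_j)}$, which are constants in $x_{j+1}$. It then remains to evaluate $\mathcal{T}_{N_{j+1},K_{j+1}}$ on the univariate function $x_{j+1} \mapsto \widehat{f}^{(j)}_{n_1,\dots,n_j}(x_{j+1},x_{j+2},\dots,x_d)$ with $x_{j+2},\dots,x_d$ held fixed. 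Proposition~\ref{LemmaJacksonK} expresses this as $\frac{2\pi}{\lambda_{N_{j+1},r_{j+1}}} \sum_{|n_{j+1}| \le r_{j+1}\lfloor N_{j+1}/2 \rfloor} m_{n_{j+1},K_{j+1}}\, \widehat{g}(n_{j+1})\, e^{\mathbf{i} n_{j+1} x_{j+1}}$, where $\widehat{g}(n_{j+1})$ is the $n_{j+1}$-th Fourier coefficient of $s \mapsto \widehat{f}^{(j)}_{n_1,\dots,n_j}(s,x_{j+2},\dots,x_d)$; by Fubini's theorem this coefficient equals $\widehat{f}^{(j+1)}_{n_1,\dots,n_{j+1}}(x_{j+2},\dots,x_d)$. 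Substituting and collecting the new normalizing constant $\frac{2\pi}{\lambda_{N_{j+1},r_{j+1}}}$ and the new factor $m_{n_{j+1},K_{j+1}}$ yields exactly the formula for $\mathcal{I}_{j+1}$, which — being a finite combination of continuous $2\pi$-periodic partial Fourier coefficients and exponentials with integer frequencies — again lies in $C_{2\pi}(\mathbb{R}^d)$, so the next operator in the recursion may be applied.

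Every step is routine once the bookkeeping is fixed; the points needing some care are (i) checking that the partial Fourier coefficients $\widehat{f}^{(j)}_{n_1,\dots,n_j}$ are continuous and $2\pi$-periodic in the remaining variables, so that Proposition~\ref{LemmaJacksonK} genuinely applies at each stage — this follows from continuity under the integral sign together with the uniform continuity of $f$; (ii) the use of Fubini's theorem to rewrite the inner one-dimensional Fourier coefficient of a partial Fourier coefficient as a partial Fourier coefficient of one higher order, which is unproblematic since all integrands are bounded and continuous on compact domains; and (iii) matching the index ranges $|n_i| \le r_i\lfloor N_i/2\rfloor$ and the product structure $\mathbf{m}_{\mathbf{n},\mathbf{K}} = \prod_{j=1}^d m_{n_j,K_j}$ with the notation of the statement. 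I do not expect any single genuine obstacle: the substantive content is already contained in the univariate Proposition~\ref{LemmaJacksonK}, and what remains is organizational — carrying out the induction while keeping all sums finite.
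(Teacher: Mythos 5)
Your argument is correct and is essentially the paper's, namely an iterated application of the univariate Proposition~\ref{LemmaJacksonK} followed by Fubini; the difference is purely organizational. The paper goes top-down: it records $g_j = \mathcal{T}^{x_j}_{N_j,K_j} g_{j-1}$ as in \eqref{EqLemmaJacksonMultiKProof1} and then substitutes this formula iteratively inside the nested one-dimensional Fourier integrals, collapsing them by Fubini into the single $d$-dimensional Fourier coefficient $\widehat{f}(\mathbf{n})$. You instead go bottom-up, running a clean induction on $j$ with the partial Fourier coefficients $\widehat{f}^{(j)}_{n_1,\dots,n_j}$ serving as the bookkeeping device. These are two presentations of the same computation and either buys essentially the same thing; your version is arguably tidier to verify. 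One small technical point you gloss over: Proposition~\ref{LemmaJacksonK} is stated for real-valued $f \in C_{2\pi}(\mathbb{R})$, and the paper's intermediates $g_j$ stay real-valued (the operator \eqref{EqDefTK} preserves realness), so it applies verbatim. Your intermediates are the complex-valued partial Fourier coefficients $\widehat{f}^{(j)}_{n_1,\dots,n_j}$, so in the inductive step you are invoking Proposition~\ref{LemmaJacksonK} for a complex-valued function. This is harmless because both sides of the identity in Proposition~\ref{LemmaJacksonK} are $\mathbb{C}$-linear in $f$ and the claim extends by splitting into real and imaginary parts, but you should say so explicitly, in the same spirit as your observations (i)--(iii).
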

\begin{proof}
	Fix some $\mathbf{K} := (K_1,\dots,K_d) \in \mathbb{N}_0^d$ and $\mathbf{N} := (N_1,\dots,N_d) \in \mathbb{N}^d$. Then, by using for every $j = 1,\dots,d$ the notation $g_j := \mathcal{T}^{x_j}_{N_j,K_j} \big( \mathcal{T}^{x_{j-1}}_{N_{j-1},K_{j-1}} ( \dots (\mathcal{T}^{x_1}_{N_1,K_1} f) \dots ) \Big)$ and Proposition~\ref{LemmaJacksonK}, it follows for every $\mathbf{x} \in \mathbb{R}^d$ that
	\begin{equation}
		\label{EqLemmaJacksonMultiKProof1}
		\begin{aligned}
			g_j(\mathbf{x}) & = (\mathcal{T}^{x_j}_{N_j,K_j} g_{j-1})(\mathbf{x}) \\
			& = \frac{2\pi}{\lambda_{N_j,K_j}} \sum_{n_j=-r_{K_j} \lfloor N_j/2 \rfloor}^{r_{K_j} \lfloor N_j/2 \rfloor} m_{n_j,K_j} \left( \frac{1}{2\pi} \int_{-\pi}^\pi e^{-\mathbf{i} n_j s_j} g_{j-1}(x_1,\dots,x_{j-1},s_j,x_{j+1},\dots,x_d) ds_j \right) e^{\mathbf{i} n_j x_j}.
		\end{aligned}
	\end{equation}
	Hence, by iteratively inserting \eqref{EqLemmaJacksonMultiKProof1} and using linearity, we conclude for every $\mathbf{x} \in \mathbb{R}^d$ that
	\begin{equation*}
		\begin{aligned}
			& (\mathcal{T}_{\mathbf{N},\mathbf{K}} f)(\mathbf{x}) = (\mathcal{T}^{x_d}_{N_d,K_d} g_{d-1})(\mathbf{x}) \\
			& \quad\quad = \frac{2\pi}{\lambda_{N_d,K_d}} \sum_{n_d=-r_{K_d} \lfloor N_d/2 \rfloor}^{r_{K_d} \lfloor N_d/2 \rfloor} m_{n_d,K_d} \left( \frac{1}{2\pi} \int_{-\pi}^\pi e^{-\mathbf{i} n_d s_d} g_{d-1}(x_1,\dots,x_{d-1},s_d) ds_d \right) e^{\mathbf{i} n_d x_d} \\
			& \quad\quad = \frac{(2\pi)^2}{\lambda_{N_d,K_d} \lambda_{N_{d-1},K_{d-1}}} \sum_{n_d=-r_{K_d} \lfloor N_d/2 \rfloor}^{r_{K_d} \lfloor N_d/2 \rfloor} \sum_{n_{d-1}=-r_{K_{d-1}} \lfloor N_{d-1}/2 \rfloor}^{r_{K_{d-1}} \lfloor N_{d-1}/2 \rfloor} m_{n_d,K_d} m_{n_{d-1},K_{d-1}} \\
			& \quad\quad\quad\quad \cdot \left( \frac{1}{(2\pi)^2} \int_{-\pi}^\pi \int_{-\pi}^\pi e^{-\mathbf{i} n_{d-1} s_{d-1}} e^{-\mathbf{i} n_d s_d} g_{d-2}(x_1,\dots,x_{d-2},s_{d-1},s_d) ds_{d-1} ds_d \right) e^{\mathbf{i} n_{d-1} x_{d-1}} e^{\mathbf{i} n_d x_d} \\
			& \quad\quad = \dots = \frac{(2\pi)^d}{\lambda_{N_d,K_d} \cdots \lambda_{N_1,K_1}} \sum_{n_d=-r_{K_d} \lfloor N_d/2 \rfloor}^{r_{K_d} \lfloor N_d/2 \rfloor} \cdots \sum_{n_1=-r_{K_1} \lfloor N_1/2 \rfloor}^{r_{K_1} \lfloor N_1/2 \rfloor} m_{n_d,K_d} \cdots m_{n_1,K_1} \\
			& \quad\quad\quad\quad\quad\quad \cdot \left( \frac{1}{(2\pi)^d} \int_{-\pi}^\pi \cdots \int_{-\pi}^\pi e^{-\mathbf{i} n_1 s_1} \cdots e^{-\mathbf{i} n_d s_d} f(s_1,\dots,s_d) ds_1 \cdots ds_d \right) e^{\mathbf{i} n_1 x_1} \cdots e^{\mathbf{i} n_d x_d} \\
			& \quad\quad = \frac{(2\pi)^d}{\prod_{j=1}^d \lambda_{N_j,K_j}} \sum_\mathbf{n} \mathbf{m}_{\mathbf{n},\mathbf{K}} \widehat{f}(\mathbf{n}) e^{\mathbf{i} \mathbf{n}^\top \mathbf{x}},
		\end{aligned}
	\end{equation*}
	which completes the proof.
\end{proof}

Now, we recall the Jackson inequality for multivariate functions, which establishes an upper bound for approximating any given function $f \in C_{2\pi}(\mathbb{R}^d)$ by the trigonometric polynomial $\mathcal{T}_{\mathbf{N},\mathbf{K}} f$.

\pagebreak

\begin{proposition}[{\!\!\cite[Theorem~6.6, p.~87]{lorentz1966}}]
	\label{propJacksonMultiK}
	For $\mathbf{K} := (K_1,\dots,K_d) \in \mathbb{N}_0^d$ and $j=1,\dots,d$, let $f \in C_{2\pi}(\mathbb{R}^d)$ have a $K_j$-th continuous partial derivative $\partial_j^{K_j} f := \frac{\partial^{K_j} f}{\partial x_j^{K_j}} f: \mathbb{R}^d \rightarrow \mathbb{R}$. Then, there exists a constant $C_{\mathbf{K}} > 0$ (independent of $f$) such that for every $\mathbf{N} := (N_1,\dots,N_d) \in \mathbb{N}^d$ it holds that
	\begin{equation}
		\label{EqThmJacksonMultiK}
		\sup_{x \in \mathbb{R}^d} \left\vert f(\mathbf{x}) - (\mathcal{T}_{\mathbf{N},\mathbf{K}} f)(\mathbf{x}) \right\vert \leq C_\mathbf{K} \sum_{j=1}^d \frac{\omega_{\partial_j^{K_j} f}\big( \frac{1}{N_j} \big)}{N_j^{K_j}}.
	\end{equation}
	Moreover, if $f \in C_{2\pi}(\mathbb{R}^d)$ has a ($K_j+1$)-th continuous partial derivative $\partial_j^{K_j+1} f: \mathbb{R}^d \rightarrow \mathbb{R}$, $j = 1,\dots,d$, then the right-hand side of \eqref{EqThmJacksonMultiK} can be upper bounded by $C_\mathbf{K} \sum_{j=1}^d \frac{\Vert \partial_j^{K_j+1} f \Vert_\infty}{N_j^{K_j+1}}$.
\end{proposition}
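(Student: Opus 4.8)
The plan is to deduce the multivariate estimate from the univariate Jackson inequality (Proposition~\ref{propJacksonK}) by a telescoping argument over the iterated construction $\mathcal{T}_{\mathbf{N},\mathbf{K}} f = \mathcal{I}_d$, where $\mathcal{I}_0 = f$ and $\mathcal{I}_j = \mathcal{T}^{x_j}_{N_j,K_j}(\mathcal{I}_{j-1})$. Writing $f - \mathcal{I}_d = \sum_{j=1}^d (\mathcal{I}_{j-1} - \mathcal{I}_j)$ and using the triangle inequality,
\[
\sup_{\mathbf{x} \in \mathbb{R}^d} \big| f(\mathbf{x}) - (\mathcal{T}_{\mathbf{N},\mathbf{K}}f)(\mathbf{x}) \big| \le \sum_{j=1}^d \sup_{\mathbf{x} \in \mathbb{R}^d} \big| \mathcal{I}_{j-1}(\mathbf{x}) - \mathcal{T}^{x_j}_{N_j,K_j}(\mathcal{I}_{j-1})(\mathbf{x}) \big|,
\]
so it suffices to bound the $j$-th summand. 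Fixing all coordinates except $x_j$, the inner supremum is the error of approximating the univariate slice $x_j \mapsto \mathcal{I}_{j-1}(x_1,\dots,x_d)$ by the operator $\mathcal{T}_{N_j,K_j}$ from \eqref{EqDefTK}; provided this slice is $K_j$ times continuously differentiable, Proposition~\ref{propJacksonK} bounds it by $C_{K_j} N_j^{-K_j}\,\omega_{\partial_j^{K_j}\mathcal{I}_{j-1}}(1/N_j)$, the modulus being that of the slice of $\partial_j^{K_j}\mathcal{I}_{j-1}$. Thus everything reduces to two statements about how the operators $\mathcal{T}^{x_i}_{N_i,K_i}$, $i=1,\dots,j-1$, act on differentiability and on moduli of continuity in the $x_j$-direction.

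For the first statement, each $\mathcal{T}^{x_i}_{N_i,K_i}$ acts only by averaging shifts in the $x_i$-variable against the nonnegative integrable kernel $J_{N_i,K_i}$, so differentiation under the integral sign — legitimate because $\partial_j^{K_j} f$ is continuous and $2\pi$-periodic, hence bounded — shows that $\partial_j$ commutes with $\mathcal{T}^{x_i}_{N_i,K_i}$ for $i \ne j$. Consequently $\partial_j^{K_j}\mathcal{I}_{j-1} = \mathcal{T}^{x_{j-1}}_{N_{j-1},K_{j-1}}\cdots\mathcal{T}^{x_1}_{N_1,K_1}(\partial_j^{K_j}f)$, which lies in $C_{2\pi}(\mathbb{R}^d)$ (each operator preserves continuity and periodicity), so every slice of $\mathcal{I}_{j-1}$ in the $x_j$-variable is indeed $K_j$ times continuously differentiable and the univariate inequality applies. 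For the second statement, a direct estimate using $J_{N_i,K_i}\ge 0$, $\int_{-\pi}^\pi J_{N_i,K_i} = 1$, and $\sum_{k=1}^{K_i+1}\binom{K_i+1}{k} = 2^{K_i+1}-1$ — exactly as in the bound \eqref{EqCorJacksonKProof1} used in the proof of Theorem~\ref{thrmJacksonK} — shows $\mathcal{T}^{x_i}_{N_i,K_i}$ enlarges the modulus of continuity in the $x_j$-direction at most by the factor $2^{K_i+1}-1$; iterating over $i=1,\dots,j-1$ gives
\[
\omega_{\partial_j^{K_j}\mathcal{I}_{j-1}}\!\left(\tfrac{1}{N_j}\right) \le \left(\prod_{i=1}^{j-1}(2^{K_i+1}-1)\right)\omega_{\partial_j^{K_j}f}\!\left(\tfrac{1}{N_j}\right).
\]
Plugging this into the bound for the $j$-th summand and summing over $j$ yields \eqref{EqThmJacksonMultiK} with $C_{\mathbf{K}} := \max_{1\le j\le d} C_{K_j}\prod_{i=1}^{j-1}(2^{K_i+1}-1)$, which depends only on $\mathbf{K}$.

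For the refined bound, the same telescoping applies: when $f$ additionally has a continuous $(K_j+1)$-th partial derivative in $x_j$, the second part of Proposition~\ref{propJacksonK} bounds the $j$-th slice error by $C_{K_j} N_j^{-(K_j+1)}\,\|\partial_j^{K_j+1}\mathcal{I}_{j-1}\|_\infty$, and the identity $\partial_j^{K_j+1}\mathcal{I}_{j-1} = \mathcal{T}^{x_{j-1}}_{N_{j-1},K_{j-1}}\cdots\mathcal{T}^{x_1}_{N_1,K_1}(\partial_j^{K_j+1}f)$ together with the uniform contraction $\|\mathcal{T}^{x_i}_{N_i,K_i}h\|_\infty \le (2^{K_i+1}-1)\|h\|_\infty$ (again from \eqref{EqCorJacksonKProof1}) gives $\|\partial_j^{K_j+1}\mathcal{I}_{j-1}\|_\infty \le \big(\prod_{i=1}^{j-1}(2^{K_i+1}-1)\big)\|\partial_j^{K_j+1}f\|_\infty$; summing over $j$ produces the claimed bound with the same constant $C_{\mathbf{K}}$. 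The main obstacle is the first step of the reduction — checking that each $\mathcal{I}_{j-1}$ inherits the $x_j$-differentiability of $f$ with a modulus of continuity controlled by that of $\partial_j^{K_j}f$ uniformly in $\mathbf{N}$ — since once this "commutation plus contraction" bookkeeping is in place, the rest is a routine combination of the univariate inequality and the triangle inequality.
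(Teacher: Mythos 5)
Your proof is correct: the telescoping decomposition $f-\mathcal{I}_d=\sum_{j=1}^d(\mathcal{I}_{j-1}-\mathcal{I}_j)$, the commutation of $\partial_j$ with $\mathcal{T}^{x_i}_{N_i,K_i}$ for $i\neq j$ via differentiation under the integral, and the factor-$(2^{K_i+1}-1)$ control of the sup norm and of the $x_j$-directional modulus of continuity together reduce everything soundly to the univariate Proposition~\ref{propJacksonK}. The paper itself offers no proof of this proposition --- it is imported verbatim from \cite[Theorem~6.6]{lorentz1966} --- and your argument is precisely the classical one underlying that cited result, so the approaches coincide.
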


\subsection{Error bound by quantum neural network}

We now represent every bounded multivariate trigonometric polynomial by a multi-qubit QNN. 

\begin{proposition}
	\label{proposition-multivariate}
	For $\mathbf{L} := (L_1,\dots,L_d) \in \mathbb{N}_0^d$, let $T: \mathbb{R}^d \rightarrow \mathbb{C}$ be a complex-valued trigonometric polynomial of degree $\mathbf{L} \in \mathbb{N}_0^d$ satisfying $\vert T(\mathbf{x})\vert \leq 1$ for all $\mathbf{x} \in \mathbb{R}^d$, and define $q := \lceil \log_2(\mathfrak{n}) \rceil$ with $\mathfrak{n} := | \{\mathbf{n} \in \mathbb{Z}^d: -\mathbf{L} \leq \mathbf{n} \leq \mathbf{L}\}|$. Then, there exists a multi-qubit QNN as defined in \eqref{EqMultiCircuit} with parameters $\boldsymbol\theta := (\theta_\mathbf{n})_{-\mathbf{L} \leq \mathbf{n} \leq \mathbf{L}} \in \bigtimes_{-\mathbf{L} \leq \mathbf{n} \leq \mathbf{L}} \bigtimes_{j=1}^d \mathbb{R}^{2\vert n_j \vert+1}$ and $\boldsymbol\phi := (\phi_\mathbf{n})_{-\mathbf{L} \leq \mathbf{n} \leq \mathbf{L}} \in \bigtimes_{-\mathbf{L} \leq \mathbf{n} \leq \mathbf{L}} \bigtimes_{j=1}^d \mathbb{R}^{2\vert n_j \vert+2}$ such that for every $\mathbf{x} \in \mathbb{R}^d$ it holds that
	\begin{equation*}
		\Bra{\mathbf{0}}_{q+d} U^{\mathbf{L}}_{\boldsymbol\theta,\boldsymbol\phi}(\mathbf{x}) \Ket{\mathbf{0}}_{q+d} = 2^{-q} \, T(\mathbf{x}).
	\end{equation*}
\end{proposition}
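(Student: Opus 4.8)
The plan is to exploit the \emph{linear combination of unitaries} structure of $U^{\mathbf{L}}_{\boldsymbol\theta,\boldsymbol\phi}$: each Fourier mode of $T$ will be realized on the $d$ data qubits by a tensor product of single-qubit QNNs, while the $q$ ancilla qubits, prepared in uniform superposition, produce the sum over the modes together with the $2^{-q}$ prefactor. First I would write $T$ in complex exponential form $T(\mathbf{x}) = \sum_{-\mathbf{L} \leq \mathbf{n} \leq \mathbf{L}} c_{\mathbf{n}} e^{\mathbf{i} \mathbf{n}^\top \mathbf{x}}$ and observe that, since $\vert T(\mathbf{x}) \vert \leq 1$ for all $\mathbf{x} \in \mathbb{R}^d$, the Fourier coefficient $c_{\mathbf{n}} = \frac{1}{(2\pi)^d} \int_{[-\pi,\pi]^d} e^{-\mathbf{i} \mathbf{n}^\top \mathbf{x}} T(\mathbf{x})\, d\mathbf{x}$ satisfies $\vert c_{\mathbf{n}} \vert \leq 1$ for every $\mathbf{n}$. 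I fix the ordering $\{\mathbf{n}_0,\dots,\mathbf{n}_{\mathfrak{n}-1}\}$ of $\{\mathbf{n} \in \mathbb{Z}^d : -\mathbf{L} \leq \mathbf{n} \leq \mathbf{L}\}$ used in the construction, and recall that $2^q \geq \mathfrak{n}$ by the definition of $q$.

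Next I would choose the single-qubit parameters coordinatewise. For each $i \in \{0,\dots,\mathfrak{n}-1\}$ put $\tau_{i,1}(x_1) := c_{\mathbf{n}_i} e^{\mathbf{i} n_{i,1} x_1}$ and $\tau_{i,j}(x_j) := e^{\mathbf{i} n_{i,j} x_j}$ for $j = 2,\dots,d$; each $\tau_{i,j}$ is a univariate complex-valued trigonometric polynomial of degree $\vert n_{i,j} \vert$ bounded in modulus by $1$ (using $\vert c_{\mathbf{n}_i} \vert \leq 1$ in the case $j = 1$). Applying Proposition~\ref{PropSingleQubit} with $L = \vert n_{i,j} \vert$ yields parameters $\theta_{\mathbf{n}_i,j} \in \mathbb{R}^{2\vert n_{i,j}\vert+1}$ and $\phi_{\mathbf{n}_i,j} \in \mathbb{R}^{2\vert n_{i,j}\vert+2}$ with $f^{2\vert n_{i,j}\vert}_{\theta_{\mathbf{n}_i,j},\phi_{\mathbf{n}_i,j}} = \tau_{i,j}$, and collecting these over all $i,j$ defines $\boldsymbol\theta$ and $\boldsymbol\phi$. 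Using the multiplicativity $\Bra{0}_d (A_1 \otimes \dots \otimes A_d) \Ket{0}_d = \prod_{j=1}^d \Bra{0} A_j \Ket{0}$ together with the definition of $U^{\mathbf{n}_i}_{\theta_{\mathbf{n}_i},\phi_{\mathbf{n}_i}}$, this gives for every $\mathbf{x} \in \mathbb{R}^d$
\begin{equation*}
	\Bra{0}_d\, U^{\mathbf{n}_i}_{\theta_{\mathbf{n}_i},\phi_{\mathbf{n}_i}}(\mathbf{x})\, \Ket{0}_d = \prod_{j=1}^d f^{2\vert n_{i,j}\vert}_{\theta_{\mathbf{n}_i,j},\phi_{\mathbf{n}_i,j}}(x_j) = c_{\mathbf{n}_i}\, e^{\mathbf{i} \mathbf{n}_i^\top \mathbf{x}}.
\end{equation*}

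Finally I would evaluate the overlap in \eqref{EqMultiCircuitOut}. Since $H \Ket{0} = \tfrac{1}{\sqrt{2}}(\Ket{0} + \Ket{1})$, setting $\Ket{\psi} := 2^{-q/2} \sum_{i=0}^{2^q-1} \Ket{i} \otimes \Ket{0}_d$ one has $\big(H^{\otimes q} \otimes I_2^{\otimes d}\big) \Ket{\mathbf{0}}_{q+d} = \Ket{\psi}$, hence by \eqref{EqMultiCircuit}
\begin{equation*}
	\Bra{\mathbf{0}}_{q+d}\, U^{\mathbf{L}}_{\boldsymbol\theta,\boldsymbol\phi}(\mathbf{x})\, \Ket{\mathbf{0}}_{q+d} = \Bra{\psi}\, C^{\mathbf{L}}_{\boldsymbol\theta,\boldsymbol\phi}(\mathbf{x})\, \Ket{\psi}.
\end{equation*}
Substituting the block-diagonal expression \eqref{EqMultiQubitOper} for $C^{\mathbf{L}}_{\boldsymbol\theta,\boldsymbol\phi}(\mathbf{x})$ and using the orthonormality $\langle i | i' \rangle = \delta_{i,i'}$ of the ancilla computational basis to discard all cross terms, this reduces to
\begin{equation*}
	\Bra{\psi}\, C^{\mathbf{L}}_{\boldsymbol\theta,\boldsymbol\phi}(\mathbf{x})\, \Ket{\psi} = 2^{-q} \left( \sum_{i=0}^{\mathfrak{n}-1} \Bra{0}_d\, U^{\mathbf{n}_i}_{\theta_{\mathbf{n}_i},\phi_{\mathbf{n}_i}}(\mathbf{x})\, \Ket{0}_d + \sum_{i=\mathfrak{n}}^{2^q-1} \Bra{0}_d\, X^{\otimes d}\, \Ket{0}_d \right).
\end{equation*}
The second sum vanishes since $\Bra{0} X \Ket{0} = 0$, while by the previous display the first sum equals $\sum_{i=0}^{\mathfrak{n}-1} c_{\mathbf{n}_i} e^{\mathbf{i} \mathbf{n}_i^\top \mathbf{x}} = \sum_{-\mathbf{L} \leq \mathbf{n} \leq \mathbf{L}} c_{\mathbf{n}} e^{\mathbf{i} \mathbf{n}^\top \mathbf{x}} = T(\mathbf{x})$, which yields the claimed identity $\Bra{\mathbf{0}}_{q+d} U^{\mathbf{L}}_{\boldsymbol\theta,\boldsymbol\phi}(\mathbf{x}) \Ket{\mathbf{0}}_{q+d} = 2^{-q} T(\mathbf{x})$.

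The argument is essentially bookkeeping for the ``prepare--select--unprepare'' pattern of an LCU circuit, so I do not anticipate a genuine obstacle; the step requiring the most care is the last one. There one must handle the adjoint $\big(H^{\otimes q} \otimes I_2^{\otimes d}\big)^\dagger$ correctly (equivalently, note that it again maps $\Bra{\mathbf{0}}_{q+d}$ to the uniform superposition), check that the ancilla average contributes precisely the overall factor $2^{-q}$, and verify that the padding branch $X^{\otimes d}$ — present because $2^q \geq \mathfrak{n}$ need not be an equality — drops out. A minor technicality is the degenerate case $c_{\mathbf{n}_i} = 0$, where $\tau_{i,1}$ is the zero polynomial rather than one of degree $\vert n_{i,1}\vert$; it is nonetheless of the form $f^{2\vert n_{i,1}\vert}_{\theta,\phi}$ for a suitable parameter choice (e.g.\ by sending $\Ket{0}$ to $\Ket{1}$ before the final $R_Z$ rotation), so the conclusion is unaffected.
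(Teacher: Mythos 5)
Your proposal is correct and follows essentially the same route as the paper's own proof: bound $\vert c_{\mathbf n}\vert\le 1$ via the Fourier integral, load $c_{\mathbf n_i}e^{\mathbf i n_{i,1}x_1}$ on the first data qubit and pure phases $e^{\mathbf i n_{i,j}x_j}$ on the rest via Proposition~\ref{PropSingleQubit}, tensor them, then evaluate the LCU overlap with the uniform ancilla superposition, using $\langle 0\vert X\vert 0\rangle=0$ to kill the padding blocks. Your remark on the degenerate case $c_{\mathbf n_i}=0$ (where $\tau_{i,1}$ has degree $<\vert n_{i,1}\vert$) is a small technicality the paper glosses over, and your resolution is fine.
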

\begin{proof}
	For some fixed $\mathbf{L} \in \mathbb{N}_0^d$, let $T: \mathbb{R}^d \rightarrow \mathbb{C}$ be a complex-valued trigonometric polynomial of the form $T(x) = \sum\limits_{-\mathbf{L} \leq \mathbf{n} \leq \mathbf{L}} c_{\mathbf{n}} e^{\mathbf{i} \mathbf{n}^\top \mathbf{x}}$, $\mathbf{x} \in \mathbb{R}^d$, which satisfies $\vert T(\mathbf{x}) \vert \leq 1$ for all $\mathbf{x} \in \mathbb{R}^d$. Then, for every fixed $\mathbf{n} \in \mathbb{Z}^d$ with $-\mathbf{L} \leq \mathbf{n} \leq \mathbf{L}$, we observe that for every $\mathbf{x} := (x_1,\dots,x_d)^\top \in \mathbb{R}^d$ it holds that $\left\vert c_{\mathbf{n}} e^{\mathbf{i} n_1 x_1} \right\vert = \vert c_{\mathbf{n}} \vert = \big\vert \widehat{T}(\mathbf{n}) \big\vert = \big\vert (2\pi)^{-d} \int_{[-\pi,\pi]^d} e^{\mathbf{i} \mathbf{n}^\top \mathbf{s}} T(\mathbf{s}) d\mathbf{s} \big\vert \leq \sup_{\mathbf{s} \in [-\pi,\pi]^d} \vert T(\mathbf{s}) \vert \leq 1$ and that $\left\vert e^{\mathbf{i} n_j x_j} \right\vert = 1$, $j = 2,\dots,d$. Hence, we can apply $d$-times Proposition~\ref{PropSingleQubit} to obtain some parameters $\theta_{\mathbf{n},j} \in \mathbb{R}^{2\vert n_j \vert+1}$ and $\phi_{\mathbf{n},j} \in \mathbb{R}^{2\vert n_j \vert+2}$, $j = 1,\dots,d$, such that for every $\mathbf{x} := (x_1,\dots,x_d)^\top \in \mathbb{R}^d$ it holds that
	\begin{equation*}
		\begin{aligned}
			c_{\mathbf{n}} e^{\mathbf{i} n_1 x_1} & = \Bra{0} U^{2\vert n_1 \vert}_{\theta_{\mathbf{n},1},\phi_{\mathbf{n},1}}(x_1) \Ket{0}, \\
			e^{\mathbf{i} n_2 x_2} & = \Bra{0} U^{2\vert n_2 \vert}_{\theta_{\mathbf{n},2},\phi_{\mathbf{n},2}}(x_2) \Ket{0}, \\
			& \,\,\, \vdots \\
			e^{\mathbf{i} n_d x_d} & = \Bra{0} U^{2\vert n_d \vert}_{\theta_{\mathbf{n},d},\phi_{\mathbf{n},d}}(x_d) \Ket{0}.
		\end{aligned}
	\end{equation*}
	Therefore, by defining the $d$-qubit operator $U^{\mathbf{n}}_{\theta_\mathbf{n},\phi_\mathbf{n}}(\mathbf{x}) := \bigotimes_{j=1}^d U^{2\vert n_j \vert}_{\theta_{\mathbf{n},j},\phi_{\mathbf{n},j}}(x_j)$ with parameters $\theta_\mathbf{n} := (\theta_{\mathbf{n},1},\dots,\theta_{\mathbf{n},d}) \in \bigtimes_{j=1}^d \mathbb{R}^{2\vert n_j \vert+1}$ and $\phi_\mathbf{n} := (\phi_{\mathbf{n},1},\dots,\phi_{\mathbf{n},d}) \in \bigtimes_{j=1}^d \mathbb{R}^{2\vert n_j \vert+2}$, it follows for every $\mathbf{x} := (x_1,\dots,x_d)^\top \in \mathbb{R}^d$ that
	\begin{equation*}
		\Bra{\mathbf{0}}_d U^{\mathbf{n}}_{\theta_\mathbf{n},\phi_\mathbf{n}}(\mathbf{x}) \Ket{\mathbf{0}}_d = \prod_{j=1}^d \Bra{0} U^{2\vert n_j \vert}_{\theta_{\mathbf{n},j},\phi_{\mathbf{n},j}}(x_j) \Ket{0} = c_\mathbf{n} \prod_{j=1}^d e^{\mathbf{i} n_j x_j} = c_{\mathbf{n}} e^{\mathbf{i} \mathbf{n}^\top \mathbf{x}}.
	\end{equation*}
	Thus, by using the definition of the ($q+d$)-qubit operator linearity $U^{\mathbf{L}}_{\boldsymbol\theta,\boldsymbol\phi}(\mathbf{x})$ in \eqref{EqMultiCircuit}, the ordering $\{\mathbf{n}_0,\dots,\mathbf{n}_{\mathfrak{n}-1}\}$ of $\{\mathbf{n} \in \mathbb{Z}^d, -\mathbf{L} \leq \mathbf{n} \leq \mathbf{L}\}$, and that $\Bra{\mathbf{0}}_d X^{\otimes d} \Ket{\mathbf{0}}_d = \prod_{j=1}^d \langle 0 \vert X \vert 0 \rangle = 0$, we conclude for every $\mathbf{x} := (x_1,\dots,x_d)^\top \in \mathbb{R}^d$ that
	\[
	\begin{aligned}
		& \Bra{\mathbf{0}}_{q+d} U^{\mathbf{L}}_{\boldsymbol\theta,\boldsymbol\phi}(\mathbf{x}) \Ket{\mathbf{0}}_{q+d} = \Bra{(\psi_q,\mathbf{0})}_{q+d} C^{\mathbf{L}}_{\boldsymbol\theta,\boldsymbol\phi}(\mathbf{x}) \Ket{(\psi_q,\mathbf{0})}_{q+d} \\
		& \quad\quad = \sum_{i = 0}^{\mathfrak{n}-1} \underbrace{\langle \psi_q| i\rangle}_{=1/\sqrt{2^q}} \Bra{\mathbf{0}}_d U^{\mathbf{n}}_{\theta_\mathbf{n},\phi_\mathbf{n}}(\mathbf{x}) \Ket{\mathbf{0}}_d \underbrace{\langle i| \psi_q \rangle}_{=1/\sqrt{2^q}} + \sum_{i=\mathfrak{n}}^{2^q-1} \underbrace{\langle \psi_q| i \rangle}_{=1/\sqrt{2^q}} \Bra{\mathbf{0}}_d X^{\otimes d} \Ket{\mathbf{0}}_d \underbrace{\langle i| \psi_q \rangle}_{=1/\sqrt{2^q}} \\
		& \quad\quad = 2^{-q} \sum_{-\mathbf{L} \leq \mathbf{n} \leq\mathbf{L}} c_{\mathbf{n}} e^{\mathbf{i} \mathbf{n}^\top \mathbf{x}},
	\end{aligned}
	\]
	which completes the proof.
\end{proof}

We are now in the position to prove our second main result (Theorem \ref{thrmJacksonMultiK}), establishing the approximation rates for multi-qubit quantum neural networks.

\begin{proof}[Proof of Theorem \ref{thrmJacksonMultiK}]
	Fix some $\mathbf{N} := (N_1,\dots,N_d) \in \mathbb{N}^d$ and define $g \in C_{2\pi}(\mathbb{R}^d)$ by $g(\mathbf{x}) := \frac{f(\mathbf{x})}{c}$ for $\mathbf{x} \in \mathbb{R}^d$, which also has a $K_j$-th continuous partial derivative $\partial_j^{K_j} g: \mathbb{R}^d \rightarrow \mathbb{R}$,  $j = 1,\dots,d$. Then, by iteratively applying \eqref{EqCorJacksonKProof1} to $\mathbb{R} \ni x_j \mapsto \mathcal{T}^{x_j}_{N_j,K_j} \Big( \mathcal{T}^{x_{j-1}}_{N_{j-1},K_{j-1}} \big( \dots (\mathcal{T}^{x_1}_{N_1,K_1} g) \dots \big) \Big)(x_1,\dots,x_j,\dots,x_d) \in \mathbb{R}$ for any fixed $(x_1,\dots,x_{j-1},x_{j+1},\dots,x_d) \in \mathbb{R}^{d-1}$, $j = 1,\dots,d$, we conclude for every $\mathbf{x} \in \mathbb{R}^d$ that
	\begin{equation*}
		\begin{aligned}
			\left\vert (\mathcal{T}_{\mathbf{N},\mathbf{K}} g)(\mathbf{x}) \right\vert & = \left\vert \bigg( \mathcal{T}^{x_d}_{N_d,K_d} \Big( \mathcal{T}^{x_{d-1}}_{N_{d-1},K_{d-1}} \big( \dots (\mathcal{T}^{x_1}_{N_1,K_1} g) \dots \big) \Big) \bigg)(\mathbf{x}) \right\vert \\
			& \leq (2^{K_d+1}-1) \sup_{x_d \in \mathbb{R}} \left\vert \Big( \mathcal{T}^{x_{d-1}}_{N_{d-1},K_{d-1}} \big( \dots (\mathcal{T}^{x_1}_{N_1,K_1} g) \dots \big) \Big)(x_1,\dots,x_d) \right\vert \\
			& \leq \dots \leq (2^{K_d+1}-1) \cdots (2^{K_2+1} -1)\sup_{(x_2,\dots,x_d) \in \mathbb{R}^{d-1}} \left\vert (\mathcal{T}^{x_1}_{N_1,K_1} g)(x_1,\dots,x_d) \right\vert \\
			& \leq \left( \prod\limits_{j=1}^d(2^{K_j+1}-1) \right)\sup_{(x_1,\dots,x_d) \in \mathbb{R}^d} \left\vert g(x_1,\dots,x_d) \right\vert \\
			& = \left(\prod\limits_{j=1}^d(2^{K_j+1}-1) \right) \Vert g \Vert_\infty = \left(\prod\limits_{j=1}^d(2^{K_j+1}-1)\right) \frac{\Vert f \Vert_\infty}{c} \leq 1.
		\end{aligned}
	\end{equation*}
	Hence, by applying Proposition~\ref{proposition-multivariate} (applied to $\mathcal{T}_{\mathbf{N},\mathbf{K}} g$ having degree $(r_{K_j} \lfloor N_j/2 \rfloor)_{j=1,\dots,d}$), there exist some parameters $\boldsymbol\theta := (\theta_\mathbf{n})_{-\mathbf{L}_{\mathbf{N},\mathbf{K}} \leq \mathbf{n} \leq \mathbf{L}_{\mathbf{N},\mathbf{K}}} \in \bigtimes_{-\mathbf{L}_{\mathbf{N},\mathbf{K}} \leq \mathbf{n} \leq \mathbf{L}_{\mathbf{N},\mathbf{K}}} \bigtimes_{j=1}^d \mathbb{R}^{2\vert n_j \vert+1}$ and $\boldsymbol\phi := (\phi_\mathbf{n})_{-\mathbf{L}_{\mathbf{N},\mathbf{K}} \leq \mathbf{n} \leq \mathbf{L}_{\mathbf{N},\mathbf{K}}} \in \bigtimes_{-\mathbf{L}_{\mathbf{N},\mathbf{K}} \leq \mathbf{n} \leq \mathbf{L}_{\mathbf{N},\mathbf{K}}} \bigtimes_{j=1}^d \mathbb{R}^{2\vert n_j \vert+2}$ such that for every $\mathbf{x} \in \mathbb{R}^d$ it holds that
	\begin{equation*}
		(\mathcal{T}_{\mathbf{N},\mathbf{K}} g)(\mathbf{x}) = 2^q \cdot \Bra{\mathbf{0}}_{q+d} U^{\mathbf{L}_{\mathbf{N},\mathbf{K}}}_{\boldsymbol\theta,\boldsymbol\phi}(\mathbf{x}) \Ket{\mathbf{0}}_{q+d} := 2^q \cdot f^{\mathbf{L}_{\mathbf{N},\mathbf{K}}}_{\boldsymbol\theta,\boldsymbol\phi} (\mathbf{x}).
	\end{equation*}
	Finally, by using this, Proposition~\ref{propJacksonMultiK} (applied to $g \in C_{2\pi}(\mathbb{R}^d)$), and that $c\cdot \omega_{\partial_j^{K_j} g} = \omega_{\partial_j^{K_j} f}$, it follows that
	\begin{equation*}
		\begin{aligned}
			\sup_{\mathbf{x} \in \mathbb{R}^d} \left\vert f(\mathbf{x}) - c \cdot 2^q \cdot f^{\mathbf{L}_{\mathbf{N},\mathbf{K}}}_{\boldsymbol\theta,\boldsymbol\phi} (\mathbf{x}) \right\vert & = c \sup_{\mathbf{x} \in \mathbb{R}^d} \left\vert g(\mathbf{x}) - (\mathcal{T}_{\mathbf{N},\mathbf{K}} g)(\mathbf{x}) \right\vert \\
			& \leq c C_\mathbf{K} \sum_{j=1}^d \frac{\omega_{\partial_j^{K_j} g}\left( \frac{1}{N} \right)}{N_j^{K_j}} \\
			& = C_\mathbf{K} \sum_{j=1}^d \frac{\omega_{\partial_j^{K_j} f}\left( \frac{1}{N} \right)}{N_j^{K_j}},
		\end{aligned}
	\end{equation*}
	which completes the proof of the first part of the theorem. The second part follows directly from the second part of Proposition \ref{propJacksonMultiK}.
\end{proof}

\bibliographystyle{abbrv}
\bibliography{mybib}

\end{document}